\documentclass[journal,onecolumn,draftcls]{IEEEtran}

\usepackage[utf8]{inputenc} 
\usepackage[T1]{fontenc}
\usepackage{url}
\usepackage{ifthen}
\usepackage{cite}
\usepackage[cmex10]{amsmath} 
\usepackage{amssymb,mathrsfs,amsfonts,amsthm}
\usepackage{bbm}
\usepackage{algorithmic}
\usepackage{graphicx}
\usepackage{tabularx}
\usepackage{textcomp}
\usepackage{xcolor}
\usepackage{soul}
\usepackage{cleveref}
\usepackage{mathabx}
\usepackage{comment}
\usepackage{balance}
\usepackage{tikz}
\usepackage{mathtools}

\makeatletter%
\if@twocolumn%

\newcommand{\includeonetwocol}[2]{#2}
\def\twocolbreak{\nonumber\\ &}%
\def\twocolAlignMarker{&}%
\else

\newcommand{\includeonetwocol}[2]{#1}
\def\twocolbreak{}%
\def\twocolAlignMarker{}%
\fi%
\makeatother%

\newtheorem{theorem}{Theorem}
\newtheorem{lemma}{Lemma}

\newtheorem{proposition}{Proposition}

\newtheorem{corollary}{Corollary}

\def\BibTeX{{\rm B\kern-.05em{\sc i\kern-.025em b}\kern-.08em
    T\kern-.1667em\lower.7ex\hbox{E}\kern-.125emX}}
\allowdisplaybreaks

\begin{document}

\title{The Transmitter Trojan Channel: Achievable Rates and Error Exponents}

\author{
Maryam Farahnak-Ghazani and Aria Nosratinia%
\thanks{This work has been submitted to the IEEE for possible publication. Copyright may be transferred without notice, after which this version may no longer be accessible.}
\thanks{
Maryam Farahnak-Ghazani and Aria Nosratinia are with the
department of Electrical and Computer Engineering, University of Texas at Dallas, Richardson, TX, USA
(e-mail: maryam.farahnak@utdallas.edu; aria@utdallas.edu).
}
\thanks{This work was supported in part by the NSF grant 2148211.}
}

\maketitle
\begin{abstract}
Experiments have shown that modifications in communication hardware can enable unauthorized exfiltration of data from transmitters. We study the resulting Transmitter Trojan channel, in which a malicious encoder embedded within a legitimate transmitter perturbs the transmitted codeword on a symbol-by-symbol basis. The model reflects the limited footprint, delay, and computational complexity available to such hardware modifications. We derive achievable 
rogue message rate
and achievable rates for the legitimate message 
under the mismatch induced by the Trojan. We also analyze decoding error exponents for the legitimate and rogue receivers, including operational exponent lower bounds obtained through channel-resolvability transfer arguments. Finally, we study Trojan detection at the legitimate receiver under different levels of statistical knowledge, including likelihood-ratio detection when the ensemble-averaged induced channel is known and goodness-of-fit detection when only the nominal channel is known. The resulting detection error exponents, together with the achievable rate and decoding error exponents, quantify tradeoffs among 
rogue message rate, legitimate message rate, decoding reliability, and Trojan detectability.
We specialize the analysis to the binary symmetric channel and the additive white Gaussian noise channel, and illustrate the resulting tradeoffs numerically.
\end{abstract}

\begin{IEEEkeywords}
Transmitter Trojan Channel, broadcast channel, mismatched decoding, Error exponents, Channel resolvability.
\end{IEEEkeywords}

\section{Introduction}

The design and manufacturing process of integrated circuits is increasingly distributed across multiple entities, creating opportunities for malicious modifications in the hardware supply chain~\cite{xiao2016hardware}. In communication transmitters, such modifications can perturb the emitted signal and enable unauthorized data leakage while preserving the apparent operation of the legitimate link~\cite{DirtyConstellation,classen2015practical,liu2016silicon,subraman2019demonstrating,
subramani2020amplitude, dhole2024wireless}. These examples motivate an information-theoretic model for transmitter-side malicious modifications and their impact on leakage, reliability, and detectability.

We model a Transmitter Trojan as a malicious encoder embedded within an otherwise legitimate transmitter. In the uncontaminated system, the transmitter maps a legitimate message to a codeword, and the legitimate receiver is designed for the corresponding channel law. In the Trojan-infested system, the transmitted symbols are modified through a symbol-by-symbol mapping whose arguments come from the legitimate codeword and a Trojan codeword carrying a rogue message. The emitted sequence is then observed by both the legitimate receiver and a rogue receiver. 
The legitimate receiver continues to use the decoder designed for the uncontaminated system. The uncontaminated and Trojan-infested systems are shown in
Fig.~\ref{fig:1_Legitimate} and Fig.~\ref{fig:2_Trojan}, respectively.

\begin{figure}[t!]
    \centering
\includegraphics[scale=0.6]{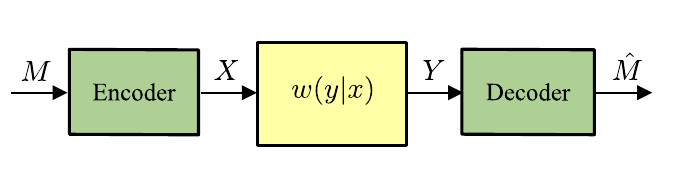}
    \vspace{-10pt}
    \caption{
    Uncontaminated communication system.} 
    \label{fig:1_Legitimate}
\end{figure}
\begin{figure}
    \centering
\includegraphics[scale=0.6]{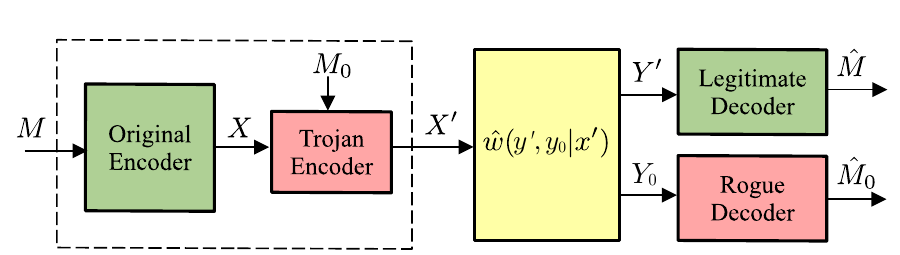}
    \caption{Trojan system model.} 
    \label{fig:2_Trojan}
    \vspace{-0.5em}
\end{figure}

This model differs from a standard broadcast channel and from standard
mismatched decoding in essential ways. A broadcast-channel transmitter is
typically designed jointly for multiple receivers. In the Transmitter Trojan
channel, the legitimate stream is fixed by an existing point-to-point
codebook, while the rogue stream is embedded through a constrained
symbol-by-symbol perturbation of the legitimate codeword. In standard
mismatched decoding, the channel-metric mismatch is typically exogenous to
the code construction. In the transmitter Trojan model, the mismatch is induced by the Trojan signaling
rule itself. Thus, the leakage rate to the rogue receiver, the achievable rate of the legitimate message, its decoding reliability at the legitimate receiver, and the statistical separation available for detecting the Trojan, are all influenced by the embedded perturbations of the Trojan transmitter.

The fixed-codebook nature of the Trojan perturbation adds a layer of complexity to the analysis. For the legitimate receiver, a fixed Trojan codebook induces a channel with memory; to obtain single-letter mismatched decoding rates, we use  additional randomization and channel resolvability. For reliability exponents, the corresponding resolvability approximation must be controlled with an explicitly bounded exponential rate, so that the approximation error does not dominate the decoding error exponent.
At the rogue receiver, the exponent behavior depends on whether the legitimate codeword is decodable, whether it is resolvable as interference, or whether neither property is available. Detection also depends on the receiver's statistical knowledge: likelihood-ratio tests are natural when the ensemble-averaged induced channel is known, while goodness-of-fit tests against the nominal channel are used when no statistical information about the Trojan-induced channel is available. Detectability is analyzed as an induced performance tradeoff, rather than imposed as an undetectability constraint on the Trojan code, reflecting the transmitter Trojan mechanisms reported in the literature. Accordingly, the model does not impose covert-communication or secrecy constraints, which would correspond to distinct operational requirements beyond the transmitter-Trojan model considered here.

To summarize, the main contributions of this paper are as follows:
\begin{itemize}
\item
We introduce the Transmitter Trojan channel as a model for symbol-by-symbol malicious perturbations embedded in a legitimate transmitter. The model induces a broadcast-like communication problem with one-sided decoder mismatch: the legitimate receiver continues to use the metric designed for the uncontaminated channel, while the rogue receiver is matched to the Trojan channel and the Trojan encoder.

\item
We derive achievable rogue message rates
and achievable rates for the legitimate message under the Trojan-induced mismatch. The rate analysis combines a Marton construction with mismatched-decoding tools, including generalized mutual information (GMI) and LM rates. Channel resolvability is used to connect fixed-codebook Trojan perturbations to ensemble-averaged induced single-letter channels.

\item
We provide a genie-aided converse that gives a rogue leakage rate outer bound and is tight when the rogue receiver can decode the legitimate transmission. 

\item
We develop reliability-exponent bounds for the legitimate and rogue receivers. For the legitimate receiver, memoryless mismatched exponents are transferred to operational fixed-codebook lower bounds through an exponential resolvability argument. For the rogue receiver, we distinguish decodable, resolvable, and intermediate regimes, yielding operational exponents in the first two regimes and averaged-interference benchmarks in the intermediate regime.

\item
We analyze Trojan detection at the legitimate receiver under different statistical-knowledge assumptions. When the ensemble-averaged induced channel is known, we derive likelihood-ratio detection exponents. When only the nominal channel is known, we analyze detector-specific goodness-of-fit exponents.

\item
We specialize the rate, reliability exponent, and detection exponent expressions to the binary symmetric channel (BSC) and additive white Gaussian noise (AWGN) channel models and provide numerical results illustrating the tradeoffs among rogue message rate, legitimate message rate, decoding reliability, and Trojan detectability.
\end{itemize}

\textit{Notations:} All logarithms have base $e$ in this paper unless otherwise specified. Random variables are represented by uppercase letters, and their realizations by lowercase letters.

\section{The Transmitter Trojan Model}
\label{sec:sys_model}

We start with a baseline uncontaminated system, shown in
Fig.~\ref{fig:1_Legitimate}. 
The uncontaminated channel input is denoted $X$ and the output is denoted $Y$. The legitimate message is
$M\in[1:e^{nR}]$ and the legitimate codebook is ${\mathcal C}$, which is generated i.i.d. according to $p_X(x)$. The channel to the legitimate receiver has
transition law $w(y|x)$, and the legitimate decoder 
produces the estimate $\hat M$.

The Trojan channel model consists of a malicious modification that perturbs $X^n$ prior to transmission, as well as the introduction of a {\em rogue receiver} (see Fig.~\ref{fig:2_Trojan}).
The {\em rogue message} is denoted $M_0 \in [1,e^{nR_0}]$. 
The {\em Trojan encoder} perturbs the legitimate codeword $X^n$ based on the rogue message $M_0$, producing the
emitted sequence $X'^n$. 
The physical channel is modeled as a memoryless broadcast channel from the
emitted symbol $X'$ to the legitimate and rogue observations $Y'$ and $Y_0$,
respectively, with transition law $\widehat w(y',y_0|x')$.
We assume that the physical channel from the emitted symbol to the legitimate
receiver is unaffected by the Trojan modification or by the presence of the
rogue receiver. Accordingly, the legitimate marginal of the broadcast channel
satisfies
\begin{align}
\widehat w_{Y'|X'}(y'|x')=w(y'|x').
\end{align}
The rogue marginal is denoted by
\begin{align}
w_0(y_0|x')\coloneqq \widehat w_{Y_0|X'}(y_0|x').
\end{align}

\section{Coding Strategy \& Achievable Rates} \label{sec:Achievable_Rates}

Let  $\mathsf C_0=\{V^n(m_0)\}_{m_0=1}^{e^{nR_0}}$ denote a random {\em Trojan codebook}, where the codewords are mutually independent and each has distribution $\prod_{i=1}^n p_V(v_i)$. A realization of $\mathsf C_0$ is denoted by $\mathcal C_0$. For rogue message $M_0$, the Trojan encoder combines $v^n(M_0)$ with the legitimate codeword $X^n$ in a symbol-by-symbol manner to generate the emitted symbols:
\begin{align}
X'_i=h(X_i,v_i(M_0)), \qquad i=1,\ldots,n.
\label{eq:TrojanEncoder2}
\end{align}
This coding strategy is consistent with the Trojans reported in the physical layer of communication systems and their practical constraints~\cite{liu2016silicon,subraman2019demonstrating,subramani2020amplitude,dhole2024wireless}.

\begin{figure}[t!]
    \centering
\includegraphics[scale=0.75]{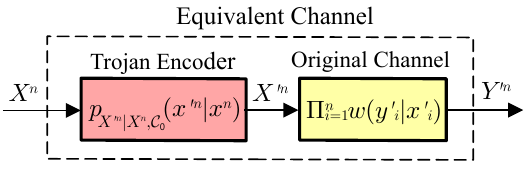}
    \caption{Equivalent channel to the legitimate receiver
    induced by the Trojan perturbation.}
    \label{fig:3_degraded}
    \vspace{-1em}
\end{figure}

To calculate achievable rates for the legitimate message and rogue message under the Trojan model,
we utilize the independent version of Marton's inner bound~\cite{marton1979coding}, originally introduced by Cover \cite{cover2003achievable} and van der Meulen \cite{van2003random}, in part because it includes a symbol-by-symbol mapper in its construction that is particularly applicable to the natural constraints in a Trojan model. 
For an \emph{i.i.d.} broadcast channel with outputs $Y',Y_0$, input $X'$, and Marton auxiliary variables $V$ and $X$, this inner bound is described by:
\begin{align}
R &\le I(X;Y'), \label{eq:MartonRate1}\\
R_0 &\le I(V;Y_0), \label{eq:MartonRate2}
\end{align}
for some distribution $p_X(x)p_V(v)$ and $X'=h(X,V)$ for some $h(\cdot,\cdot)$.

The legitimate decoder uses the metric designed for the uncontaminated channel $w(y'|x)$, whereas the actual conditional law from $X^n$ to $Y'^n$ is altered by the Trojan encoder. This places the legitimate receiver within the framework of mismatched decoding. The principal additional difficulty is that, for a fixed Trojan codebook $\mathcal C_0$, the induced channel ${\widecheck u}_{\mathcal C_0}(y'^n|x^n)$ generally has memory. We use the LM rate, originally developed for discrete memoryless channels using constant-composition random coding~\cite{hui1983fundamental,csiszar1981graph} and later extended to continuous-alphabet channels using cost-constrained random coding~\cite{ganti2000mismatched}. We also use the GMI obtained through i.i.d. random coding~\cite{merhav1994information}, which extends directly to continuous-alphabet channels~\cite{scarlett2020information}. These rates can also be generalized to channels with memory as shown in~\cite{ganti2000mismatched}. The corresponding mismatched-rate expressions for the Trojan model are given in Appendix~\ref{Appendix-MismatchedRates}.

We then modify the Marton rate~\eqref{eq:MartonRate1} to account for the mismatch in the manner stated above, arriving at the achievable rate of the Trojan-infested system. 
Assuming that $M_0$ is uniformly distributed, a fixed realization $\mathcal C_0$ induces, for each legitimate codeword $x^n$, the conditional distribution
\begin{align}
p_{X'^n|X^n,{\mathcal C}_0}(x'^n|x^n)=\frac{1}{e^{nR_0}}\sum_{v^n \in \mathcal{C}_0}\prod_{i=1}^{n}\mathbbm{1}_{\{x'_i=h(x_i,v_i)\}}.    
\end{align}
Thus characterized, the equivalent channel from $X^n$ to $Y'^n$ will obey:
\begin{align}\nonumber
{\widecheck u}_{{\mathcal C}_0}(y'^n|x^n)&
=\sum_{x'^n}p_{X'^n|X^n,{\mathcal C}_0}(x'^n|x^n)\prod_{i=1}^{n}w(y'_i|x'_i)\\\nonumber
&=\sum_{x'^n}\sum_{v^n\in \mathcal{C}_0}\frac{1}{e^{nR_0}}\prod_{i=1}^{n}[w(y'_i|x'_i)\mathbbm{1}_{\{x'_i=h(x_i,v_i)\}}]
\\\label{dist_u_memory}
&=\frac{1}{e^{nR_0}}\sum_{v^n \in \mathcal{C}_0}\prod_{i=1}^{n}w(y'_i|h(x_i,v_i)).
\end{align}

\begin{lemma}[Fixed-codebook achievable rate]
\label{lemma_Marton}
The Trojan-infested system characterized by the channel law $\widehat w(y',y_0|x')$ supports the rogue message rate:
\begin{align}
R_0 &\le I(V;Y_0),
\end{align}
where $x'=h(x,v)$ for some $h(\cdot,\cdot)$, under the distribution $p_X^*(x)p_V(v)$, where $p_X^*$ is the distribution of the uncontaminated codebook. With the Trojan, the legitimate receiver has the achievable rate:
\begin{align}
R &\le \lim_{n \rightarrow \infty}\frac{1}{n}I_{\textrm{M},\mathcal{C}_0}(X^n;Y'^n),\label{eq:MartonRate1-mismatch1}
\end{align}
where $I_{\textrm{M},\mathcal{C}_0}$ stands for the mismatched rate for a decoder designed for an i.i.d. channel $\prod_{i=1}^{n}w(y'_i|x_i)$ but operating under the channel ${\widecheck u}_{{\mathcal C}_0}(y'^n|x^n)$ shown in Eq.~\eqref{dist_u_memory}.
\end{lemma}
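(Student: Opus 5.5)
The proof has two parts: the rogue receiver and the legitimate receiver. They are handled separately, since the Trojan encoder interacts with each receiver differently.

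\emph{Rogue rate.} The plan is to view the rogue link as a point-to-point channel from $V$ to $Y_0$, with the legitimate codeword acting as i.i.d. interference. Averaged over the legitimate codebook $\mathsf C$, which is drawn i.i.d. from $p_X^*$ and independently of $\mathsf C_0$, each coordinate of $X^n$ (for a uniformly chosen legitimate message) is i.i.d. $p_X^*$ and independent of $V^n(M_0)$. Hence the channel from $v^n$ to $y_0^n$ is memoryless:
\begin{align}
q(y_0|v)=\sum_x p_X^*(x)\,w_0(y_0|h(x,v)).
\end{align}
The rogue decoder uses joint typicality decoding with respect to $p_V(v)q(y_0|v)$. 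The standard random-coding argument then shows that the error probability, averaged over $\mathsf C_0$ and $\mathsf C$, vanishes whenever $R_0<I(V;Y_0)$. This is precisely the independent (Cover--van der Meulen) version of the Marton bound \eqref{eq:MartonRate2}, in which $X$ is simply not decoded. A usual expurgation and derandomization step then yields a fixed pair of codebooks. The legitimate codebook can be kept as the given one, because the average over $\mathsf C$ is only needed to make the interference i.i.d.; if a fixed $\mathcal C$ is required, a standard averaging argument over legitimate messages applies.

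\emph{Legitimate rate.} Fix a realization $\mathcal C_0$. From the legitimate receiver's viewpoint, $M_0$ is uniform and independent of $M$, so the effective channel from $x^n$ to $y'^n$ is exactly ${\widecheck u}_{\mathcal C_0}$ in \eqref{dist_u_memory}. This is a general, non-memoryless channel. The decoder is fixed and uses the metric $\prod_i w(y'_i|x_i)$. I will therefore invoke the general mismatched-decoding result for channels with memory (Ganti--Lapidoth--Telatar~\cite{ganti2000mismatched}, in its GMI/LM form as recorded in Appendix~\ref{Appendix-MismatchedRates}). That result states that with random coding from $p_X^{*}$ (i.i.d. or cost-constrained), every rate below the limit of the $n$-letter mismatched quantity $\frac1n I_{\textrm M,\mathcal C_0}(X^n;Y'^n)$ is achievable, with the error probability averaged over the codebook ensemble vanishing. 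Since the uncontaminated codebook was itself generated from $p_X^*$, the legitimate codebook is a sample from exactly this ensemble. This gives \eqref{eq:MartonRate1-mismatch1}.

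\emph{Main obstacle.} The main obstacle is the existence and interpretation of the limit in \eqref{eq:MartonRate1-mismatch1}. Because $\mathcal C_0$ is a sequence of codebooks indexed by $n$, it is not obvious that $\frac1n I_{\textrm M,\mathcal C_0}$ converges. I would first read the statement with a $\liminf$, which is what the information-spectrum style argument of \cite{ganti2000mismatched} naturally provides: it bounds the error probability by the probability that the normalized metric difference exceeds the rate threshold, plus a vanishing term. Existence of the limit can be addressed later through the resolvability argument that replaces ${\widecheck u}_{\mathcal C_0}$ by its single-letter average. A secondary point is that the rogue and legitimate claims both rely on random legitimate codebooks, so a single good pair must exist. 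This follows by a union bound on the two ensemble-average error probabilities and Markov's inequality.
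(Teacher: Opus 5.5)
Your proposal follows the paper's proof in Appendix~\ref{Appendix_Marton_Mismatch1}. The rogue rate is the independent (Cover--van der Meulen) version of Marton's bound, with the legitimate codeword treated as i.i.d.\ interference. The legitimate rate comes from applying the Ganti--Lapidoth--Telatar multi-letter GMI/LM formulas \eqref{eq_GMI_memory} and \eqref{eq_LM_memory} \cite{ganti2000mismatched} to the fixed-codebook channel $\widecheck u_{\mathcal C_0}$.

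Your remarks on the existence of the limit and on selecting a single good codebook pair are more careful than the paper's. One point needs correcting, though: an information-spectrum bound does not directly give $\liminf_{n}\frac1n I_{\textrm{M},\mathcal C_0}(X^n;Y'^n)$. It gives the $\liminf$ in probability of the normalized metric ratio. The two coincide only if that ratio concentrates around its mean, which holds here when most Trojan codewords have type close to $p_V$. So that step needs a line of justification rather than being automatic.
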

\begin{proof}
Please see Appendix~\ref{Appendix_Marton_Mismatch1}.
\end{proof}

The main usefulness of this result is to characterize the achievable rate $R_0$ for the rogue receiver, which highlights the severity of the information leakage. The rate $R$ for the legitimate receiver highlights the rate loss for the legitimate receiver compared with the uncontaminated case, for which in the most general scenario, Lemma~\ref{lemma_Marton} only provides a multi-letter characterization. 
To obtain a single-letter legitimate message rate, the fixed-codebook channel
${\widecheck u}_{{\mathcal C}_0}(y'^n|x^n)$ in~\eqref{dist_u_memory} must be approximated with a memoryless
conditional distribution. 
We do this by introducing an additional randomization index into the 
Trojan codebook.

\begin{theorem}[Single-letter achievable rates under channel resolvability]
\label{thm-singleletter-resolvability}
Consider the Trojan-infested system with finite alphabets and channel law
$\widehat w(y',y_0|x')$. Fix an input distribution $p_X^*(x)$, a Trojan-codeword
distribution $p_V(v)$, and a symbol-by-symbol mapper $x'=h(x,v)$. 
Define
the ensemble-averaged induced channel
\begin{align}\label{eq:induced_legitimate_channel}
u(y'|x)
\coloneqq
\sum_v p_V(v) w(y'|h(x,v)).
\end{align}
Assume that the legitimate decoder uses the metric designed for the uncontaminated
channel $w(y'|x)$, and that this metric is strictly positive on its support.

Suppose the Trojan encoder uses an additional randomization index
of rate $R'$, also known
to the rogue receiver, and that
\begin{align}
R_0+R' > I(V;Y'|X)
\end{align}
under the joint distribution
\begin{align}
p_X^*(x)p_V(v)\mathbbm{1}_{\{x'=h(x,v)\}}\widehat w(y',y_0|x').
\end{align}
Then, for every $\epsilon>0$, there exists a sequence of 
Trojan codebooks such that the legitimate receiver supports every rate
\begin{align}
R < I_{\rm M}(X;Y')-\epsilon,
\end{align}
where $I_{\rm M}(X;Y')$ denotes either the GMI or LM mismatched achievable
rate evaluated under the 
ensemble-averaged induced channel $u(y'|x)$ and decoding
metric $w(y'|x)$.

The rogue receiver supports every rate
\begin{align}
R_0 < I(V;Y_0)
\end{align}
for the independent-decoding construction.
\end{theorem}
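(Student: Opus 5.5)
The proof splits into the rogue-receiver part and the legitimate-receiver part. The second is where the randomization index matters.

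\emph{Construction.} Generate the enlarged Trojan codebook $\{V^n(m_0,\ell)\}$, with $m_0\in[1:e^{nR_0}]$ and $\ell\in[1:e^{nR'}]$, i.i.d.\ according to $\prod_i p_V(v_i)$. For each block the Trojan draws $\ell$ uniformly and emits $x'_i=h(x_i,v_i(m_0,\ell))$.

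\emph{Rogue receiver.} Since $V^n$ is independent of the legitimate codeword $X^n$, the channel from $V^n$ to $Y_0^n$ is memoryless, with law $\sum_x p_X^*(x)w_0(y_0|h(x,v))$ averaged over the i.i.d.\ legitimate codebook ensemble. This is exactly the independent-decoding branch of the Marton bound already used in Lemma~\ref{lemma_Marton}. The rogue receiver knows $\ell$, so it only has to distinguish the $e^{nR_0}$ rogue messages, and joint typicality decoding gives $R_0<I(V;Y_0)$ by the standard packing argument.

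\emph{Legitimate receiver.} Here the fixed-codebook channel is
\[
\widecheck u_{\mathcal C_0}(y'^n|x^n)=e^{-n(R_0+R')}\sum_{m_0,\ell}\prod_i w(y'_i|h(x_i,v_i(m_0,\ell))).
\]
This is a channel-resolvability output for the "channel" $V\to Y'$ given side information $x^n$. I would invoke the conditional soft-covering lemma: if $R_0+R'>I(V;Y'|X)$, then
\[
\mathbb E_{\mathsf C_0}\Big[\big\|\widecheck u_{\mathsf C_0}(\cdot|X^n)-\textstyle\prod_i u(\cdot|X_i)\big\|_{TV}\Big]\to0
\]
exponentially, with $X^n$ drawn i.i.d.\ $p_X^*$. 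The finite-alphabet assumption gives the exponential rate via the standard Cuff/Hayashi bounds. Markov's inequality then yields a sequence of realizations $\mathcal C_0$ for which the averaged total-variation distance over $X^n$, and hence over the legitimate random codebook, vanishes.

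Next I take the mismatched-decoding achievability for the memoryless channel $u$ with metric $w$, namely GMI by i.i.d.\ coding or LM by constant-composition/cost-constrained coding (Appendix~\ref{Appendix-MismatchedRates}). For any $R<I_{\rm M}(X;Y')-\epsilon$, the ensemble-average error probability under $\prod u$ tends to zero. The error event depends only on $(X^n(M),Y'^n)$ and the other codewords. Since the other codewords are independent and identical in both settings, the error probability under $\widecheck u_{\mathcal C_0}$ differs from that under $\prod u$ by at most the expected total-variation distance above. Positivity of the metric guarantees the decoder is well defined on all outputs. Derandomizing the legitimate codebook is unnecessary, since it is fixed a priori as $\mathcal C$; the statement concerns Trojan codebooks.

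\emph{Main obstacle.} The subtle step is simultaneity: a single Trojan codebook must be good both for the rogue packing and for resolvability. Both are ensemble statements with vanishing expectations, so I would sum the two expected error quantities and apply Markov's inequality once to extract one good $\mathcal C_0$ per $n$. For the LM rate, a further check is that the constant-composition input distribution still satisfies the conditional soft-covering bound. I would handle this by bounding the total-variation distance uniformly over typical $x^n$, rather than only on average over i.i.d.\ $X^n$.
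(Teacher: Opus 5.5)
Your proposal is correct, but it reaches the single-letter rate by a different route than the paper.

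\textbf{The paper's route.} It proves conditional resolvability for each fixed typical $x^n$ with a doubly exponential failure probability (Hoeffding on the atypical part, a resolvability concentration bound on the typical part). It then union-bounds over all typical $x^n$ to get one Trojan codebook that approximates $\prod_i u(\cdot|x_i)$ uniformly. Finally, it passes through the multi-letter GMI/LM functionals of Lemma~\ref{lemma_Marton} for the channel with memory, using a stability lemma. With $s$ and $a(\cdot)$ truncated, finite alphabets and a strictly positive metric bound the log-metric by $O(n)$. So a TV perturbation $\delta_n$ moves the normalized functional by $O(\delta_n)$, and the functional single-letterizes for the product channel.

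\textbf{Your route.} You use soft covering in expectation over $X^n\sim\prod_i p_X^*(x_i)$. This is ordinary soft covering for the virtual channel $V\to(X,Y')$, since independence of $V$ and $X$ gives $I(V;X,Y')=I(V;Y'|X)$. You pick $\mathcal C_0$ by Markov's inequality. You then transfer the ensemble-average error probability of the fixed mismatched decoder directly: once the other codewords are averaged out, it is an expectation over $(X^n,Y'^n)$ of a $[0,1]$-valued function.

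\textbf{What each buys.} Your route is more elementary:
\begin{itemize}
\item it bypasses the multi-letter mismatched-rate formula for channels with memory and the truncation argument;
\item it does not need metric positivity for the transfer step;
\item it is the same event-probability transfer the paper itself uses for exponents in Proposition~\ref{proposition_exponent_transfer};
\item it makes explicit the joint choice of $\mathcal C_0$ for the rogue and resolvability requirements, which the paper leaves implicit.
\end{itemize}
The paper's route buys a guarantee uniform over typical codewords with an explicit exponential rate. That is what constant-composition (LM) codebooks and the later $E_{\rm res}$ transfer require. Your plan to switch to a uniform bound for LM is right. Alternatively, the factor $(n+1)^{|\mathcal X|}$ relating the type-class distribution to the i.i.d.\ one is absorbed by your exponential bound.

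\textbf{One wording caveat.} What you actually show is that the legitimate (and rogue) error is small on average, hence with high probability, over the legitimate codebook ensemble. It is not shown for an arbitrary a priori fixed $\mathcal C$. That is also the sense in which the paper's statement holds, so your remark that derandomizing $\mathcal C$ is unnecessary should be phrased that way.
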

\begin{proof}
Please see Appendix~\ref{Appendix_Marton_Mismatch2}.
\end{proof}

\begin{corollary}[Previous messages as resolvability randomization]
\label{corollary_previous_messages}
If the additional randomization index is supplied by $K$ previously decoded
rogue messages, then $R'=KR_0$ after initialization and with $K$-block delay.
In this case the resolvability condition becomes
\begin{align}
(K+1)R_0 > I(V;Y'|X).
\end{align}
Together with $R_0<I(V;Y_0)$.
Equivalently, after an arbitrarily small rate backoff, it is sufficient that
\begin{align}
I(V;Y'|X) < (K+1)I(V;Y_0).
\end{align}
\end{corollary}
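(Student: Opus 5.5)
The corollary follows from Theorem~\ref{thm-singleletter-resolvability} once we exhibit a randomization index with the required properties. The plan is to supply that index from the Trojan's own past rogue messages and to check the theorem's hypotheses block by block. We use block-Markov operation over $B$ blocks of length $n$. In block $b$, the Trojan codeword is indexed by the pair $(m_{0,b}, \ell_b)$, where $\ell_b=(m_{0,b-1},\ldots,m_{0,b-K})$. This index set has size $e^{n(R_0+KR_0)}$, so the effective randomization rate is $R'=KR_0$. For the first $K$ blocks we initialize $\ell_b$ with fixed, publicly known dummy messages; this costs a rate fraction $K/B$, which vanishes as $B\to\infty$. This is the sense in which the statement holds ``after initialization.''

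The first step is to verify the two requirements on the index used in Theorem~\ref{thm-singleletter-resolvability}: it must be uniform and independent of the current message, and it must be known to the rogue receiver.

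\emph{Independence and uniformity.} The messages $m_{0,b}$ are i.i.d. uniform across blocks. Hence $\ell_b$ is uniform on $[1:e^{nKR_0}]$ and independent of $m_{0,b}$ and of the legitimate codeword in block $b$. This is exactly the randomization the resolvability argument of Appendix~\ref{Appendix_Marton_Mismatch2} uses, so that proof goes through unchanged with $R'=KR_0$.

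\emph{Knowledge at the rogue receiver.} The rogue receiver decodes sequentially. With a $K$-block delay it has already decoded $m_{0,b-1},\ldots,m_{0,b-K}$, so it knows $\ell_b$ up to those past decoding errors. The probability that an error occurs in any block is bounded by a union bound over the $B$ blocks. Each per-block error probability vanishes whenever $R_0<I(V;Y_0)$, by the rogue part of Theorem~\ref{thm-singleletter-resolvability} applied with $\ell_b$ treated as known. Since $B$ is fixed as $n\to\infty$, the total error probability vanishes.

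The main obstacle is error propagation together with the legitimate receiver's side. On the rogue side, a wrong $\hat\ell_b$ could corrupt later blocks, but the union bound handles this because we condition on the event that all previous blocks were decoded correctly. On the legitimate side, the receiver does not know $\ell_b$, so in each block it sees the channel averaged over $e^{n(R_0+KR_0)}$ codewords. Applying the resolvability lemma with total rate $(K+1)R_0>I(V;Y'|X)$ therefore gives the single-letter mismatched rate $I_{\rm M}(X;Y')$ under $u$ in every block. The remaining technical point is that the codebook is reused across blocks with dependent indices. Block-wise resolvability only needs the marginal index in each block to be uniform, and mismatched decoding is performed per block, so this dependence should not matter. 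Finally, substituting $R'=KR_0$ gives the condition $(K+1)R_0>I(V;Y'|X)$. Choosing $R_0=I(V;Y_0)-\delta$, both constraints can be met simultaneously for small $\delta$ if and only if $I(V;Y'|X)<(K+1)I(V;Y_0)$, which is the equivalent form stated in the corollary.
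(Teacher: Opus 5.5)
Your proposal is correct and takes essentially the paper's approach: the paper gives no separate proof of the corollary and simply substitutes $R'=KR_0$ into the condition $R_0+R'>I(V;Y'|X)$ of Theorem~\ref{thm-singleletter-resolvability}, then intersects it with $R_0<I(V;Y_0)$; your block-Markov details (uniformity of $\ell_b$ for $b>K$, per-block mismatched decoding, the union bound over blocks at the rogue receiver, and the nonempty-interval argument for the backoff form) only make explicit what ``after initialization and with $K$-block delay'' means. One small correction: in a block $b\le K$ the index carries only $bR_0$ of fresh randomness, so initialization may cost the legitimate link its resolvability guarantee in those blocks (the legitimate transmitter still sends real messages there) rather than a rate fraction $K/B$, and this is exactly what the paper's ``after initialization'' proviso sets aside.
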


Under this condition, the legitimate
receiver can be analyzed using the 
ensemble-averaged induced channel $u(y'|x)$ in \eqref{eq:induced_legitimate_channel},
while the rogue receiver continues to decode the rogue message through the corresponding random-coding construction.

When the rogue receiver is less noisy than the legitimate receiver, it can
decode the legitimate codeword and then decode the rogue message over the
conditional channel from $V$ to $Y_0$ given $X$. This gives the following
refinement of Theorem~\ref{thm-singleletter-resolvability}.

\begin{proposition}[Less-noisy rogue receiver]
\label{Proposition_superposition}
Consider the Trojan-infested system with broadcast channel law
$\widehat w(y',y_0|x')$. Suppose the rogue receiver is less noisy than the legitimate
receiver, i.e., for every $p(u,x')$,
\begin{align}
I(U;Y')\le I(U;Y_0).
\end{align}
Fix $p_X^*(x)$, $p_V(v)$, and $x'=h(x,v)$, and assume the randomization
condition in Theorem~\ref{thm-singleletter-resolvability} is satisfied. Then, for every
$\epsilon>0$, all rates satisfying
\begin{align}
R_0 &< I(V;Y_0|X)-\epsilon,\\
R &< I_{\rm M}(X;Y')-\epsilon
\end{align}
are achievable, where $I_{\rm M}(X;Y')$ is the GMI or LM mismatched
achievable rate under the ensemble-averaged induced channel $u(y'|x)$ and decoding metric
$w(y'|x)$.
\end{proposition}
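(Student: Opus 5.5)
The plan is to reuse the construction of Theorem~\ref{thm-singleletter-resolvability} unchanged on the transmitter side: the legitimate codebook is i.i.d.\ $p_X^*$, the Trojan codebook is indexed by $(m_0,\ell)$ with $\ell\in[1:e^{nR'}]$, codewords are i.i.d.\ $p_V$, and emission is $x'_i=h(x_i,v_i(m_0,\ell))$. Only the rogue decoder changes, to a successive (superposition-style) decoder. The legitimate-receiver statement $R<I_{\rm M}(X;Y')-\epsilon$ then follows verbatim from Theorem~\ref{thm-singleletter-resolvability}. Since $R_0+R'>I(V;Y'|X)$, the resolvability argument makes the fixed-codebook channel ${\widecheck u}_{\mathcal C_0}$ close in total variation to $\prod_i u(y'_i|x_i)$. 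The GMI/LM analysis under $u$ then carries over with vanishing penalty, and it involves nothing from the rogue side.

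For the rogue receiver, decoding proceeds in two stages. \emph{Stage 1:} the rogue receiver decodes the legitimate message $M$ from $Y_0^n$ by joint typicality with the legitimate codebook, treating the Trojan codeword as i.i.d.\ noise. Averaged over the Trojan ensemble, the effective channel from $X$ to $Y_0$ is $\sum_v p_V(v)w_0(y_0|h(x,v))$, so standard random coding succeeds if $R<I(X;Y_0)$. We need this to hold for every $R<I_{\rm M}(X;Y')$, which we get from the chain
\begin{align}
I_{\rm M}(X;Y')\le I(X;Y')\le I(X;Y_0).
\end{align}
The first inequality holds because the GMI and LM rates never exceed the matched mutual information under $u$. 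The second applies the less-noisy hypothesis with $U=X$ and $p(u,x')=p_X^*(x)\sum_v p_V(v)\mathbbm 1_{\{x'=h(x,v)\}}$. \emph{Stage 2:} knowing $x^n$ and the randomization index $\ell$ (assumed available to the rogue receiver as in Theorem~\ref{thm-singleletter-resolvability}), the rogue receiver decodes $m_0$ over the memoryless channel $p(y_0|x,v)=w_0(y_0|h(x,v))$. The codewords $V^n(m_0,\ell)$ are i.i.d.\ and independent of $X^n$, so a conditional joint-typicality decoder succeeds whenever $R_0<I(V;Y_0|X)-\epsilon$. The total error is bounded by the union of the stage-1 error, the stage-2 error given correct $\hat x^n$, and atypicality events.

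The main obstacle is stage 1. The stage-1 error must be small for a \emph{fixed} Trojan codebook, and the same codebook sequence must also satisfy the resolvability property used for the legitimate receiver. The Trojan codeword is a fixed codebook element, not i.i.d.\ noise, so averaging only over the Trojan ensemble is not enough. I would first bound the error probability averaged jointly over both codebooks and the messages $(M,M_0,\ell)$. Here $V^n$ is i.i.d.\ $p_V$ and independent of $X^n$, so the pair $(X^n,Y_0^n)$ is drawn from the memoryless joint law and the usual packing lemma applies. Then I would use Markov's inequality: a fraction exceeding $2/3$ of the Trojan codebooks satisfies each of three properties, namely small resolvability TV distance, small rogue stage-1 error, and small stage-2 error. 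Hence some codebook sequence satisfies all three simultaneously, and a vanishing $\epsilon$ backoff absorbs the remaining slack. The single-letter inequality $I(X;Y')\le I(X;Y_0)$ from the less-noisy assumption then completes the argument.
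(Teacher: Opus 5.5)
Your proof is correct. Its decisive step is the same as the paper's: the chain $I_{\rm M}(X;Y')\le I(X;Y')\le I(X;Y_0)$, which follows from the divergence form of the GMI/LM rates and from the less-noisy hypothesis applied with $U=X$ (admissible because $X\to X'\to(Y',Y_0)$).

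Where you differ is the rogue decoder. The paper has the rogue receiver decode $(m,m_0)$ jointly. Its packing bounds give $R_0<I(V;Y_0|X)$ and $R+R_0<I(X,V;Y_0)$, and the chain is used only afterwards, to show the sum-rate constraint is redundant via $R+R_0<I(X;Y_0)+I(V;Y_0|X)=I(X,V;Y_0)$. You decode successively, so the same chain directly certifies stage 1 ($R<I(X;Y_0)$ with $V^n$ treated as noise), and stage 2 yields $R_0<I(V;Y_0|X)$.

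Under the less-noisy hypothesis the two regions coincide. Your decoder is simpler and needs no redundancy step. The paper's joint decoder is more robust outside this regime: it still allows a reduced $R_0$ when $I(X;Y_0)<R<I(X,V;Y_0)$, where your stage 1 would fail.

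Your proposal also makes explicit two things the paper leaves implicit. First, you carry the randomization index $\ell$ into stage 2, whereas the paper's error events simply write $V^n(m_0)$. Second, you select a single codebook realization that satisfies resolvability and rogue decodability simultaneously. One correction to that selection step: your stage-1 bound is averaged over both codebooks, so Markov's inequality gives a good pair $(\mathcal C,\mathcal C_0)$, not a good fraction of Trojan codebooks for an arbitrary fixed $\mathcal C$. You should also list the legitimate decoding error as a fourth property, with each property holding on a fraction exceeding $3/4$. This is consistent with the paper's model, where the legitimate codebook is itself randomly generated.
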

\begin{proof}
Please see Appendix~\ref{Appendix_less_noisy_rogue}.
\end{proof}

\begin{proposition}[Genie-aided outer bound]
\label{prop:genie_outer_bound}
Consider any sequence of length-$n$ codes satisfying the symbol-wise Trojan
model
\begin{align}
X'_i=h(X_i,V_i),
\end{align}
over a memoryless broadcast channel $\widehat w(y',y_0|x')$. Suppose $M$ and $M_0$  are decoded by legitimate and rogue decoders, respectively, with
vanishing error probabilities. Then every achievable rate pair $(R,R_0)$
satisfies
\begin{align}
R
&\le I(X';Y'|Q),\label{eq:legitimate-rate}\\
R_0
&\le I(V;Y_0|X,Q),\label{eq:rogue-rate}\\
R+R_0
&\le I(X,V;Y',Y_0|Q),\label{eq:sumrate}
\end{align}
for some distribution of the form
\begin{align}
p(q)p(x,v|q)\mathbbm{1}_{\{x'=h(x,v)\}}\widehat w(y',y_0|x').
\end{align}
Here $Q$ is a time-sharing random variable independent of all messages. If
the induced code distribution is stationary and memoryless, $Q$ may be
omitted.
\end{proposition}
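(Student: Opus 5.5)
The plan is a Fano-based converse with a genie that hands $M$ (equivalently $X^n$) to the rogue receiver, followed by standard single-letterization through a uniform time-sharing index $Q$. Let $P_e^{(n)}\to0$ be the maximum of the two decoding error probabilities. Fano's inequality gives
\begin{align}
H(M|Y'^n)\le n\epsilon_n, \qquad H(M_0|Y_0^n)\le n\epsilon_n, \qquad \epsilon_n\to0 .
\end{align}
Throughout, $M$ and $M_0$ are independent and uniform. The legitimate codeword $X^n$ is a function of $M$, and the Trojan sequence $V^n$ is a function of $M_0$, and possibly of the randomization index. The emitted symbol is $X'_i=h(X_i,V_i)$, and the channel is memoryless from $X'^n$ to $(Y'^n,Y_0^n)$.

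For the legitimate rate \eqref{eq:legitimate-rate}, write $nR\le I(M;Y'^n)+n\epsilon_n\le I(X'^n;Y'^n)+n\epsilon_n$ using the Markov chain $M\to X'^n\to Y'^n$. Memorylessness gives $I(X'^n;Y'^n)\le\sum_i I(X'_i;Y'_i)$. For the rogue rate \eqref{eq:rogue-rate}, I would apply the genie:
\begin{align}
nR_0\le I(M_0;Y_0^n)+n\epsilon_n\le I(M_0;Y_0^n,M)+n\epsilon_n=I(M_0;Y_0^n|M)+n\epsilon_n .
\end{align}
The last equality uses independence of $M$ and $M_0$. Next, $I(M_0;Y_0^n|M)\le\sum_i I(M_0,M,Y_0^{i-1};Y_{0i}|M)$. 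Conditioning on $M$ fixes $X^n$, and given $(X_i,V_i)$ the output $Y_{0i}$ is independent of everything else. Each term is therefore bounded by $H(Y_{0i}|X_i)-H(Y_{0i}|X_i,V_i)=I(V_i;Y_{0i}|X_i)$.

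For the sum rate \eqref{eq:sumrate}, combine the two Fano bounds:
\begin{align}
n(R+R_0)\le I(M,M_0;Y'^n,Y_0^n)+2n\epsilon_n\le I(X^n,V^n;Y'^n,Y_0^n)+2n\epsilon_n .
\end{align}
Memorylessness, with $X'_i$ a function of $(X_i,V_i)$, then yields $\sum_i I(X_i,V_i;Y'_i,Y_{0i})$.

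To single-letterize, introduce $Q\sim\mathrm{Unif}[1:n]$, independent of the messages, and set $X=X_Q$, $V=V_Q$, $X'=X'_Q=h(X,V)$, $Y'=Y'_Q$, $Y_0=Y_{0Q}$. Then each per-letter sum equals $n$ times the corresponding conditional mutual information given $Q$. The joint law has the required form $p(q)p(x,v|q)\mathbbm 1\{x'=h(x,v)\}\widehat w(y',y_0|x')$, because the channel acts identically at each time. Letting $n\to\infty$ removes the $\epsilon_n$ terms. If the per-letter distribution of $(X_i,V_i)$ does not depend on $i$, the conditioning on $Q$ is inert and can be dropped, which gives the stationary remark.

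The main point needing care is the rogue-rate step. The genie argument requires that conditioning on $M$ pins down $X^n$ and that $M_0$ stays independent of $M$. It also requires that any extra randomization index, if present, is either known to the rogue decoder or absorbed into $V^n$. Otherwise $V_i$ in the bound must be redefined to include $M_0$ and the past outputs.

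The safe choice is to bound each term by $I(M_0,Y_0^{i-1};Y_{0i}|X_i,M)\le I(M_0,M,Y_0^{i-1};Y_{0i}|X_i)$. Memorylessness then reduces this to $I(V_i;Y_{0i}|X_i)$ with $V_i$ the actual Trojan symbol, since $Y_{0i}$ depends only on $X'_i=h(X_i,V_i)$. This step needs no cardinality bounds; a Carathéodory argument on $Q$ could be added if a bounded $|\mathcal Q|$ is desired.
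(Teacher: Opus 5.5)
Your proposal is correct and follows essentially the same route as the paper. Both use Fano's inequality, a genie that reveals the legitimate transmission to the rogue receiver (you reveal $M$, the paper reveals $X^n$; these are equivalent since $X^n$ is a function of $M$), data processing plus memorylessness for each of the three bounds, and a uniform time-sharing variable $Q$. The only cosmetic difference is that the paper single-letterizes the rogue term at block level, $I(M_0;Y_0^n|X^n)\le I(V^n;Y_0^n|X^n)\le\sum_i I(V_i;Y_{0i}|X_i)$, rather than through your per-letter chain rule. That block-level step also shows your caveat about an extra randomization index is unnecessary: the chain $M_0\to V^n\to Y_0^n$ given $X^n$ holds whether or not the rogue decoder knows that index, and your own per-letter bound covers that case too, since conditioning reduces entropy.
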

\begin{proof}
Please see Appendix~\ref{Appendix_genie_aided_outer_bound}.
\end{proof}

The bound is expressed in the variables of the constrained Trojan embedding.
In particular, the genie-aided constraint
$R_0\le I(V;Y_0|X,Q)$ highlights the rogue message rate limits when the rogue receiver can decode the legitimate transmission. The remaining inequalities provide matched
single-letter ceilings for the legitimate and joint rates under the
symbol-wise embedding constraint.

\section{Reliability and Trojan Detection Error Exponents}
\label{sec:Reliability}

This section studies reliability exponents and Trojan detection exponents. The reliability exponents describe exponential decay rates of decoding error probabilities, while the detection exponents describe exponential decay rates of false-alarm and missed-detection probabilities under specified knowledge assumptions.

Several exponent expressions are first evaluated for ensemble-averaged induced channels. For the actual fixed-codebook Trojan system, their operational
interpretation requires the relevant exponential resolvability or
interference-resolvability approximation. Unless otherwise stated, the
discrete-alphabet formulas are presented explicitly; the Gaussian
specialization is evaluated separately.

\subsection{Rogue Error Exponent}

For the rogue receiver, the reliability analysis separates into three
regimes. If the legitimate codeword can be decoded at the rogue receiver,
then the rogue message can be decoded according to the metric for the known-state channel
$p_{Y_0|V,X}$. If the legitimate codeword is not decoded but the legitimate
codebook randomizes the rogue observation sufficiently, then the operational error exponents can be derived according to an averaged
channel from $V$ to $Y_0$
through an exponent-transfer
argument. In the remaining intermediate regime, where the legitimate codeword
is neither decoded nor known to be resolvable at the rogue receiver, the
averaged-channel exponent is only an ensemble or averaged-interference
benchmark. 

Proposition~\ref{proposition_rogue_interference_resolvability} gives an operational condition for the
legitimate-resolvable regime and Proposition~\ref{proposition_exponent_transfer}
transfers the averaged-channel exponent to the operational
exponent for the actual fixed-legitimate-codebook exponents when the resolvability approximation holds with exponential
accuracy.

\begin{proposition}[Interference resolvability at the rogue receiver]
\label{proposition_rogue_interference_resolvability}
Fix a codeword $v^n$ and 
let $\mathsf C=\{X^n(m)\}_{m=1}^{e^{nR}}$ be a random codebook
with mutually independent codewords each according to
$\prod_{i=1}^n p_X^*(x_i)$. Define
\begin{align}
\widecheck{u}_{0,\mathsf C}(y_0^n|v^n)
\coloneqq
\frac{1}{e^{nR}}
\sum_{m=1}^{e^{nR}}
\prod_{i=1}^n
w_0(y_{0i}|h(X_i(m),v_i)).
\end{align}
Also define the averaged-interference channel 
\begin{align}
\label{eq:rogue_interference_averaged_channel}
u_0(y_0|v)
\coloneqq
\sum_x p_X^*(x)w_0(y_0|h(x,v)),
\end{align}
and let $u_0^n(y_0^n|v^n)\coloneqq \prod_{i=1}^{n}u_0(y_{0,i}|v_i)$. If
\begin{align}
R>I(X;Y_0|V),
\end{align}
then for every typical $v^n$ and for some constants
$\delta_1,\delta_2>0$,
\begin{align}
\mathbb P_{\mathsf C}
\left(
\left\|
\widecheck{u}_{0,\mathsf C}(\cdot|v^n)
-
u_0^n(\cdot|v^n)
\right\|_{\rm TV}
>
e^{-n\delta_1}
\right)
\le
e^{-e^{n\delta_2}}.
\end{align}
Consequently, with probability approaching one over the legitimate codebook
ensemble, the averaged-channel approximation holds exponentially for all
codewords $v^n$ of typical type.
\end{proposition}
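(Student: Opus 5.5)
This is a channel-resolvability statement with doubly-exponential concentration, in the style of Cuff's soft-covering lemma and Parizi--Telatar--Merhav. For fixed $v^n$, the channel $x\mapsto w_0(\cdot|h(x,v_i))$ at position $i$ is a time-varying memoryless channel from $X$ to $Y_0$. The output distribution of $X\sim p_X^*$ through it is exactly $u_0(\cdot|v_i)$. So $\widecheck u_{0,\mathsf C}(\cdot|v^n)$ is the output distribution induced by a uniformly chosen codeword of an i.i.d. codebook of rate $R$ over a known-state channel with state sequence $v^n$. The threshold $R>I(X;Y_0|V)$ is the conditional resolvability rate. Because $v^n$ is typical, the per-letter mutual informations average to $I(X;Y_0|V)$ up to $o(1)$.

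\textbf{Step 1: split into typical and atypical parts.} Fix $\gamma>0$ small with $R>I(X;Y_0|V)+2\gamma$. Define the information density
\[
\imath(x^n;y_0^n|v^n)=\sum_i \log\frac{w_0(y_{0i}|h(x_i,v_i))}{u_0(y_{0i}|v_i)},
\]
and the set $\mathcal T=\{(x^n,y_0^n):\imath\le n(R-\gamma)\}$. Write $\widecheck u_{0,\mathsf C}=\widecheck u^{(1)}+\widecheck u^{(2)}$, where $\widecheck u^{(1)}$ collects the terms in which $(X^n(m),y_0^n)\in\mathcal T$.

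\textbf{Step 2: bound the atypical part.} Its expected total mass is
\[
\mathbb E\|\widecheck u^{(2)}\|_1=\mathbb P\bigl(\imath(X^n;Y_0^n|v^n)>n(R-\gamma)\bigr).
\]
By a Chernoff bound over the independent letters, this is at most $e^{-n\eta}$ for some $\eta>0$; this uses finite alphabets, typicality of $v^n$, and positivity of $w_0$ where needed. The mass $\|\widecheck u^{(2)}\|_1$ is an average of $e^{nR}$ i.i.d. $[0,1]$-valued variables. Hoeffding therefore gives
\[
\mathbb P\bigl(\|\widecheck u^{(2)}\|_1>2e^{-n\eta}\bigr)\le \exp\bigl(-2e^{nR}e^{-2n\eta}\bigr),
\]
which is doubly exponential once $\eta<R/2$. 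We may shrink $\eta$ freely, since that only weakens the first bound.

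\textbf{Step 3: bound the typical part pointwise.} For each $y_0^n$, the quantity $\widecheck u^{(1)}(y_0^n)$ is an average of $e^{nR}$ i.i.d. terms. Each term is bounded by $e^{n(R-\gamma)}u_0^n(y_0^n|v^n)$, and its mean is at most $u_0^n(y_0^n|v^n)$. Normalizing by $e^{n(R-\gamma)}u_0^n$ and applying a multiplicative Chernoff bound with relative deviation $e^{-n\gamma/4}$ yields
\[
\mathbb P\bigl(\widecheck u^{(1)}(y_0^n)>(1+e^{-n\gamma/4})\,\mathbb E\,\widecheck u^{(1)}(y_0^n)\bigr)\le \exp\bigl(-c\,e^{n\gamma/2}\bigr).
\]
A subtlety is that $\mathbb E\,\widecheck u^{(1)}$ can be tiny. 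To handle this, apply the deviation in additive form: measure the deviation relative to $u_0^n(y_0^n)$ rather than to the mean. This keeps the bound doubly exponential for every $y_0^n$.

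\textbf{Step 4: union bound over outputs.} Take a union bound over the at most $|\mathcal Y_0|^n$ outputs; a single exponential cannot cancel a doubly exponential term. Outside the bad events,
\[
\widecheck u_{0,\mathsf C}\le (1+e^{-n\gamma/4})\,u_0^n+\widecheck u^{(2)}.
\]
Since both measures have total mass one, the TV distance is at most the positive part. It is therefore bounded by $e^{-n\gamma/4}+2e^{-n\eta}\le e^{-n\delta_1}$. Combining the events gives probability at most $e^{-e^{n\delta_2}}$ with $\delta_2=\min(\gamma/2,R-2\eta)-o(1)$.

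\textbf{Consequence.} The final claim follows from a union bound over all typical $v^n$. There are at most $|\mathcal V|^n$ of them, so the failure probability stays doubly exponentially small.

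\textbf{Main obstacle.} The hardest step is Step 3, making the pointwise concentration uniform over all $y_0^n$, including those with tiny $u_0^n$. I would handle it by working with the additive deviation described there, which still leaves a doubly exponential tail. A second point to check is Step 2's Chernoff exponent $\eta$, which needs a uniform treatment over the typicality slack of $v^n$.
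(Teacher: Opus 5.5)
Your proof is correct and follows essentially the same route as the paper, which reduces the claim to the conditional-resolvability argument of Appendix~\ref{conditional_resolvability}: a typical/atypical split of the total variation, Hoeffding's inequality for the atypical mass, a pointwise doubly exponential concentration bound for the typical part, a union bound over the $|\mathcal Y_0|^n$ outputs, and a final union bound over typical $v^n$. The differences are minor: you truncate on the information density instead of a jointly typical set, and you derive the pointwise bound from a Chernoff bound that uses $u_0^n$ as an upper bound on the mean, whereas the paper cites the corresponding concentration lemma of Frey et al.
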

\begin{proof}
For fixed $v^n$, the legitimate codebook acts as a random codebook driving
the virtual channel
\begin{align}
p_{Y_0|X,V}(y_0|x,v)=w_0(y_0|h(x,v)).
\end{align}
This is the conditional channel-resolvability problem with input $X$ and
output $Y_0$, conditioned on $V=v$. The same 
resolvability argument used in Appendix~\ref{conditional_resolvability} gives an
exponentially small total-variation approximation whenever
$R>I(X;Y_0|V)$. The double-exponential failure probability allows a union
bound over exponentially many codewords $v^n$ of typical type.
\end{proof}

\begin{proposition}[Exponent transfer under exponential channel approximation]
\label{proposition_exponent_transfer}
Fix a legitimate codebook $\mathcal C$ and a Trojan codebook  $\mathcal C_0$.
Suppose that, for every codeword
$v^n(m_0)\in\mathcal C_0$,
\begin{align}
\left\|
\widecheck u_{0,\mathcal C}(\cdot|v^n(m_0))
-
u_0^n(\cdot|v^n(m_0))
\right\|_{\rm TV}
\le e^{-nE_{\rm int}}
\label{eq:exp_tv_condition_rogue}
\end{align}
for some $E_{\rm int}>0$. Let
$P_{e,\widecheck u_0}^{(n)} (\mathcal C,\mathcal C_0)$ 
denote the average rogue
decoding error probability under the fixed-legitimate-codebook channel
$\widecheck u_{0,\mathcal C}$, and let $P_{e,u_0}^{(n)}(\mathcal C_0)$ 
denote the
corresponding error probability under the product channel 
$u_0^n$ 
using the same rogue decoder. 
Then
\begin{align}
P_{e,\widecheck u_0}^{(n)}(\mathcal C,\mathcal C_0)
\le
P_{e,u_0}^{(n)}(\mathcal C_0)+e^{-nE_{\rm int}}.
\label{eq:error_transfer_bound_rogue}
\end{align}
Consequently, if
\begin{align}\label{eq:error_bound_rogue}
P_{e,u_0}^{(n)}(\mathcal C_0)\le e^{-n(E_{0,\rm av}(R_0)-o(1))},
\end{align}
then
\begin{align}
P_{e,\widecheck u_0}^{(n)}(\mathcal C,\mathcal C_0)
\le
e^{-n(\min\{E_{0,\rm av}(R_0),E_{\rm int}\}-o(1))}.
\end{align}
Thus the operational rogue reliability exponent satisfies
\begin{align}
E_{\rm op}(R_0)\ge \min\{E_{0,\rm av}(R_0),E_{\rm int}\}.
\end{align}
\end{proposition}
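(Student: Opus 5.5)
The plan is to express each conditional error probability as the expectation of the rogue decoder's error indicator under the relevant channel. Then the total-variation hypothesis \eqref{eq:exp_tv_condition_rogue} controls the difference message by message. Fix the rogue decoder $\phi:\mathcal Y_0^n\to[1:e^{nR_0}]$, which is the same under both channels by assumption. For each $m_0$, define the error set
\begin{align}
\mathcal E(m_0)\coloneqq\{y_0^n:\phi(y_0^n)\neq m_0\}.
\end{align}
Conditioned on $M_0=m_0$, the error probability under the fixed-legitimate-codebook channel is $\widecheck u_{0,\mathcal C}(\mathcal E(m_0)|v^n(m_0))$. Here we use that $M$ is uniform, so the rogue observation given $v^n(m_0)$ is exactly the mixture $\widecheck u_{0,\mathcal C}$. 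Under the product channel, the same error probability is $u_0^n(\mathcal E(m_0)|v^n(m_0))$.

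The key step is the elementary fact that, for any two probability measures $P,Q$ on the same space and any event $A$,
\begin{align}
|P(A)-Q(A)|\le\|P-Q\|_{\rm TV}.
\end{align}
We take the convention $\|P-Q\|_{\rm TV}=\sup_A|P(A)-Q(A)|$, which equals half the $\ell_1$ norm. If the paper instead uses the unnormalized $\ell_1$ convention, only a harmless factor of $2$ enters, and it is absorbed into the $o(1)$ term. Applying the fact with $A=\mathcal E(m_0)$ and invoking \eqref{eq:exp_tv_condition_rogue} gives, for every $m_0$,
\begin{align}
\widecheck u_{0,\mathcal C}(\mathcal E(m_0)|v^n(m_0))\le u_0^n(\mathcal E(m_0)|v^n(m_0))+e^{-nE_{\rm int}}.
\end{align}
Averaging over $m_0$ uniformly on $[1:e^{nR_0}]$ yields \eqref{eq:error_transfer_bound_rogue} directly. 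Since the bound holds uniformly in $m_0$, the same argument would also give the maximal-error version.

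For the consequence, we combine \eqref{eq:error_transfer_bound_rogue} with \eqref{eq:error_bound_rogue} and use $a+b\le 2\max\{a,b\}$:
\begin{align}
P_{e,\widecheck u_0}^{(n)}(\mathcal C,\mathcal C_0)\le 2e^{-n(\min\{E_{0,\rm av}(R_0),E_{\rm int}\}-o(1))}.
\end{align}
The factor $2$ is then absorbed into the $o(1)$ term. Finally, we take $-\frac1n\log$ and the $\liminf$ over $n$ to obtain $E_{\rm op}(R_0)\ge\min\{E_{0,\rm av}(R_0),E_{\rm int}\}$. Here $E_{\rm op}$ is understood as the exponent attained by this particular code sequence, whose fixed codebooks satisfy the hypothesis for all $n$ (for instance, those supplied by Proposition~\ref{proposition_rogue_interference_resolvability}).

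The only point needing care is the first step: the actual rogue observation law conditioned on $M_0=m_0$ must coincide with $\widecheck u_{0,\mathcal C}(\cdot|v^n(m_0))$. This requires the rogue decoder not to depend on $M$, and $M$ to be uniform and independent of $M_0$. I would state these assumptions explicitly. Beyond that, the argument is a direct total-variation bound, with no further obstacle.
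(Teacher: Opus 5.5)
Your proposal is correct and follows the paper's proof essentially step for step. You bound each message's error-region probability by the total-variation distance, average over $m_0$, and then let the slower of the two exponential terms dominate, absorbing the constant into the $o(1)$. One small aside: under the paper's positive-part (half-$\ell_1$) convention your TV step applies verbatim, and under an unnormalized $\ell_1$ convention the bound would only improve, so no factor of $2$ ever needs absorbing there.
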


\begin{proof}
For each message $m_0$, let $\mathcal E_{m_0}$ be the decoding error region under
the fixed rogue decoder. By the definition of total variation,
\begin{align}
\widecheck u_{0,\mathcal C}(\mathcal E_{m_0}|v^n(m_0))
&\le
u_0^n(\mathcal E_{m_0}|v^n(m_0))
\twocolbreak \includeonetwocol{}{\quad}
+\left\|
\widecheck u_{0,\mathcal C}(\cdot|v^n(m_0))
-
u_0^n(\cdot|v^n(m_0))
\right\|_{\rm TV}.
\end{align}
Using~\eqref{eq:exp_tv_condition_rogue} and averaging over messages gives
\eqref{eq:error_transfer_bound_rogue}. 
Combining it with \eqref{eq:error_bound_rogue}
gives
\begin{align}
P_{e,\widecheck u_0}^{(n)}(\mathcal C,\mathcal C_0)
\le
e^{-n(E_{0,\rm av}(R_0)-o(1))}
+
e^{-nE_{\rm int}}.
\end{align}
Since the sum of two exponentially decaying terms is dominated by the
slower-decaying term, we have
\begin{align}
P_{e,\widecheck u_0}^{(n)}(\mathcal C,\mathcal C_0)
\le
e^{-n(\min\{E_{0,\rm av}(R_0),E_{\rm int}\}-o(1))}.
\end{align}
The lower bound on $E_{0,\rm op}(R_0)$ then follows immediately.
\end{proof}

If the legitimate codeword is decodable at the rogue receiver, then after decoding $X^n$, the rogue message is decoded over the known-state channel
\begin{align}
p_{Y_0|V,X}(y_0|v,x)
=
w_0(y_0|h(x,v)).
\end{align}
For a typical legitimate codeword with empirical distribution $p_X^*$, the
known-state random-coding exponent is \cite{polyanskiy2025information}
\begin{align}
E_{0,{\rm ks}}(R_0)
=
\max_{\rho\in[0,1]}
\left\{
\Psi_{0,{\rm ks}}(\rho)-\rho R_0
\right\},
\end{align}
where
\begin{align}
&\Psi_{0,{\rm ks}}(\rho)
\twocolbreak \includeonetwocol{}{\quad}
=
-\sum_x p_X^*(x)
\log
\sum_{y_0}
\left(
\sum_v p_V(v)
w_0(y_0|h(x,v))^{\frac{1}{1+\rho}}
\right)^{1+\rho}.
\label{eq:rogue_known_state_E0}
\end{align}
The corresponding known-state expurgated exponent is \cite{polyanskiy2025information}
\begin{align}
E_{0,{\rm ks},{\rm ex}}(R_0)
=
\sup_{\rho\ge 1}
\left\{
\Psi_{0,{\rm ks},{\rm ex}}(\rho)-\rho R_0
\right\},
\end{align}
where
\begin{align}
\Psi_{0,{\rm ks},{\rm ex}}(\rho)
&=
-\rho
\sum_x p_X^*(x)
\log
\sum_{v,\bar v}
p_V(v)p_V(\bar v) \twocolbreak \includeonetwocol{}{\quad\times}
\left[
\sum_{y_0}
\sqrt{
w_0(y_0|h(x,v))
w_0(y_0|h(x,\bar v))
}
\right]^{\frac{1}{\rho}} .
\label{eq:rogue_known_state_ex}
\end{align}

In the legitimate-resolvable regime, where the rogue receiver does not decode
the legitimate codeword but the condition in
Proposition~\ref{proposition_rogue_interference_resolvability} holds, the legitimate
signal can be averaged out on the exponential scale. 
The averaged channel from $V$ to $Y_0$ is given in \eqref{eq:rogue_interference_averaged_channel}.
For this averaged channel,
the memoryless random-coding exponent is
\begin{align}
E_{0,{\rm av}}(R_0)
=
\max_{\rho\in[0,1]}
\left\{
\Psi_{0,{\rm av}}(\rho)-\rho R_0
\right\},
\end{align}
where
\begin{align}
\Psi_{0,{\rm av}}(\rho)
=
-\log
\sum_{y_0}
\left(
\sum_v p_V(v)
p_{Y_0|V}(y_0|v)^{\frac{1}{1+\rho}}
\right)^{1+\rho}.
\end{align}
The corresponding averaged-channel expurgated exponent is
\begin{align}
E_{0,{\rm av},{\rm ex}}(R_0)
=
\sup_{\rho\ge 1}
\left\{
\Psi_{0,{\rm av},{\rm ex}}(\rho)-\rho R_0
\right\},
\end{align}
where
\begin{align}
\Psi_{0,{\rm av},{\rm ex}}(\rho)
&=
-\rho
\log
\sum_{v,\bar v}
p_V(v)p_V(\bar v)
\twocolbreak \includeonetwocol{}{\quad\times}
\left[
\sum_{y_0}
\sqrt{
p_{Y_0|V}(y_0|v)
p_{Y_0|V}(y_0|\bar v)
}
\right]^{\frac{1}{\rho}} .
\end{align}

By Proposition~\ref{proposition_exponent_transfer}, the operational rogue exponents in the averaged-channel
regime satisfy
\begin{align}
E_{0,{\rm op}}(R_0)
&\ge
\min\{E_{0,{\rm av}}(R_0),E_{\rm int}\},\\
E_{0,{\rm op},{\rm ex}}(R_0)
&\ge
\min\{E_{0,{\rm av},{\rm ex}}(R_0),E_{\rm int}\}.
\end{align}

Outside the decodable and resolvable regimes, the formulas
$E_{0,{\rm av}}(R_0)$ and $E_{0,{\rm av},{\rm ex}}(R_0)$ should be interpreted
only as averaged-interference benchmarks.

\subsection{Legitimate Error Exponent}

For the legitimate receiver, we first evaluate the mismatched decoding exponent
for the ensemble-averaged induced
channel $u(y'|x)$ and the decoding metric matched
to the original channel $w(y'|x)$. 
Denote this memoryless mismatched exponent by $E_{\rm mm}(R)$. 
By the same
resolvability-based exponent-transfer argument used in Proposition~\ref{proposition_exponent_transfer},
the operational
exponent for the actual fixed-codebook induced channel is lower bounded by
\begin{align}
E_{\rm op}(R)\ge \min\{E_{\rm mm}(R),E_{\rm res}\},
\end{align}
where $E_{\rm res}$ is the exponent of the resolvability approximation. 
When $E_{\rm res}>E_{\rm mm}(R)$, the
memoryless mismatched exponent is recovered operationally.

An achievable random-coding exponent is given by
\cite{scarlett2014mismatched}
\begin{align}
E_{\rm mm}(R)
=
\max_{\rho\in[0,1]}
\left\{
\Psi(\rho)-\rho R
\right\},
\end{align}
For the i.i.d. ensemble,
\begin{align}
\Psi^{\rm iid}(\rho)
=
\sup_{s\ge 0}
&-\log
\sum_{x,y'} p_X^*(x)u(y'|x)
\twocolbreak \includeonetwocol{}{\quad\times}
\left(
\frac{
\sum_{\bar x} p_X^*(\bar x)\, w(y'|\bar x)^s
}{
w(y'|x)^s
}
\right)^{\rho}.\label{mismatched_exponent_iid}
\end{align}
For the constant-composition ensemble,
\begin{align}
\Psi^{\rm cc}(\rho)
=
\sup_{s\ge 0,\,a(\cdot)}
&-\log
\sum_{x,y'} p_X^*(x)u(y'|x)
\twocolbreak \includeonetwocol{}{\quad\times}
\left(
\frac{
\sum_{\bar x} p_X^*(\bar x)\, w(y'|\bar x)^s e^{a(\bar x)}
}{
w(y'|x)^s e^{a(x)}
}
\right)^{\rho}.\label{mismatched_exponent_cc}
\end{align}

Expurgated exponents for mismatched decoding can improve the low-rate
reliability behavior~\cite{scarlett2014expurgated}. An achievable
expurgated exponent is
\begin{align}
E_{{\rm mm},\rm ex}(R)
=
\sup_{\rho\ge 1}
\left\{
\Psi_{\rm ex}(\rho)-\rho R
\right\}.
\end{align}
For the i.i.d. ensemble,
\begin{align}
\Psi_{\rm ex}^{\rm iid}(\rho)
=
\sup_{s\ge 0}
&-\rho \log
\sum_{x,\bar x} p_X^*(x)p_X^*(\bar x)
\twocolbreak \includeonetwocol{}{\quad\times}
\left(
\sum_{y'} u(y'|x)
\left(
\frac{w(y'|\bar x)}{w(y'|x)}
\right)^s
\right)^{\frac{1}{\rho}}.\label{mismatched_exponent_iid_ex}
\end{align}
For the constant-composition ensemble,
\begin{align}
\Psi_{\rm ex}^{\rm cc}(\rho)
=
\sup_{s\ge 0,\,a(\cdot)}
&-\rho
\sum_x p_X^*(x)
\log
\sum_{\bar x} p_X^*(\bar x)
\twocolbreak \includeonetwocol{}{\quad\times}
\left(
\sum_{y'} u(y'|x)
\left(
\frac{w(y'|\bar x)}{w(y'|x)}
\right)^s
\frac{e^{a(\bar x)}}{e^{a(x)}}
\right)^{\frac{1}{\rho}}.\label{mismatched_exponent_cc_ex}
\end{align}

For continuous-input channels, the corresponding exponent expressions require
the appropriate cost-constrained ensembles and integrability conditions. In the Gaussian specialization in Section~\ref{Sec:Gaussian}, these exponents are evaluated directly rather than by a purely formal replacement of summations with integrals.

\subsection{Trojan Detection Error Exponent}

We study the asymptotic decay rates of the false-alarm and missed-detection probabilities for Trojan detection at the legitimate receiver, conditioned on a fixed legitimate codeword $x^n$. In this work, detectability is evaluated as a consequence of the Trojan strategy rather than imposed as a covertness constraint on the admissible Trojan codes. The results below are stated for the memoryless hypotheses induced by the nominal (original) channel $w(y'|x)$ and the ensemble-averaged induced channel $u(y'|x)$. For the fixed-codebook induced channel, the same interpretation requires the corresponding resolvability approximation. The transfer step is again based on total-variation comparison of event probabilities, so we state the single-letter detection exponents and omit the parallel fixed-codebook transfer details.

\subsubsection{
Ensemble-Averaged Induced Channel Knowledge}
Here, the legitimate receiver 
knows the ensemble-averaged induced channel
\begin{align}
u(y'|x)=\sum_v p_V(v)w(y'|h(x,v)).
\end{align}
For a fixed legitimate codeword $x^n$, the hypotheses are
\begin{equation}
\begin{aligned}
H_0 &: Y'^n\sim \prod_{i=1}^n w(y'_i|x_i),\\
H_1 &: Y'^n\sim \prod_{i=1}^n u(y'_i|x_i).
\end{aligned}
\end{equation}
The log-likelihood ratio is
\begin{align}
S_n=\sum_{i=1}^n \log \frac{u(Y'_i|x_i)}{w(Y'_i|x_i)}.
\end{align}
The false-alarm and missed-detection probabilities are
\begin{equation}
\begin{aligned}
P_{\rm FA} &= \mathbb{P}(S_n > \eta_n \mid H_0, x^n),\\
P_{\rm M} &= \mathbb{P}(S_n \le \eta_n \mid H_1, x^n),
\end{aligned}
\end{equation}
where $\eta_n$ is the decision threshold.

Under the Neyman--Pearson formulation, for any fixed $0<\epsilon<1$ and any
typical sequence $x^n$ with empirical distribution $p_X^*(x)$, the minimum
missed-detection probability among all tests satisfying
$P_{\rm FA}\le\epsilon$ decays exponentially with exponent
\begin{align}\label{Exponent_Stein}
E_{\rm M}^{\rm Stein}
&=
\sum_x p_X^*(x)
D\big(u(\cdot|x)\|w(\cdot|x)\big).
\end{align}
Thus, the Trojan is difficult to detect when the ensemble-averaged induced channel $u(y'|x)$ is
close to the original channel $w(y'|x)$ in KL divergence.

Alternatively, under a Bayesian formulation with prior probabilities
$p(H_0)$ and 
$p(H_1)$, the overall detection error probability is
\begin{align}
P_{{\rm e,det}}
=
p(H_0) P_{\rm FA}
+
p(H_1) P_{\rm M}.
\end{align}
For a fixed typical sequence $x^n$, the minimum Bayesian detection error
probability, corresponding to the MAP detector, decays exponentially with
Chernoff exponent
\begin{align}\label{Exponent_Chernoff}
E_{\rm e}^{\rm Ch}
=
-\min_{0\le \lambda\le 1}
\sum_x p_X^*(x)
\log
\sum_{y'} w(y'|x)^\lambda u(y'|x)^{1-\lambda}.
\end{align}

More generally, for a threshold sequence $\eta_n=n\tau$, the false-alarm and
missed-detection exponents of the likelihood-ratio test are
\begin{equation}
\begin{aligned}
E_{\rm FA}(\tau)
&=
\sup_{\theta\ge 0}
\left\{
\theta\tau
-
\Lambda_{\rm FA}(\theta)
\right\},\\\label{Exponent_Large_Deviation}
E_{\rm M}(\tau)
&=
\sup_{\theta\le 0}
\left\{
\theta\tau
-
\Lambda_{\rm M}(\theta)
\right\},
\end{aligned}
\end{equation}
where
\begin{equation}
\begin{aligned}
\Lambda_{\rm FA}(\theta)
&=
\sum_x p_X^*(x)
\log
\sum_{y'} w(y'|x)
\left(\frac{u(y'|x)}{w(y'|x)}\right)^\theta,\\\label{Lambda_Large_Deviation}
\Lambda_{\rm M}(\theta)
&=
\sum_x p_X^*(x)
\log
\sum_{y'} u(y'|x)
\left(\frac{u(y'|x)}{w(y'|x)}\right)^\theta.
\end{aligned}
\end{equation}

\subsubsection{Nominal-Channel-Only Knowledge}

When the legitimate receiver knows only the nominal channel $w(y'|x)$, it
cannot form the likelihood ratio involving $u(y'|x)$. In this case we use a
goodness-of-fit detector based on the nominal joint distribution
\begin{align}
P_0(x,y')=p_X^*(x)w(y'|x).
\end{align}
For a fixed legitimate codeword $x^n$, let $\widehat P_{X,Y'}$ denote the
empirical joint distribution of the pairs $(x_i,Y'_i)$. The detector accepts
$H_0$ when $\widehat P_{X,Y'}$ belongs to an acceptance region $\mathcal A$
chosen using only the nominal distribution $P_0$.

For example, a Pearson chi-squared goodness-of-fit detector uses the
acceptance region
\begin{align}
\mathcal A_{\chi^2}
=
\left\{
Q:
\sum_{x,y'}
\frac{\big(Q(x,y')-P_0(x,y')\big)^2}
{P_0(x,y')}
\le \tau
\right\},
\end{align}
where $\tau$ is chosen to satisfy the desired false-alarm constraint. More
generally, for any acceptance region $\mathcal A$ determined by the nominal
model, Sanov's theorem gives the detector-specific exponents \cite{dembo2009large}
\begin{equation}
\begin{aligned}
E_{\rm FA}
&=
\inf_{Q\notin\mathcal A}
D(Q\|P_0),\\\label{Exponent_Sanov}
E_{\rm M}
&=
\inf_{Q\in\mathcal A}
D(Q\|P_1),
\end{aligned}
\end{equation}
where
\begin{align}
P_1(x,y')=p_X^*(x)u(y'|x)
\end{align}
is the  
ensemble-averaged induced alternative used to evaluate the missed-detection
probability. Thus, in the 
nominal-channel-only knowledge
case, $u(y'|x)$ is not used
to construct the detector; it is used only to analyze the detector's
performance under the ensemble-averaged induced alternative.

\section{Binary Symmetric Channel}

Here, we consider BSCs described by
\begin{equation}
\begin{aligned}
Y'&=X'\oplus Z,\\
Y_0&=X' \oplus Z_0,
\end{aligned}
\end{equation}
where $Z\sim \mathrm{Bern}(p)$ and $Z_0\sim \mathrm{Bern}(p_0)$
While the original legitimate encoder may, in general, use a suboptimal input
distribution, we assume it is optimally designed for the original channel,
i.e., $X\sim \mathrm{Bern}(\frac12)$.

For the BSC specialization, we use the independent symbol-wise construction
from Section~\ref{sec:Achievable_Rates}. 
rogue codewords are generated i.i.d. according to $\mathrm{Bern}(\rho_0)$ and the Trojan emits
\begin{align}
X'=X\oplus V.
\end{align}
When the rogue receiver is less noisy than the legitimate receiver, the Trojan-infested BSC system is equivalent to the
physically degraded channel depicted in Fig.~\ref{fig:BSC}, where
$p_1=\frac{p-p_0}{1-2p_0}$. 
The required randomization is supplied as in
Theorem~\ref{thm-singleletter-resolvability}, either by an additional
randomization index or by previously decoded rogue messages as in
Corollary~\ref{corollary_previous_messages}.
Hence, the ensemble-averaged induced channel
is a BSC with crossover probability
\begin{align}
\widetilde p=\rho_0+p-2\rho_0p .
\end{align}

\begin{figure}[t!]
 \centering
 \includegraphics[scale=0.65]{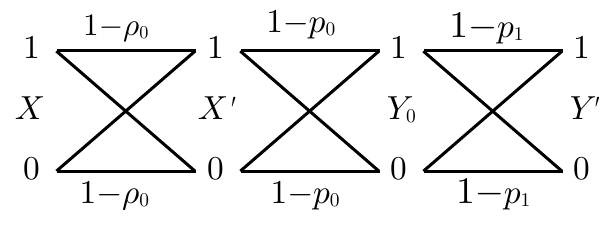}\\ \includegraphics[scale=0.65]{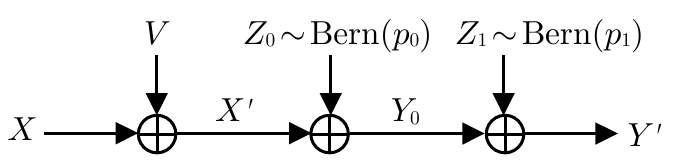}
    \caption{Trojan-infested binary symmetric channel.}
    \label{fig:BSC}
    \vspace{-0.5 em}
\end{figure}

To obtain the achievable legitimate message rate,
we evaluate the LM and GMI
mismatched rates. Let $\mathbf W$ and $\mathbf U$ denote the matrix representations of the
decoding metric and
the transition matrix of 
the ensemble-averaged induced channel, respectively:
\begin{align}
\mathbf{W}=
\begin{bmatrix}
1-p & p \\
p & 1-p
\end{bmatrix}, \quad
\mathbf{U}=
\begin{bmatrix}
1-\widetilde p & \widetilde p \\
\widetilde p & 1-\widetilde p
\end{bmatrix}.
\end{align}
\color{black}
For a binary channel, the LM rate is either equal to the matched rate or
zero \cite{scarlett2020information}, i.e.,
\begin{align}\label{LM_binary1}
I_{\rm LM}(X;Y')&=\begin{cases}
I(X;Y'), \quad 
\det(\mathbf{W})\det(\mathbf{U})\geq 0,
\\
0, \quad & \textrm{otherwise}.
\end{cases}
\end{align}
The GMI rate is the same as the LM rate for a BSC with a symmetric decoding
metric \cite{scarlett2020information} (see Appendix~\ref{Appendix_GMI_BSC}).
We find the achievable rates of a Trojan-infested BSC in
Lemma~\ref{lemma_binary}.

\begin{lemma}[BSC achievable rates]
\label{lemma_binary}
Using the GMI and LM rates, the achievable rates of a Trojan-infested BSC 
are obtained as all rates $(R_0,R)$ satisfying 
\begin{equation}
\begin{aligned}
R_0 &\leq 
\big[
H(\rho_0 + p_0 -2\rho_0p_0) - H (p_0)
\big]\mathbbm{1}_{\{p_0 \leq p\}},\\\label{eq:BSC_Rates}
R &\leq \big[1-
H(\widetilde p)
\big]\mathbbm{1}_{\{\rho_0 \leq \frac{1}{2}\}},
\end{aligned}
\end{equation}
for some $\rho_0 \in [0,1]$, where $H(a)\coloneqq -a \log a - (1-a) \log (1-a)$. 
 \end{lemma}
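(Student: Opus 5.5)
The plan is to specialize Theorem~\ref{thm-singleletter-resolvability}, together with Proposition~\ref{Proposition_superposition} for the rogue side, to $X\sim\mathrm{Bern}(\frac12)$, $V\sim\mathrm{Bern}(\rho_0)$, $X'=X\oplus V$. The two constraints decouple, so we treat them separately. The randomization condition $R_0+R'>I(V;Y'|X)$ can always be met, either with an explicit index or via Corollary~\ref{corollary_previous_messages}. Hence the legitimate rate is governed by the mismatched rate under the averaged channel $u$, which is a BSC$(\widetilde p)$ as already computed.

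\emph{Rogue rate.} When $p_0\le p$, the rogue receiver is less noisy: its channel is a BSC with smaller crossover, and the legitimate channel is physically degraded from it through the BSC$(p_1)$ of Fig.~\ref{fig:BSC}. Proposition~\ref{Proposition_superposition} therefore gives $R_0<I(V;Y_0|X)$. Given $X=x$, we have $Y_0=x\oplus V\oplus Z_0$, which is a BSC$(p_0)$ with input $\mathrm{Bern}(\rho_0)$. Thus
\begin{align}
I(V;Y_0|X)=H(\rho_0\star p_0)-H(p_0), \qquad \rho_0\star p_0\coloneqq\rho_0+p_0-2\rho_0p_0 .
\end{align}
This matches the first line of \eqref{eq:BSC_Rates}. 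When $p_0>p$, the indicator sets the stated bound to zero, which is trivially achievable. Alternatively, one may note that Theorem~\ref{thm-singleletter-resolvability} gives $I(V;Y_0)=0$, because $X\oplus V$ is uniform and independent of $V$. This explains why no positive rogue rate is claimed from the independent-decoding construction.

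\emph{Legitimate rate.} By the displayed LM characterization \eqref{LM_binary1} and the GMI equality of Appendix~\ref{Appendix_GMI_BSC}, $I_{\rm M}(X;Y')=I(X;Y')$ under $u$ whenever $\det(\mathbf W)\det(\mathbf U)\ge0$, and is $0$ otherwise. We compute $\det\mathbf W=1-2p$ and $\det\mathbf U=1-2\widetilde p=(1-2\rho_0)(1-2p)$. The product is therefore $(1-2p)^2(1-2\rho_0)$, which is nonnegative iff $\rho_0\le\frac12$; the case $p=\frac12$ is trivial. With uniform input through BSC$(\widetilde p)$, $I(X;Y')=1-H(\widetilde p)$ in the paper's units, understood as $\log 2-H(\widetilde p)$ in nats. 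This yields the second line.

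The main obstacle is the rogue side. We must justify the less-noisy hypothesis of Proposition~\ref{Proposition_superposition} rigorously via the degradedness decomposition $p=p_0\star p_1$, including checking $p_1\in[0,\frac12]$ for $p_0\le p\le\frac12$. We must also confirm that the randomization condition is compatible with that proposition. For the condition itself, $I(V;Y'|X)=H(\rho_0\star p)-H(p)$, and since $p\ge p_0$, $\rho_0\star p$ is farther from $\rho_0$ than $\rho_0\star p_0$ is, giving
\begin{align}
H(\rho_0\star p)-H(p)\le H(\rho_0\star p_0)-H(p_0).
\end{align}
Hence $K=0$ or $K=1$ in Corollary~\ref{corollary_previous_messages} suffices, with any required backoff absorbed into $\epsilon$.
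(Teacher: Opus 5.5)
Your proposal is correct and follows the paper's own route. The paper invokes Proposition~\ref{Proposition_superposition}: for $p_0\le p$, the degraded BSC of Fig.~\ref{fig:BSC} gives the less-noisy ordering, so $R_0<I(V;Y_0|X)=H(\rho_0+p_0-2\rho_0p_0)-H(p_0)$. It then uses \eqref{LM_binary1} with $\det\mathbf W\det\mathbf U=(1-2p)^2(1-2\rho_0)$ for the legitimate rate. You are also right that $1-H(\widetilde p)$ should be read as $\log 2-H(\widetilde p)$ in nats.

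Two minor points. First, your closing claim that $K\in\{0,1\}$ in Corollary~\ref{corollary_previous_messages} suffices only covers rogue rates with $(K+1)R_0>I(V;Y'|X)$; smaller $R_0$ need the explicit index or padding, which you already allow. Second, the inequality $I(V;Y'|X)\le I(V;Y_0|X)$ is best justified by data processing along the degraded chain rather than by the ``farther from $\rho_0$'' heuristic.
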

 \begin{proof} The proof is straightforward using 
Proposition~\ref{Proposition_superposition} and \eqref{LM_binary1}.
 \end{proof}

For the error exponent analysis, we restrict attention to the parameter
regimes for which the achievable rates in Lemma~\ref{lemma_binary} are
nonzero, i.e., $p_0\le p$ for the rogue receiver and
$\rho_0\le \frac12$ for the legitimate receiver. We also assume
$p,p_0\in[0,\frac12]$. In the BSC specialization, the averaged-interference
rogue channel is uninformative because $X\sim\mathrm{Bern}(\frac12)$ and
$Y_0=X\oplus V\oplus Z_0$, which gives $p_{Y_0|V}(y_0|v)=\frac12$ when
$X^n$ is not decoded. Hence the positive-rate rogue exponent below
corresponds to the decode-and-remove regime, where the rogue receiver first
decodes the legitimate codeword and then decodes $V^n$ over an effective
BSC with crossover probability $p_0$.

\begin{lemma}[BSC rogue error exponents]\label{lemma_BSC_exponent}
For a Trojan-infested BSC, an achievable random-coding error exponent for the rogue receiver, when
$p_0\le p$, is
\begin{align}
E_{0,\textrm{ks}}^{\textrm{BSC}}(R_0)
=
\max_{\rho\in[0,1]}
\left\{
\Psi_{0,\textrm{ks}}^{\textrm{BSC}}(\rho)-\rho R_0
\right\},
\end{align}
where
\begin{align}\label{psi_rogue}
\Psi_{0,\textrm{ks}}^{\textrm{BSC}}(\rho)
=
-\log \bigg[
&\bigg(
(1-\rho_0)(1-p_0)^{\frac{1}{1+\rho}}
+\rho_0 p_0^{\frac{1}{1+\rho}}
\bigg)^{1+\rho}
\twocolbreak
+
\bigg(
(1-\rho_0)p_0^{\frac{1}{1+\rho}}
+\rho_0 (1-p_0)^{\frac{1}{1+\rho}}
\bigg)^{1+\rho}
\bigg].
\end{align}

In addition, an achievable expurgated exponent is
\begin{align}
E_{0,\textrm{ks},\rm ex}^{\textrm{BSC}}(R_0)
=
\sup_{\rho\ge 1}
\left\{
\Psi_{0,\textrm{ks},\rm ex}^{\textrm{BSC}}(\rho)-\rho R_0
\right\},
\end{align}
where
\begin{align}
\Psi_{0,\textrm{ks},\rm ex}^{\textrm{BSC}}(\rho)
=
-\rho \log
&\bigg[
1 - 2\rho_0(1-\rho_0)
\twocolbreak \includeonetwocol{}{\quad\times}
\label{psi_rogue_ex}
\bigg(
1 - \Big(2\sqrt{p_0(1-p_0)}\Big)^{\frac{1}{\rho}}
\bigg)
\bigg].
\end{align}
\end{lemma}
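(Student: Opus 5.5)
The plan is to obtain both expressions by direct specialization of the general known-state exponents \eqref{eq:rogue_known_state_E0} and \eqref{eq:rogue_known_state_ex}, after justifying that the decode-and-remove regime applies. For the BSC we have $X'=X\oplus V$, $V\sim\mathrm{Bern}(\rho_0)$ and $w_0(y_0|x')=p_0^{\mathbbm 1\{y_0\ne x'\}}(1-p_0)^{\mathbbm 1\{y_0=x'\}}$. Hence the known-state channel is
\begin{align}
p_{Y_0|V,X}(y_0|v,x)=w_0(y_0|x\oplus v).
\end{align}

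\textbf{Regime.} First I will check that, when $p_0\le p$, the rogue receiver can decode the legitimate codeword.
\begin{itemize}
\item For BSCs with $p,p_0\in[0,\tfrac12]$, the condition $p_0\le p$ makes the rogue channel a degraded-reverse (physically stronger) version of the legitimate one, as in Fig.~\ref{fig:BSC}. So the rogue receiver is less noisy, and Proposition~\ref{Proposition_superposition} applies.
\item Concretely, the rogue receiver sees $X$ through a BSC with crossover $\rho_0\ast p_0=\rho_0+p_0-2\rho_0p_0\le\widetilde p$, since $p_0\le p$ and $\rho_0\le\tfrac12$.
\item Consequently any legitimate rate $R<1-H(\widetilde p)$ from Lemma~\ref{lemma_binary} satisfies $R<1-H(\rho_0\ast p_0)$. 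The rogue receiver can therefore decode $X^n$ first and then remove it.
\end{itemize}
As in the paragraph preceding the general known-state formulas, the exponent stated in the lemma is the exponent of the second stage, i.e.\ decoding $V^n$ given the correctly decoded $x^n$. This is the one point I expect to need care: the overall rogue error probability is bounded by $\mathbb P(\hat X^n\ne X^n)+\mathbb P(\text{error on }V^n\mid \hat X^n=X^n)$. The first term has its own (matched) exponent at rate $R$ over BSC$(\rho_0\ast p_0)$, and an operational statement would take the minimum of the two exponents. I will state the lemma in the same known-state sense as \eqref{eq:rogue_known_state_E0} and remark on this minimum.

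\textbf{Random-coding term.} Fix $x\in\{0,1\}$ and evaluate the inner sum of \eqref{eq:rogue_known_state_E0}.
\begin{itemize}
\item For $y_0=x$: the channel outputs correctly when $v=0$ (probability $1-\rho_0$, likelihood $1-p_0$) and flips when $v=1$ (probability $\rho_0$, likelihood $p_0$). This gives $\big((1-\rho_0)(1-p_0)^{\frac1{1+\rho}}+\rho_0p_0^{\frac1{1+\rho}}\big)^{1+\rho}$.
\item For $y_0=x\oplus1$: the roles of $p_0$ and $1-p_0$ swap.
\end{itemize}
The resulting bracket does not depend on $x$. So averaging over $p_X^*=\mathrm{Bern}(\tfrac12)$, or over any typical $x^n$, leaves it unchanged, and this yields \eqref{psi_rogue}.

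\textbf{Expurgated term.} For \eqref{eq:rogue_known_state_ex}, the Bhattacharyya sum $\sum_{y_0}\sqrt{w_0(y_0|x\oplus v)w_0(y_0|x\oplus\bar v)}$ takes two values:
\begin{itemize}
\item it equals $1$ if $v=\bar v$;
\item it equals $2\sqrt{p_0(1-p_0)}$ if $v\ne\bar v$.
\end{itemize}
Again this is independent of $x$. Weighting by $p_V(v)p_V(\bar v)$ gives
\begin{align}
\rho_0^2+(1-\rho_0)^2+2\rho_0(1-\rho_0)\big(2\sqrt{p_0(1-p_0)}\big)^{1/\rho}.
\end{align}
Using $\rho_0^2+(1-\rho_0)^2=1-2\rho_0(1-\rho_0)$, this equals $1-2\rho_0(1-\rho_0)\big(1-(2\sqrt{p_0(1-p_0)})^{1/\rho}\big)$. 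Taking $-\rho\log$ gives \eqref{psi_rogue_ex}.

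The algebra in both terms is routine. The only substantive step is the regime justification together with the two-stage error bound.
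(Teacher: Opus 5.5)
Your proposal is correct and takes essentially the same route as the paper. Both decode and remove $X^n$ when $p_0\le p$, note that the bracket in \eqref{eq:rogue_known_state_E0} and \eqref{eq:rogue_known_state_ex} does not depend on $x$, and evaluate the standard random-coding and expurgated functions for a BSC$(p_0)$ with $\mathrm{Bern}(\rho_0)$ inputs; your added check $\rho_0+p_0-2\rho_0p_0\le\widetilde p$ and your remark that an operational exponent is the minimum of the first-stage and known-state exponents are refinements the paper leaves implicit.
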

\begin{proof}
When $p_0\le p$, the rogue receiver can first decode $X^n$ and remove it,
yielding an effective memoryless BSC with crossover probability $p_0$. 
After subtracting the decoded legitimate codeword, the effective channel from
$V$ to $Y_0$ is independent of the realization of $X$. Hence the type
averaging over $p_X^*$ in~\eqref{eq:rogue_known_state_E0} collapses to the
ordinary BSC exponent.
The
results then follow by straightforward evaluation of the corresponding random-coding and expurgated exponents for this effective channel.
\end{proof}

\begin{lemma}[BSC legitimate error exponents]
\label{lemma_BSC_legitimate_exponent}
For the legitimate receiver in a Trojan-infested BSC, when
$\rho_0\le \frac{1}{2}$, the i.i.d. and constant-composition mismatched
random-coding exponents for the 
ensemble-averaged induced BSC coincide. The
memoryless mismatched exponent is
\begin{align}
E_{\rm mm}^{\rm BSC}(R)
=
\max_{\rho\in[0,1]}
\left\{
\Psi^{\textrm{BSC}}(\rho)-\rho R
\right\},
\end{align}
where
\begin{align}\label{eq:lemma_BSC_E0}
\Psi^{\textrm{BSC}}(\rho)
=
\rho \log 2
-
(1+\rho)
\log
\left[
(1-\widetilde p)^{\frac{1}{1+\rho}}
+
\widetilde p^{\frac{1}{1+\rho}}
\right].
\end{align}

The corresponding memoryless mismatched expurgated exponent is
\begin{align}
E_{{\rm mm},{\rm ex}}^{\rm BSC}(R)
=
\sup_{\rho\ge 1}
\left\{
\Psi_{\rm ex}^{\textrm{BSC}}(\rho)-\rho R
\right\},
\end{align}
where
\begin{align}
\Psi_{\rm ex}^{\textrm{BSC}}(\rho)
=
-\rho \log
\left[
\frac{1}{2}
\left(
1+
\left(2\sqrt{\widetilde p(1-\widetilde p)}\right)^{\frac{1}{\rho}}
\right)
\right].
\end{align}

For the actual fixed-codebook induced 
channel, Proposition~\ref{proposition_exponent_transfer}
gives the operational lower bounds
\begin{align}
E_{\rm op}^{\rm BSC}(R)
&\ge
\min\{E_{\rm mm}^{\rm BSC}(R),E_{\rm res}\},\\
E_{{\rm op},{\rm ex}}^{\rm BSC}(R)
&\ge
\min\{E_{{\rm mm},{\rm ex}}^{\rm BSC}(R),E_{\rm res}\}.
\end{align}
When $\rho_0>\frac12$, the corresponding achievable rate and error exponents are zero.
\end{lemma}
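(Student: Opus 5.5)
We evaluate the general mismatched exponents \eqref{mismatched_exponent_iid}--\eqref{mismatched_exponent_cc_ex} with $p_X^*=\mathrm{Bern}(\frac12)$, metric $\mathbf W$ and induced channel $\mathbf U$, reducing each to the matched BSC exponent with crossover probability $\widetilde p$. The key fact is that for $\rho_0\le\frac12$ and $p\le\frac12$ we have $\widetilde p=\rho_0(1-2p)+p\le\frac12$. Hence the metric $w(y'|\bar x)/w(y'|x)$ takes the value $((1-p)/p)^{\pm 1}$ according to whether $\bar x$ agrees with $y'$, and its ordering matches that of $u$. This means $w^s$ can be written as $u^{s'}$ up to a factor depending only on whether $\bar x=y'$, with $s'=s\log\frac{1-p}{p}\big/\log\frac{1-\widetilde p}{\widetilde p}$. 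The map $s\mapsto s'$ is a bijection of $[0,\infty)$ when $\widetilde p<\frac12$. Consequently, the supremum over $s\ge0$ of the mismatched expression equals the supremum over $s'\ge0$ of the same expression with metric $u$, which is the matched Gallager expression with free parameter $s'$.

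Concretely, for the i.i.d.\ ensemble, by symmetry the inner sum over $(x,y')$ reduces to
\begin{align}
\frac{1}{2^\rho}\Big[(1-\widetilde p)\big(1+\gamma^{-s}\big)^\rho+\widetilde p\,\big(1+\gamma^{s}\big)^\rho\Big],
\qquad \gamma=\frac{1-\widetilde p}{\widetilde p},
\end{align}
after reparametrization. We minimize over $s$; the optimum is $s=\frac1{1+\rho}$, either by the standard matched-case argument or by direct differentiation. Substituting yields $2^{-\rho}\big[(1-\widetilde p)^{\frac1{1+\rho}}+\widetilde p^{\frac1{1+\rho}}\big]^{1+\rho}$, which is exactly \eqref{eq:lemma_BSC_E0}. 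For the constant-composition ensemble, the extra $a(\cdot)$ cannot help: by the $x\leftrightarrow 1\oplus x$ symmetry of the uniform input together with $\mathbf W,\mathbf U$, and convexity in $a$ (log-sum-exp), the optimal $a$ may be taken symmetric, i.e.\ constant, so $\Psi^{\rm cc}=\Psi^{\rm iid}$. This gives the claimed coincidence of the two ensembles.

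For the expurgated exponent, we evaluate \eqref{mismatched_exponent_iid_ex}. When $\bar x=x$ the inner sum equals $1$. When $\bar x\ne x$ it equals $(1-\widetilde p)\beta^{-s}+\widetilde p\,\beta^{s}$ with $\beta=\frac{1-p}{p}$. Minimizing over $s$ gives $2\sqrt{\widetilde p(1-\widetilde p)}$, attained at $\beta^{s}=\sqrt{(1-\widetilde p)/\widetilde p}$, which is feasible with $s\ge0$ since $\widetilde p\le\frac12$. Averaging over $(x,\bar x)$ uniform then gives the bracket $\frac12\big(1+(2\sqrt{\widetilde p(1-\widetilde p)})^{1/\rho}\big)$. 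The same symmetry argument disposes of $a(\cdot)$ in \eqref{mismatched_exponent_cc_ex}.

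The operational bounds follow directly from the resolvability-transfer argument of Proposition~\ref{proposition_exponent_transfer}, applied to the legitimate receiver. The randomization from Theorem~\ref{thm-singleletter-resolvability} or Corollary~\ref{corollary_previous_messages} yields a total-variation gap of $e^{-nE_{\rm res}}$, and the error-event comparison then gives the minimum with $E_{\rm res}$. For $\rho_0>\frac12$, we have $\det\mathbf U<0\le\det\mathbf W$, so \eqref{LM_binary1} gives zero rate and hence zero exponent. I expect the main obstacle to be the constant-composition step, namely justifying rigorously that the optimal $a(\cdot)$ is constant. The first approach I would try is to average the objective over the swap $a\mapsto a\circ(1\oplus\cdot)$ and invoke joint convexity of the log-sum-exp expression.
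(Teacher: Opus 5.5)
Your proposal is correct and follows essentially the same route as the paper. Your reparametrized optimizer $s'=\frac{1}{1+\rho}$ is exactly the paper's $s^\star=\frac{1}{1+\rho}\log\frac{1-\widetilde p}{\widetilde p}\big/\log\frac{1-p}{p}$, and your expurgated minimizer $\beta^{s}=\sqrt{(1-\widetilde p)/\widetilde p}$ is the paper's $s^\star_{\rm ex}$. The constant-composition step and the operational step are the same symmetry argument and the same appeal to Proposition~\ref{proposition_exponent_transfer}; your swap-averaging argument is valid, since for fixed $s$ the quantity being optimized is convex or concave in $a$ as needed, and it makes rigorous what the paper asserts in one line.

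Two small fixes are needed. First, when $\beta^{s}=\gamma^{s'}$ one has $w(y'|\bar x)^s=(p^s/\widetilde p^{\,s'})\,u(y'|\bar x)^{s'}$ with a \emph{constant} factor, which is why it cancels in the metric ratio; it is not a factor that depends on whether $\bar x=y'$. Second, in the constant-composition expurgated form, note that once $a$ is constant the inner $\bar x$-sum is the same for both values of $x$, so $\sum_x p_X^*(x)\log(\cdot)$ collapses to the i.i.d.\ expression.
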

\begin{proof}
The proofs of the memoryless mismatched exponents are provided in
Appendix~\ref{Appendix_BSC_error_exponent}. The operational bounds follow
from Proposition~\ref{proposition_exponent_transfer}.
\end{proof}

\begin{lemma}[BSC Trojan detection error exponents]\label{lemma_BSC_detection_exponents}
For Trojan detection at the legitimate receiver in a
Trojan-infested BSC, when the 
ensemble-averaged induced
channel is known, the Stein
missed-detection exponent is
\begin{align}
E_{\rm M}^{\rm Stein}
=
(1-\widetilde p)\log\frac{1-\widetilde p}{1-p}
+
\widetilde p\log\frac{\widetilde p}{p}.
\end{align}

For the Bayesian formulation, the Chernoff detection error exponent is
\begin{align}\label{Bayesian_Cher}
E_{\rm e}^{\rm Ch}
=
-\min_{0\le \lambda\le 1}
\log
\left[
(1-p)^\lambda(1-\widetilde p)^{1-\lambda}
+
p^\lambda \widetilde p^{1-\lambda}
\right].
\end{align}

Moreover, for a threshold sequence $\eta_n=n\tau$, the false-alarm and
missed-detection exponents of the likelihood-ratio test are
\begin{equation}
\begin{aligned}
E_{\rm FA}(\tau)
&=
\sup_{\theta\ge 0}
\left\{
\theta\tau-\Lambda_{\rm FA}(\theta)
\right\},\\
E_{\rm M}(\tau)
&=
\sup_{\theta\le 0}
\left\{
\theta\tau-\Lambda_{\rm M}(\theta)
\right\},
\end{aligned}
\end{equation}
where
\begin{equation}
\begin{aligned}
\Lambda_{\rm FA}(\theta)
&=
\log\left[
(1-p)
\left(\frac{1-\widetilde p}{1-p}\right)^\theta
+
p
\left(\frac{\widetilde p}{p}\right)^\theta
\right],\\
\Lambda_{\rm M}(\theta)
&=
\log\left[
(1-\widetilde p)
\left(\frac{1-\widetilde p}{1-p}\right)^\theta
+
\widetilde p
\left(\frac{\widetilde p}{p}\right)^\theta
\right].
\end{aligned}
\end{equation}

Under nominal-channel-only knowledge,
consider the empirical-crossover
goodness-of-fit detector
\begin{align}
\widehat p
=
\frac1n
\sum_{i=1}^n
\mathbbm{1}\{Y'_i\neq x_i\},
\end{align}
which accepts $H_0$ when
\begin{align}
\widehat p\in\mathcal A,
\qquad
\mathcal A=[p-\tau,p+\tau].
\end{align}
For this detector, the false-alarm and missed-detection exponents are
\begin{equation}
\begin{aligned}
E_{\rm FA}
&=
\inf_{\widehat p\notin \mathcal A}
D\big(\mathrm{Bern}(\widehat p)\|\mathrm{Bern}(p)\big),\\
E_{\rm M}
&=
\inf_{\widehat p\in \mathcal A}
D\big(\mathrm{Bern}(\widehat p)\|\mathrm{Bern}(\widetilde p)\big).
\end{aligned}
\end{equation}
\end{lemma}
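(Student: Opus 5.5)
The lemma specializes the general detection exponents (Stein, Chernoff, large-deviation LRT, and Sanov goodness-of-fit) to the BSC with nominal crossover $p$ and ensemble-averaged induced crossover $\widetilde p=\rho_0+p-2\rho_0p$. The key structural observation, used repeatedly, is that both $w(\cdot|x)$ and $u(\cdot|x)$ are BSCs. For every $x$, the pair $(w(\cdot|x),u(\cdot|x))$ is therefore a relabeling of $(\mathrm{Bern}(p),\mathrm{Bern}(\widetilde p))$ on the error indicator $\mathbbm{1}\{y'\neq x\}$. Every per-letter quantity that depends on $x$ only through this pair (KL divergence, Chernoff sums, log-moment generating functions) is then independent of $x$, and the average over $p_X^*(x)$ in \eqref{Exponent_Stein}, \eqref{Exponent_Chernoff} and \eqref{Lambda_Large_Deviation} collapses to a single term. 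I would state this invariance first, as a one-line consequence of $w(y'|x)=p^{\mathbbm{1}\{y'\neq x\}}(1-p)^{\mathbbm{1}\{y'=x\}}$ and the same formula with $\widetilde p$.

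With this in hand, the first three parts are direct substitutions. For Stein, $D(u(\cdot|x)\|w(\cdot|x))=D(\mathrm{Bern}(\widetilde p)\|\mathrm{Bern}(p))$, which gives the stated expression. For Chernoff, $\sum_{y'}w(y'|x)^\lambda u(y'|x)^{1-\lambda}=(1-p)^\lambda(1-\widetilde p)^{1-\lambda}+p^\lambda\widetilde p^{1-\lambda}$, which gives \eqref{Bayesian_Cher}. For the LRT at threshold $n\tau$, the ratio $u(y'|x)/w(y'|x)$ takes the value $(1-\widetilde p)/(1-p)$ when $y'=x$ and $\widetilde p/p$ otherwise. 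Summing against $w(\cdot|x)$, respectively $u(\cdot|x)$, yields the stated $\Lambda_{\rm FA}$ and $\Lambda_{\rm M}$, and the exponents follow from \eqref{Exponent_Large_Deviation}. Those general formulas rest on the G\"artner--Ellis/Cram\'er theorem for sums of independent, non-identically distributed terms. Here, however, $S_n$ is literally a sum of i.i.d. terms under each hypothesis, whatever $x^n$ is, so plain Cram\'er's theorem suffices. This removes any need for the typicality assumption on $x^n$.

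For the goodness-of-fit part, I would reduce to a one-dimensional statistic. Under $H_0$, the error indicators $\mathbbm{1}\{Y'_i\neq x_i\}$ are i.i.d. $\mathrm{Bern}(p)$; under $H_1$ they are i.i.d. $\mathrm{Bern}(\widetilde p)$. This holds under the ensemble-averaged model, because $V_i$ is i.i.d. and independent of the noise, so $X'_i\oplus Z_i\oplus x_i=V_i\oplus Z_i$. The detector depends only on $\widehat p$, the empirical mean of these indicators. Sanov's theorem on the binary alphabet, equivalently Cram\'er for Bernoulli means, then gives $E_{\rm FA}=\inf_{q\notin\mathcal A}D(\mathrm{Bern}(q)\|\mathrm{Bern}(p))$ and $E_{\rm M}=\inf_{q\in\mathcal A}D(\mathrm{Bern}(q)\|\mathrm{Bern}(\widetilde p))$. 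This is the contraction of \eqref{Exponent_Sanov} onto the sufficient statistic. Because $\mathcal A$ is a closed interval and its complement is a union of half-lines, the infima are attained at the endpoints $p\pm\tau$ (for $E_{\rm M}$, at the endpoint nearest $\widetilde p$, or zero if $\widetilde p\in\mathcal A$). The Sanov upper and lower bounds therefore match, since the sets are regular, with closure of the interior equal to the closure.

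The main obstacle is not computational but interpretive: these are exponents under the ensemble-averaged hypothesis $H_1$, not the fixed-Trojan-codebook channel. I would handle this as the paper does in the detection subsection. I would state the result for the memoryless hypotheses and remark that transferring it to the operational fixed-codebook setting uses the total-variation comparison of Proposition~\ref{proposition_exponent_transfer}, which yields exponents at least $\min\{\cdot,E_{\rm res}\}$ for the missed-detection events. The only subtle point to check is the endpoint attainment and the degenerate cases $\widetilde p=p$ (i.e., $\rho_0=0$), $p=0$ and $p=\tfrac12$, where a log-ratio is undefined or some exponent is zero.
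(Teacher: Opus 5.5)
Your proposal is correct and follows the paper's proof. The Stein, Chernoff, and likelihood-ratio threshold exponents come from evaluating \eqref{Exponent_Stein}, \eqref{Exponent_Chernoff}, and \eqref{Exponent_Large_Deviation} at the BSC laws with crossovers $p$ and $\widetilde p$, and the goodness-of-fit exponents come from Sanov's theorem applied to the empirical crossover under $\mathrm{Bern}(p)$ versus $\mathrm{Bern}(\widetilde p)$. Your reduction to the i.i.d.\ error indicators $\mathbbm{1}\{Y'_i\neq x_i\}$ is a more precise justification than the paper's claim that the empirical joint distribution is characterized by $\widehat p$, and it also shows that the typicality assumption on $x^n$ is not needed here.
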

\begin{proof}
Under induced-channel knowledge, 
the Stein exponent, the
Chernoff exponent, and the false-alarm and missed-detection exponents for
the likelihood-ratio threshold test follow from
\eqref{Exponent_Stein}, \eqref{Exponent_Chernoff}, and
\eqref{Exponent_Large_Deviation}, respectively, by evaluating the
corresponding expressions for the BSC transition laws $w(\cdot|x)$ and $u(\cdot|x)$ with crossover probabilities $p$ and $\widetilde{p}$, respectively.

Under nominal-channel-only knowledge, 
the detector is constructed only from
the nominal BSC crossover probability $p$. Since the empirical joint
distribution is fully characterized by the empirical crossover probability
$\widehat p$, the goodness-of-fit acceptance region reduces to the interval
$\mathcal A=[p-\tau,p+\tau]$. Applying the Sanov exponents in
\eqref{Exponent_Sanov} to
$P_0=\mathrm{Bern}(p)$ and to the ensemble-averaged induced alternative
$P_1=\mathrm{Bern}(\widetilde p)$ gives the stated detector-specific
exponents.
\end{proof}

\section{Gaussian Channel}
\label{Sec:Gaussian}
\begin{figure}[t!]
    \centering
\includegraphics[scale=0.65]{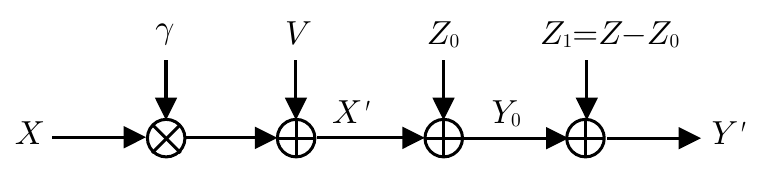}
    \caption{Trojan-infested Gaussian channel with $\sigma_0^2 \le \sigma^2$.} 
    \label{fig:5_degraded_Gaussian}
    \vspace{-1em}
\end{figure}

For the Trojan-infested Gaussian channel,
\begin{equation}
\begin{aligned}
Y'&=X'+Z,\\
Y_0&=X'+Z_0,
\end{aligned}
\end{equation}
where $Z_0 \sim \mathcal{N}(0,\sigma_0^2)$ and $Z \sim \mathcal{N}(0,\sigma^2)$. When the rogue receiver is less noisy than the legitimate receiver, i.e., $\sigma_0^2 \leq \sigma^2$, we have a degraded Gaussian channel depicted in Fig.~\ref{fig:5_degraded_Gaussian}.
There is a power constraint on the output of the transmitter as $\mathbb{E}[X'^2] \leq P$. 
We assume the original legitimate encoder is optimally designed for the original channel and operates at the boundary of the power constraint, i.e., $X\sim \mathcal{N}(0,P)$.
The Trojan codebook, being additive and independent, increases the transmit power. 
To obey the channel power constraint, the Gaussian Trojan modification consists of a scaling followed by superposition:
\begin{align}
X'=\sqrt{1-\alpha}\,X+V,
\end{align}
where $0\le \alpha \le 1$ and
$V\sim \mathcal N(0,\alpha P)$. The parameter
$\alpha$ controls the fraction of the transmit power that the contaminated transmitter diverts to the emission of the rogue message.
This is the Gaussian counterpart of the independent broadcast coding
construction used in Section~\ref{sec:Achievable_Rates}. 
Under the randomization condition in
Theorem~\ref{thm-singleletter-resolvability}, the fixed-codebook induced channel 
is approximated by the memoryless Gaussian channel
\begin{align}
Y'=\sqrt{1-\alpha} X+Z',
\end{align}
where $Z'=Z+V\sim\mathcal N(0,\sigma^2+\alpha P)$.

The general resolvability and exponent-transfer statements in
Sections~\ref{sec:Achievable_Rates} and~\ref{sec:Reliability} were stated for
finite alphabets. For the Gaussian specialization, the same role is played by
the standard continuous-alphabet resolvability approximation under
second-moment constraints. The expressions below are therefore evaluated
directly for the ensemble-averaged induced Gaussian channels, and their fixed-codebook
operational interpretation relies on the corresponding Gaussian
resolvability approximation.

In the following lemma, we obtain the achievable rates of a Trojan-infested Gaussian channel using the GMI and LM rates.

\begin{lemma}[Gaussian achievable rates]\label{LM_GMI_rates_G_lemma}
For a given $\alpha \in [0,1]$, the rogue achievable rate in a Gaussian
Trojan-infested system is 
\begin{align}
R_0 \le \begin{cases}
\frac{1}{2}\log\Big(1+\frac{\alpha P}{\sigma_0^2}\Big)
& \sigma_0^2 \le \sigma^2,
\\
\frac{1}{2}\log\Big(1+\frac{\alpha P}
{(1-\alpha)P+\sigma_0^2}\Big) & \sigma_0^2 > \sigma^2,
\end{cases}
\end{align}

Using the GMI rate, the legitimate achievable rate satisfies
\begin{align}
R &\le
\frac{1}{2}\log\Big(1+\frac{(1-\alpha)P}
{\sigma^2+\alpha P}\Big)
-
\frac{1}{2}\bigg[
\frac{
\sigma_{u'}^2+
(\beta_{u'}-\beta_{g}(s^*))^2\sigma_{y'}^2
}{
\sigma_{g}^2(s^*)
}
\twocolbreak \includeonetwocol{}{\quad}
+
\log\bigg(\frac{\sigma_{g}^2(s^*)}{\sigma_{u'}^2}\bigg)-1
\bigg],
\label{GMI_rate_G_R1}
\end{align}
where
$\sigma_{y'}^2=P+\sigma^2$,
$\sigma_{u'}^2=\frac{P(\alpha P+\sigma^2)}{P+\sigma^2}$,
$\beta_{u'}=\frac{P\sqrt{1-\alpha}}{P+\sigma^2}$,
$\sigma_{g}^2(s)=\frac{P\sigma^2}{sP+\sigma^2}$,
and
$\beta_{g}(s)=\frac{sP}{sP+\sigma^2}$.
Further, $s^*$ is obtained as
\begin{equation}\label{Gaussian_opts}
\begin{aligned}
s^*=
\tfrac{
-(3-4\sqrt{1-\alpha}+\frac{2\sigma^2}{P})
+
\sqrt{
(1+\frac{2\sigma^2}{P})^2
+
8(1-\sqrt{1-\alpha})(1+\frac{\sigma^2}{P})
}
}{
\frac{4P}{\sigma^2}(1-\sqrt{1-\alpha})+2
}.
\end{aligned}
\end{equation}

Using the LM rate, the legitimate achievable rate satisfies
\begin{align}\label{LM_rate_G_R1}
R
\le
\frac{1}{2}\log\Big(1+\frac{(1-\alpha)P}
{\sigma^2+\alpha P}\Big).
\end{align}
\end{lemma}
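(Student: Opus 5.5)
The plan has three parts: the rogue rate, the LM rate, and the GMI rate. For the rogue rate we split on $\sigma_0^2$ versus $\sigma^2$ and use the results for the two regimes.

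\emph{Rogue rate.} When $\sigma_0^2\le\sigma^2$, the Gaussian broadcast channel is degraded, hence less noisy, so Proposition~\ref{Proposition_superposition} applies. It gives $R_0<I(V;Y_0|X)$. With $Y_0=\sqrt{1-\alpha}X+V+Z_0$, conditioning on $X$ leaves a channel $V\to V+Z_0$ with Gaussian $V\sim\mathcal N(0,\alpha P)$, so the rate is $\frac12\log(1+\alpha P/\sigma_0^2)$. When $\sigma_0^2>\sigma^2$, we use instead the independent-decoding bound $R_0<I(V;Y_0)$ of Theorem~\ref{thm-singleletter-resolvability}. This treats $\sqrt{1-\alpha}X+Z_0\sim\mathcal N(0,(1-\alpha)P+\sigma_0^2)$ as Gaussian noise and yields the second line. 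These evaluations are formal continuous-alphabet replacements justified by the Gaussian resolvability remark preceding the lemma.

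\emph{LM rate.} The ensemble-averaged induced channel is $u(y'|x)=\mathcal N(\sqrt{1-\alpha}x,\sigma^2+\alpha P)$, and the metric is $w(y'|x)\propto\exp(-(y'-x)^2/(2\sigma^2))$. This metric is a nearest-neighbour (minimum Euclidean distance) rule. With a cost-constrained or constant-power ensemble, i.e. the LM setting of \cite{ganti2000mismatched}, the decision $\arg\min_x\|y'-x\|^2=\arg\max_x(\langle y',x\rangle-\|x\|^2/2)$ coincides with maximum correlation, since all codewords have norm $\approx nP$. Maximum correlation is also the optimal rule for the scaled channel $\sqrt{1-\alpha}x+Z'$. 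Nearest-neighbour decoding with Gaussian codebooks over an additive Gaussian-noise channel achieves $\frac12\log(1+\mathrm{SNR})$ (Lapidoth's result). The LM dual can also be checked directly: choose $a(x)=cx^2$ to cancel the $x^2$ term so the metric becomes $\exp(s'\,xy')$, then optimize $s'$. This should give exactly $I(X;Y')=\frac12\log\big(1+\frac{(1-\alpha)P}{\sigma^2+\alpha P}\big)$.

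\emph{GMI rate.} With the i.i.d. ensemble $X\sim\mathcal N(0,P)$, the GMI is
\[
I_{\rm GMI}=\sup_{s\ge0}\mathbb E\left[\log\frac{w(Y'|X)^s}{\mathbb E_{\bar X}[w(Y'|\bar X)^s]}\right].
\]
Computing $\mathbb E_{\bar X}[w(y'|\bar X)^s]$ is a Gaussian integral. Normalizing, the tilted law $g_s(x|y')\propto p_X(x)w(y'|x)^s$ is $\mathcal N(\beta_g(s)y',\sigma_g^2(s))$, with the stated $\beta_g,\sigma_g^2$. Hence the GMI objective equals $I(X;Y')-\mathbb E_{Y'}\big[D(p_{X|Y'}\|g_s(\cdot|Y'))\big]$, where $p_{X|Y'}$ is the true posterior under $u$. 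That posterior is $\mathcal N(\beta_{u'}y',\sigma_{u'}^2)$ with $Y'\sim\mathcal N(0,P+\sigma^2)$, using $\operatorname{Var}(Y')=(1-\alpha)P+\alpha P+\sigma^2$. The Gaussian KL formula, averaged over $Y'$, gives the bracketed penalty, with $(\beta_{u'}-\beta_g)^2\sigma_{y'}^2$ coming from the mean mismatch.

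\textbf{Main obstacle.} The remaining work is to minimize this penalty over $s\ge0$. Setting its derivative to zero produces a quadratic in $s$, and its nonnegative root should be $s^*$ in \eqref{Gaussian_opts}. This calculus, together with confirming the sign of the root and that it is a minimum, is the tedious step where I expect the algebra to be error-prone.
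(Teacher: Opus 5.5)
Your rogue-rate and GMI parts follow the paper. For $\sigma_0^2\le\sigma^2$ you decode and remove the legitimate signal via Proposition~\ref{Proposition_superposition}. For $\sigma_0^2>\sigma^2$ you cite the independent-decoding bound $I(V;Y_0)$ of Theorem~\ref{thm-singleletter-resolvability}, while the paper phrases it through the interference-resolvability averaged channel; both lead to the same Gaussian evaluation. Your GMI decomposition into $I(X;Y')$ minus the averaged KL between $\mathcal N(\beta_{u'}y',\sigma_{u'}^2)$ and $\mathcal N(\beta_g(s)y',\sigma_g^2(s))$ is exactly the paper's.

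The minimization you left open is simpler than the paper's root-product and sign analysis. Set $m=sP+\sigma^2$. One checks that $\sigma_{u'}^2+(P+\sigma^2)(1-\beta_{u'})^2=2P(1-\sqrt{1-\alpha})+\sigma^2$. Hence twice the penalty equals $Am+\frac{(P+\sigma^2)\sigma^2}{Pm}-\log m+\text{const}$, with $A=\frac{2P(1-\sqrt{1-\alpha})+\sigma^2}{P\sigma^2}$. This is strictly convex in $m>0$. Its $m$-derivative at $m=\sigma^2$ (i.e.\ $s=0$) is $-2\sqrt{1-\alpha}/\sigma^2<0$ for $\alpha<1$. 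The positive root of $Am^2-m-(P+\sigma^2)\sigma^2/P=0$ then converts exactly into the stated $s^*$.

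The step that fails is your LM ``direct check''. Suppose $a(x)=cx^2$ is used to cancel the $x^2$ term, so that $w(y'|x)^se^{a(x)}\propto e^{s'xy'}$, and then only $s'$ is optimized. Under the $\mathcal N(0,P)$ reference law the LM objective becomes $s'\sqrt{1-\alpha}P-\frac12 s'^2P(P+\sigma^2)$. Its maximum, $\frac{(1-\alpha)P}{2(P+\sigma^2)}$, is strictly below $\frac12\log\frac{P+\sigma^2}{\sigma^2+\alpha P}$. Already for $\alpha=0$, $P=15$, $\sigma^2=5$ this gives $0.375$ versus $\log 2$.

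In the dual formula, $a(\cdot)$ acts as a multiplier for the energy constraint on competing codewords. It must be tuned, not spent on turning the metric into a pure correlation. The paper's proof, which is what you need, chooses $s$ and $a(x)=a_1x^2$ so that $p_X(x)w(y'|x)^se^{a(x)}\propto p_X(x)u(y'|x)$ as functions of $x$. Matching the $xy'$ coefficient gives $s=\frac{\sigma^2\sqrt{1-\alpha}}{\sigma^2+\alpha P}$. Matching the $x^2$ coefficient gives $a_1=\frac{\sqrt{1-\alpha}(1-\sqrt{1-\alpha})}{2(\sigma^2+\alpha P)}$. Then $g'_{s,a}=u'$ and the KL penalty vanishes. (The constant printed in the paper's appendix does not satisfy this matching unless $\alpha=0$, and appears to be a typo.)

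Your operational argument is sound and is the intuition the paper states after the lemma: equal-energy codewords turn minimum distance into maximum correlation, which is ML for $\sqrt{1-\alpha}X+Z'$. However, it establishes achievability with spherical codes rather than evaluating the LM expression. The appeal to Lapidoth must also be to the spherical ensemble. With i.i.d.\ Gaussian codewords, nearest-neighbour decoding under this scaling mismatch yields only the GMI, which is strictly smaller for $0<\alpha<1$.
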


\begin{proof}
The rogue achievable rate is straightforward from the corresponding matched Gaussian channel. When $\sigma_0^2\le \sigma^2$, the rogue
receiver can first decode and remove the legitimate signal, yielding an AWGN channel with signal power $\alpha P$ and noise variance
$\sigma_0^2$. 
Otherwise, when the legitimate signal is not decoded and the corresponding
interference-resolvability approximation is used, the rogue receiver sees an
averaged Gaussian channel with effective noise variance
$(1-\alpha)P+\sigma_0^2$.

The proofs of the GMI and LM mismatched rates for the legitimate message are provided in 
Appendices~\ref{Appendix_GMI_Gaussian} and
\ref{Appendix_LM_Gaussian}, respectively.
\end{proof}
\color{black}

\color{black}
As observed in Lemma~\ref{LM_GMI_rates_G_lemma}, the achievable legitimate message
rate using the LM rate for the Gaussian channel is the same as the matched rate--an expected result, consistent with findings for the Gaussian channel with an unknown signal level~\cite{merhav1994information}. This is because, in such a channel, matched and mismatched maximum likelihood decoding are equivalent when the codewords have identical energy. In the LM rate, the use of cost-constrained random coding allows for generating codewords with identical energy, thereby achieving the matched rate.

\begin{lemma}[Gaussian rogue error exponents]\label{Gaussian_rogue_exp_lemma}
For a Gaussian Trojan-infested system, when $\sigma_0^2 \leq \sigma^2$, the legitimate
signal is decodable at the rogue receiver, and the corresponding known-state random coding exponent is
\begin{align}\label{E0_Gaussian}
E_{0,{\rm ks}}^{\rm G}(R_0)
=
\max_{\rho\in[0,1]}
\left\{
\Psi_{0,{\rm ks}}^{\rm G}(\rho)-\rho R_0
\right\},
\end{align}
where
\begin{align}
\Psi_{0,{\rm ks}}^{\rm G}(\rho)
=
\frac{\rho}{2}
\log\left(
1+\frac{\alpha P}{(1+\rho)\sigma_0^2}
\right).
\label{eq:Psi_rogue_G_ks}
\end{align}
The corresponding known-state expurgated exponent is
\begin{align}\label{E0_ex_Gaussian}
E_{0,{\rm ks},{\rm ex}}^{\rm G}(R_0)
=
\sup_{\rho\ge 1}
\left\{
\Psi_{0,{\rm ks},{\rm ex}}^{\rm G}(\rho)-\rho R_0
\right\},
\end{align}
where
\begin{align}
\Psi_{0,{\rm ks},{\rm ex}}^{\rm G}(\rho)
=
\frac{\rho}{2}
\log\left(
1+\frac{\alpha P}{2\rho\sigma_0^2}
\right).
\label{eq:Psi_rogue_ex_G_ks}
\end{align}

When $\sigma_0^2 > \sigma^2$, 
but the interference-resolvability condition in Proposition~\ref{proposition_rogue_interference_resolvability} holds,
then
the corresponding averaged-channel
random-coding exponent is
\begin{align}
E_{0,{\rm av}}^{\rm G}(R_0)
=
\max_{\rho\in[0,1]}
\left\{
\Psi_{0,{\rm av}}^{\rm G}(\rho)-\rho R_0
\right\},
\end{align}
where
\begin{align}
\Psi_{0,{\rm av}}^{\rm G}(\rho)
=
\frac{\rho}{2}
\log\left(
1+\frac{\alpha P}{(1+\rho)((1-\alpha)P+\sigma_0^2)}
\right).
\label{eq:Psi_rogue_G_av}
\end{align}
The corresponding averaged-channel expurgated exponent is
\begin{align}
E_{0,{\rm av},{\rm ex}}^{\rm G}(R_0)
=
\sup_{\rho\ge 1}
\left\{
\Psi_{0,{\rm av},{\rm ex}}^{\rm G}(\rho)-\rho R_0
\right\},
\end{align}
where
\begin{align}
\Psi_{0,{\rm av},{\rm ex}}^{\rm G}(\rho)
=
\frac{\rho}{2}
\log\left(
1+\frac{\alpha P}{2\rho((1-\alpha)P+\sigma_0^2)}
\right).
\label{eq:Psi_rogue_ex_G_av}
\end{align}
If the interference-resolvability approximation holds with exponent
$E_{\rm int}$, then the operational rogue exponents in the averaged-channel
regime satisfy
\begin{align}
E_{0,{\rm op}}^{\rm G}(R_0)
&\ge
\min\{E_{0,{\rm av}}^{\rm G}(R_0),E_{\rm int}\},\\
E_{0,{\rm op},{\rm ex}}^{\rm G}(R_0)
&\ge
\min\{E_{0,{\rm av},{\rm ex}}^{\rm G}(R_0),E_{\rm int}\}.
\end{align}
Outside the legitimate-decodable and legitimate-resolvable regimes, the
averaged-channel Gaussian exponents should be interpreted only as
averaged-interference benchmarks.
\end{lemma}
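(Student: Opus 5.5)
The plan is to reduce each regime to a standard AWGN channel and then evaluate the Gallager-type exponent functions for Gaussian inputs directly, rather than by formally replacing sums with integrals. The operational statement in the averaged regime will then come from Proposition~\ref{proposition_exponent_transfer}.

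\emph{Decodable regime ($\sigma_0^2\le\sigma^2$).} Here the rogue receiver first decodes $X^n$. By the less-noisy argument of Proposition~\ref{Proposition_superposition}, the legitimate rate is decodable at $Y_0$. Decoding $X^n$ and subtracting $\sqrt{1-\alpha}X^n$ leaves $\tilde Y_0=V+Z_0$. Since the mapper is additive, the conditional law $p_{Y_0|V,X}$ depends on $x$ only through a translation. Hence the type average over $p_X^*$ in~\eqref{eq:rogue_known_state_E0} collapses, as in the BSC proof of Lemma~\ref{lemma_BSC_exponent}, to the exponent of an AWGN channel with input $V\sim\mathcal N(0,\alpha P)$ and noise variance $\sigma_0^2$.

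For this channel I will compute
\[
\int\Big(\int p_V(v)w_0(y|v)^{\frac{1}{1+\rho}}dv\Big)^{1+\rho}dy
\]
by completing squares. The inner integral is Gaussian in $y$ and gives $(1+\frac{\alpha P}{(1+\rho)\sigma_0^2})^{-\rho/2}$, which yields~\eqref{eq:Psi_rogue_G_ks}. For the expurgated exponent, the Bhattacharyya factor is $\exp(-(v-\bar v)^2/(8\sigma_0^2))$. Raising it to the power $1/\rho$ and averaging over independent $v,\bar v\sim\mathcal N(0,\alpha P)$, so that $v-\bar v\sim\mathcal N(0,2\alpha P)$, gives $(1+\frac{\alpha P}{2\rho\sigma_0^2})^{-1/2}$. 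Multiplying the logarithm by $-\rho$ gives~\eqref{eq:Psi_rogue_ex_G_ks}.

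\emph{Resolvable regime ($\sigma_0^2>\sigma^2$).} When the rogue receiver does not decode $X^n$ but Proposition~\ref{proposition_rogue_interference_resolvability} applies in its Gaussian (second-moment) version, I will average $w_0(y_0|\sqrt{1-\alpha}x+v)$ over $x\sim\mathcal N(0,P)$. This gives the averaged channel $u_0(y_0|v)=\mathcal N(v,(1-\alpha)P+\sigma_0^2)$. The same two Gaussian integrals, with $\sigma_0^2$ replaced by $(1-\alpha)P+\sigma_0^2$, then give~\eqref{eq:Psi_rogue_G_av} and~\eqref{eq:Psi_rogue_ex_G_av}.

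\emph{Operational bounds.} These follow verbatim from Proposition~\ref{proposition_exponent_transfer}. That proof uses only the total-variation inequality on the error events, which holds for arbitrary alphabets, so the finite-alphabet assumption is not needed there.

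\emph{Main obstacle.} I expect the hardest part to be justifying the Gaussian random-coding and expurgated formulas as genuinely achievable exponents for i.i.d. Gaussian codebooks subject to $\mathbb E[X'^2]\le P$. The expurgated bound with i.i.d. Gaussian inputs in particular requires care. I would first invoke the standard continuous-alphabet Gallager bounds, which hold for i.i.d. Gaussian ensembles because every integral involved is finite for $\rho\ge0$. The power constraint is then met in expectation, and the statement only claims these expressions as achievable lower bounds, which is consistent with this. Finally, the interpretation caveat outside the two regimes is definitional and needs no proof.
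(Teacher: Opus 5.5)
Your proposal is correct and follows essentially the same route as the paper. In the decodable regime you subtract the decoded legitimate signal to get an AWGN channel with input power $\alpha P$ and noise variance $\sigma_0^2$. In the resolvable regime you average over the Gaussian legitimate input to get an AWGN channel with noise variance $(1-\alpha)P+\sigma_0^2$. You then evaluate the matched Gaussian random-coding and expurgated exponent functions and invoke Proposition~\ref{proposition_exponent_transfer} for the operational bounds. Your explicit Gaussian integrals, the translation-invariance argument that collapses the average over $p_X^*$, and your remark that the transfer step needs no finite-alphabet assumption fill in details the paper states without derivation.
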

\begin{proof}
When $\sigma_0^2\le \sigma^2$, the legitimate signal is decoded at the rogue receiver and subtracted. This creates an equivalent AWGN channel with power constraint
$\alpha P$ and noise variance $\sigma_0^2$. Evaluating the matched
Gaussian random-coding and expurgated exponents gives
\eqref{eq:Psi_rogue_G_ks} and~\eqref{eq:Psi_rogue_ex_G_ks}.

When $\sigma_0^2>\sigma^2$, but the condition in
Proposition~\ref{proposition_rogue_interference_resolvability} holds, the conditional distribution $p(y_0|v)$ is resolvable on the exponential scale. The equivalent channel is once again an
AWGN channel with power constraint $\alpha P$, except its equivalent noise
variance is $(1-\alpha) P +\sigma_0^2$. Re-evaluating matched Gaussian
random-coding and expurgated exponents gives
\eqref{eq:Psi_rogue_G_av} and~\eqref{eq:Psi_rogue_ex_G_av}. The operational
bounds follow from
Proposition~\ref{proposition_exponent_transfer}.
\end{proof}

\begin{lemma}[Gaussian legitimate error exponents]\label{Gaussian_legitimate_exp_lemma}
For the legitimate receiver in a Gaussian Trojan-infested system, the
memoryless mismatched random-coding exponent is
\begin{align}\label{eq:E_Gaussian}
E_{\rm mm}^{\rm G}(R)
=
\max_{\rho\in[0,1]}
\left\{
\Psi(\rho)-\rho R
\right\}.
\end{align}
For the i.i.d. ensemble,
\begin{align}\label{eq:Psi_Leg_iid_G}
\Psi^{\rm iid}(\rho)
=\sup_{\substack{s\ge 0:\\ \Delta(s,\rho)>0}}
\frac{1}{2}
\log
\left[
\left(\frac{\sigma^2+sP}{\sigma^2}\right)^\rho
\frac{\Delta(s,\rho)}
{\sigma^2(\sigma^2+sP)}
\right],
\end{align}
where
\begin{align}\label{eq:Delta_iid_G}
\Delta(s,\rho)
&=
\sigma^4+\sigma^2Ps\big[1+\rho(2\sqrt{1-\alpha}-1)\big] \twocolbreak \includeonetwocol{}{\quad}
-Ps^2\rho
\left[
(\sigma^2+\alpha P)(1+\rho)+P(1-\sqrt{1-\alpha})^2
\right].
\end{align}
For the cost-constrained ensemble,
\begin{align}
\Psi^{\rm cc}(\rho)
=
\frac{\rho}{2}
\log\Bigg(
1+\frac{(1-\alpha)P}
{(1+\rho)(\sigma^2+\alpha P)}
\Bigg).
\label{eq:Psi_Leg_cc_G}
\end{align}
The corresponding memoryless mismatched expurgated exponent is
\begin{align}
E_{{\rm mm},{\rm ex}}^{\rm G}(R)
=
\sup_{\rho\ge 1}
\left\{
\Psi_{\rm ex}(\rho)-\rho R
\right\}.
\end{align}
For the i.i.d. ensemble,
\begin{align}\label{eq:Psi_Leg_ex_iid_G}
\Psi_{\rm ex}^{\rm iid}(\rho)
=
\sup_{\substack{s\ge 0:\\ \Delta_{\rm ex}(s,\rho)>0}}
\frac{\rho}{2}
\log
\left(
\frac{\Delta_{\rm ex}(s,\rho)}
{\rho^2\sigma^4}
\right),
\qquad \rho\ge 1,
\end{align}
where
\begin{align}
\Delta_{\rm ex}(s,\rho)
&=
\rho^2\sigma^4
+
2\rho P\sigma^2\sqrt{1-\alpha} s
\twocolbreak \includeonetwocol{}{\quad}
-
s^2
\left[
2\rho P(\sigma^2+\alpha P)
+
P^2(1-\sqrt{1-\alpha})^2
\right].
\end{align}
For the cost-constrained ensemble,
\begin{align}
\Psi_{\rm ex}^{\rm cc}(\rho)
=
\frac{\rho}{2}
\log\Bigg(
1+\frac{(1-\alpha)P}
{2\rho(\sigma^2+\alpha P)}
\Bigg).
\label{eq:Psi_Leg_ex_cc_G}
\end{align}

For the actual fixed-codebook induced channel, 
the operational lower bounds satisfy
\begin{align}
E_{\rm op}^{\rm G}(R)
&\ge
\min\{E_{\rm mm}^{\rm G}(R),E_{\rm res}\},\\
E_{{\rm op},{\rm ex}}^{\rm G}(R)
&\ge
\min\{E_{{\rm mm},{\rm ex}}^{\rm G}(R),E_{\rm res}\}.
\end{align}
\end{lemma}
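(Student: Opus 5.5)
The plan is to evaluate the general mismatched exponents \eqref{mismatched_exponent_iid}, \eqref{mismatched_exponent_cc}, \eqref{mismatched_exponent_iid_ex} and \eqref{mismatched_exponent_cc_ex}, with sums replaced by integrals, for the Gaussian specialization. Here the input is $X\sim\mathcal N(0,P)$, the decoding metric is $w(y'|x)=\mathcal N(y';x,\sigma^2)$, and the ensemble-averaged induced channel is $u(y'|x)=\mathcal N(y';\sqrt{1-\alpha}\,x,\sigma^2+\alpha P)$. The operational bounds will then follow verbatim from the total-variation argument of Proposition~\ref{proposition_exponent_transfer}, applied with the legitimate-side resolvability exponent $E_{\rm res}$. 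That is, the error event of the fixed decoder has probability under $\widecheck u_{\mathcal C_0}$ at most its probability under $u^n$ plus $e^{-nE_{\rm res}}$. So the real work is the Gaussian integrals.

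\emph{Random-coding, i.i.d. ensemble.}
\begin{itemize}
\item First compute the inner integral $\int p_X(\bar x)\,w(y'|\bar x)^s\,d\bar x$. This is a Gaussian integral equal to $c(s)\exp\!\big(-\tfrac{s y'^2}{2(\sigma^2+sP)}\big)$, with $c(s)=(2\pi\sigma^2)^{-s/2}\big(\sigma^2/(\sigma^2+sP)\big)^{1/2}$.
\item Divide by $w(y'|x)^s$ and raise to the power $\rho$. The integrand becomes $c(s)^\rho(2\pi\sigma^2)^{s\rho/2}$ times the exponential of a quadratic form in $(x,y')$.
\item Integrate this against $p_X(x)u(y'|x)$. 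The result is a bivariate Gaussian integral whose value is proportional to $\det(\cdot)^{-1/2}$ of the resulting precision matrix.
\end{itemize}
I expect the determinant, after clearing denominators, to be exactly $\Delta(s,\rho)/(\sigma^2(\sigma^2+sP))$ up to the prefactor $\big((\sigma^2+sP)/\sigma^2\big)^{\rho}$. The cross term $2\sqrt{1-\alpha}$ comes from $\mathbb E[XY']=\sqrt{1-\alpha}P$, and the term $P(1-\sqrt{1-\alpha})^2$ comes from the mismatch between the metric's mean $x$ and the channel mean $\sqrt{1-\alpha}x$. Taking $-\log$ gives \eqref{eq:Psi_Leg_iid_G}. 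The constraint $\Delta>0$ is precisely the integrability condition, so the supremum is restricted to it.

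\emph{Expurgated, i.i.d. ensemble.} I will proceed the same way. First compute $\int u(y'|x)\big(w(y'|\bar x)/w(y'|x)\big)^s dy'$, which is the moment generating function of a Gaussian linear form. It equals $\exp$ of a quadratic form in $(x,\bar x)$. Raising it to the power $1/\rho$ and integrating against $p_X(x)p_X(\bar x)$ gives a $2\times2$ determinant, which should reduce to $\Delta_{\rm ex}(s,\rho)/(\rho^2\sigma^4)$.

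\emph{Cost-constrained ensemble.} Here I would not grind through $a(\cdot)$. Instead I use the observation behind \eqref{LM_rate_G_R1}: with equal-energy codewords (a shell or cost constraint $a(x)=r x^2$ optimized), the metric $w$ is equivalent to minimum-distance decoding. Minimum-distance decoding is in turn equivalent to the matched ML metric of the scaled channel $Y'=\sqrt{1-\alpha}X+Z'$. Hence the cost-constrained mismatched exponent coincides with the matched Gallager exponent of an AWGN channel with SNR $(1-\alpha)P/(\sigma^2+\alpha P)$. This exponent is $\frac{\rho}{2}\log\big(1+\frac{\mathrm{SNR}}{1+\rho}\big)$, giving \eqref{eq:Psi_Leg_cc_G}. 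The expurgated analogue, with $2\rho$ in place of $1+\rho$, gives \eqref{eq:Psi_Leg_ex_cc_G}. Concretely, I would verify this by choosing $s$ and the quadratic $a(\cdot)$ to match the Gaussian tilting used for the matched cost-constrained exponent, so that the optimized expression reproduces the matched one.

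\emph{Main obstacle.} I expect the hardest part to be the bookkeeping in the bivariate Gaussian determinants, namely getting $\Delta$ and $\Delta_{\rm ex}$ in exactly the stated polynomial form, and justifying the cost-constrained reduction rigorously rather than heuristically. For the latter, the key point is that the $a(\cdot)$ term can absorb the $x^2$ discrepancy between the two metrics.
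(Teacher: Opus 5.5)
Your proposal is correct and follows essentially the same route as the paper. You evaluate the i.i.d. random-coding and expurgated expressions directly as bivariate Gaussian integrals; the $2\times 2$ determinants do reduce exactly to $\Delta(s,\rho)/(\sigma^2(\sigma^2+sP))$ and $\Delta_{\rm ex}(s,\rho)/(\rho^2\sigma^4)$, with positivity being precisely the integrability condition. You then reduce the cost-constrained case to the matched AWGN exponents with signal power $(1-\alpha)P$ and noise variance $\sigma^2+\alpha P$ by letting a quadratic $a(\cdot)$ absorb the $x^2$ discrepancy between the metrics, and you obtain the operational bounds from the total-variation transfer argument of Proposition~\ref{proposition_exponent_transfer}.
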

\begin{proof}
The i.i.d. random-coding and expurgated expressions are obtained by
substituting the ensemble-averaged induced Gaussian channel and the mismatched AWGN decoding
metric into the continuous-alphabet versions of
\eqref{mismatched_exponent_iid} and
\eqref{mismatched_exponent_iid_ex}, respectively. The positivity conditions
$\Delta(s,\rho)>0$ and $\Delta_{\rm ex}(s,\rho)>0$ ensure that the resulting
Gaussian integrals are finite.

For the cost-constrained ensemble, all codewords have identical energy. With
a quadratic auxiliary cost function $a(x)$, the mismatched Gaussian decoding
metric becomes equivalent, up to terms independent of the candidate codeword,
to the metric 
corresponding to the ensemble-averaged induced Gaussian channel. Hence, the
cost-constrained random-coding and expurgated exponents reduce to those of
the matched Gaussian channel with signal power $(1-\alpha)P$ and
noise variance $\sigma^2+\alpha P$.

The operational bounds follow from the total-variation transfer argument used in Proposition~\ref{proposition_exponent_transfer}.
\end{proof}

\begin{lemma}[Gaussian Trojan Detection error exponents]\label{lemma_Gaussian_detection_exponents_codebook}
For Trojan detection in a Gaussian Trojan-infested system, assume a typical
legitimate codeword $x^n$ satisfying
\begin{align}
\frac{1}{n}\sum_{i=1}^n x_i^2 \to P.
\end{align}
When the 
ensemble-averaged induced Gaussian channel is known, the Stein missed-detection exponent is
\begin{align}
E_{\rm M}^{\rm Stein}
=
\frac12
\left[
\log\frac{\sigma^2}{\sigma^2+\alpha P}
+
\frac{2(1-\sqrt{1-\alpha})P}{\sigma^2}
\right].
\end{align}

For the Bayesian formulation, the Chernoff detection exponent is
\begin{align}
E_{\rm e}^{\rm Ch}
=
\max_{0\le \lambda\le 1}
&\Bigg\{
\frac12
\log
\frac{\sigma^2+\lambda\alpha P
}{
(\sigma^2)^{1-\lambda}(\sigma^2+\alpha P)^\lambda
}
\twocolbreak \includeonetwocol{}{\quad}
+
\frac{\lambda(1-\lambda)(1-\sqrt{1-\alpha})^2P}
{2\big(\lambda (\sigma^2+\alpha P)+(1-\lambda)\sigma^2\big)}
\Bigg\}.
\end{align}

Moreover, for a threshold sequence $\eta_n=n\tau$, the false-alarm and
missed-detection exponents of the likelihood-ratio test are
\begin{equation}
\begin{aligned}
E_{\rm FA}(\tau)
&=
\sup_{\theta\ge 0}
\left\{
\theta\tau-\Lambda_{\rm FA}(\theta)
\right\},\\
E_{\rm M}(\tau)
&=
\sup_{\theta\le 0}
\left\{
\theta\tau-\Lambda_{\rm M}(\theta)
\right\},
\end{aligned}
\end{equation}
where
\begin{equation}
\begin{aligned}
\Lambda_{\rm FA}(\theta)
&=
\lim_{n\to\infty}
\frac1n
\log
\mathbb E_{H_0}
\left[
e^{\theta S_n}\mid x^n
\right],\\
\Lambda_{\rm M}(\theta)
&=
\lim_{n\to\infty}
\frac1n
\log
\mathbb E_{H_1}
\left[
e^{\theta S_n}\mid x^n
\right].
\end{aligned}
\end{equation}

In the nominal-channel-only knowledge
case, 
consider the residual-energy
goodness-of-fit detector
\begin{align}
\widehat{\sigma}^2
=
\frac{1}{n}
\sum_{i=1}^n
(Y'_i-x_i)^2 ,
\end{align}
which accepts $H_0$ when
\begin{align}
\widehat{\sigma}^2\in\mathcal A,
\qquad
\mathcal A=[\sigma^2-\tau,\sigma^2+\tau].
\end{align}
For this detector, the false-alarm and missed-detection exponents are
\begin{equation}
\begin{aligned}
E_{\rm FA}
&=
\inf_{r\notin \mathcal A}
\frac12
\left[
\frac{r}{\sigma^2}
-1
-\log\frac{r}{\sigma^2}
\right],
\\
E_{\rm M}
&=
\inf_{r\in \mathcal A}
\sup_{\theta<\frac{1}{2(\sigma^2+\alpha P)}}
\big\{
\theta r
+
\frac12\log(1-2\theta(\sigma^2+\alpha P))
\includeonetwocol{}{\\&\quad}
-
\frac{\theta(\sqrt{1-\alpha}-1)^2P}
{1-2\theta(\sigma^2+\alpha P)}
\big\}.\label{Eq:Sanov_exponents_G}
\end{aligned}
\end{equation}
\end{lemma}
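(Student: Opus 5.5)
Fixing a typical codeword $x^n$, both hypotheses are products of independent Gaussians with per-letter means depending on $x_i$. Under $H_0$, $Y'_i\sim\mathcal N(x_i,\sigma^2)$. Under $H_1$, by the ensemble-averaged induced channel of Theorem~\ref{thm-singleletter-resolvability} specialized as in Section~\ref{Sec:Gaussian}, $Y'_i\sim\mathcal N(\sqrt{1-\alpha}\,x_i,\sigma^2+\alpha P)$. Write $\sigma_1^2=\sigma^2+\alpha P$ and $c=1-\sqrt{1-\alpha}$. The mean offset at letter $i$ is then $c\,x_i$. Every exponent below is a normalized sum over $i$ of a closed-form Gaussian functional that is quadratic in $x_i$. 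Taking $\frac1n\sum_i x_i^2\to P$ therefore turns each expression into a function of $P$ alone, replacing the type average $\sum_x p_X^*(x)$ in \eqref{Exponent_Stein}, \eqref{Exponent_Chernoff}, \eqref{Exponent_Large_Deviation}. For the operational (non-identically distributed) statements I would invoke the Gärtner–Ellis theorem \cite{dembo2009large}, since the normalized log-moment generating functions converge. This is why $\Lambda_{\rm FA},\Lambda_{\rm M}$ are written as limits.

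\textbf{Stein exponent.} Apply the Gaussian divergence formula per letter:
\[
D\big(\mathcal N(\mu_1,\sigma_1^2)\,\|\,\mathcal N(\mu_0,\sigma^2)\big)=\tfrac12\Big[\log\tfrac{\sigma^2}{\sigma_1^2}+\tfrac{\sigma_1^2+(\mu_1-\mu_0)^2}{\sigma^2}-1\Big],
\]
with $(\mu_1-\mu_0)^2=c^2x_i^2$, and average over $i$. I expect the stated form to arise after simplifying $\frac{\alpha P+c^2P}{\sigma^2}$ using $\alpha=1-(1-c)^2=2c-c^2$. This gives $\alpha+c^2=2c$, i.e.\ $2(1-\sqrt{1-\alpha})P/\sigma^2$. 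That identity is exactly what the statement shows, which confirms the route.

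\textbf{Chernoff exponent.} Compute $-\log\int w^\lambda u^{1-\lambda}$ for two Gaussians with different means and variances. Completing the square gives a variance term
\[
\tfrac12\log\frac{\lambda\sigma_1^2+(1-\lambda)\sigma^2}{(\sigma^2)^{1-\lambda}(\sigma_1^2)^{\lambda}}
\]
(possibly with $\lambda\leftrightarrow1-\lambda$, depending on which density carries $\lambda$). It also gives a mean term $\frac{\lambda(1-\lambda)c^2x_i^2}{2(\lambda\sigma_1^2+(1-\lambda)\sigma^2)}$. Averaging yields the stated formula after matching the convention; the min of the negative becomes the stated max.

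\textbf{Threshold exponents.} Since $S_n$ is a sum of quadratics in Gaussians, $\mathbb E[e^{\theta S_n}]$ is a product of Gaussian integrals of exponentiated quadratics. These integrals are finite on an interval of $\theta$, and their normalized logs converge. The exponents then follow from Gärtner–Ellis as Legendre transforms over the stated sign ranges of $\theta$.

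\textbf{Nominal-only detector.} Under $H_0$, $(Y'_i-x_i)^2/\sigma^2$ are i.i.d.\ $\chi^2_1$. Cramér's theorem gives the rate function $\frac12[r/\sigma^2-1-\log(r/\sigma^2)]$, infimized over $r\notin\mathcal A$.

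Under $H_1$, $Y'_i-x_i\sim\mathcal N(-c x_i,\sigma_1^2)$, which is non-central. The log-MGF of the square is
\[
-\tfrac12\log(1-2\theta\sigma_1^2)+\frac{\theta c^2x_i^2}{1-2\theta\sigma_1^2},\qquad \theta<\frac{1}{2\sigma_1^2}.
\]
Averaging with $\frac1n\sum x_i^2\to P$ and applying Gärtner–Ellis gives the Legendre form $\sup_\theta\{\theta r-\Lambda(\theta)\}$, infimized over $r\in\mathcal A$. Here $c^2=(\sqrt{1-\alpha}-1)^2$, matching \eqref{Eq:Sanov_exponents_G}; I will reconcile the sign of the last term with the stated expression.

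\textbf{Main obstacle.} The delicate step is justifying Gärtner–Ellis for the non-identically distributed sums. This requires essential smoothness of the limiting $\Lambda$ at the boundary $\theta\to\frac{1}{2\sigma_1^2}$ (it diverges, so steepness holds). It also requires checking that the exponents depend on $x^n$ only through its asymptotic energy. The remaining work is the bookkeeping in completing squares.
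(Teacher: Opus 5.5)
Your route is the paper's: per-letter Gaussian closed forms averaged via $\frac1n\sum_i x_i^2\to P$, the Gaussian Chernoff coefficient, Cram\'er for the central $\chi^2_1$ residual under $H_0$, and the noncentral log-MGF with a Legendre transform under $H_1$. Those computations are correct. Your Chernoff expression (with $\lambda$ on $w$) matches the stated one. The last term of the residual-energy $E_{\rm M}$ already has the right sign, since it is just $-\Lambda(\theta)$. Appealing to G\"artner--Ellis for the non-identically distributed $H_1$ residuals is, if anything, more careful than the paper.

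The gap is in the Stein step. You accepted the divergence order only because it reproduces the target, and the order is reversed. The paper's proof (``the normalized divergence from $H_1$ to $H_0$'') and its general formula \eqref{Exponent_Stein} make the same reversal, so citing them does not rescue the step.

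When $P_{\rm FA}=\mathbb P_{H_0}(\text{declare }H_1)\le\epsilon$ is imposed and $P_{\rm M}$ is minimized, Stein's lemma gives the exponent $\bar D=\lim_n\frac1n\sum_i D\big(w(\cdot|x_i)\|u(\cdot|x_i)\big)$, i.e., the divergence from $H_0$ to $H_1$. The paper's own threshold formulas confirm this. Under $H_0$, $S_n/n\to-\bar D$, so a fixed false-alarm level forces $\tau\ge-\bar D$ asymptotically. At $\tau=-\bar D$ the supremum defining $E_{\rm M}(\tau)$ is attained at $\theta=-1$, where $\Lambda_{\rm M}(-1)=0$ and $\Lambda_{\rm M}'(-1)=-\bar D$, giving $E_{\rm M}=\bar D$.

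What you computed, $D\big(\mathcal N(\sqrt{1-\alpha}\,x_i,\sigma^2+\alpha P)\,\|\,\mathcal N(x_i,\sigma^2)\big)$, is instead the Stein exponent of $P_{\rm FA}$ under a fixed missed-detection level. For the exponent as defined, the per-letter divergence must be $D\big(\mathcal N(x_i,\sigma^2)\,\|\,\mathcal N(\sqrt{1-\alpha}\,x_i,\sigma^2+\alpha P)\big)$. This averages to
\[
\frac12\left[\log\frac{\sigma^2+\alpha P}{\sigma^2}-\frac{2\sqrt{1-\alpha}\,\big(1-\sqrt{1-\alpha}\big)P}{\sigma^2+\alpha P}\right].
\]
It agrees with the stated expression only to second order in $\alpha$. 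At $\alpha=1$, for instance, it equals $\frac12\log(1+P/\sigma^2)$, whereas the statement gives $\frac12\big[2P/\sigma^2-\log(1+P/\sigma^2)\big]$.

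So this part of the lemma cannot be proved as written. It should either be relabeled as the false-alarm exponent under $P_{\rm M}\le\epsilon$, or replaced by the expression above.
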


\begin{proof}
Under induced-channel knowledge,
the two hypotheses are Gaussian with
conditional laws
\begin{equation}
\begin{aligned}
H_0 &: Y'_i\sim \mathcal N(x_i,\sigma^2),\\
H_1 &: Y'_i\sim \mathcal N(\sqrt{1-\alpha} \,x_i,\sigma^2+\alpha P).
\end{aligned}
\end{equation}
Thus, conditioned on $x^n$, the Stein exponent is the normalized divergence
from $H_1$ to $H_0$. Using
$\frac1n\sum_i x_i^2\to P$ gives the stated expression. The Chernoff exponent
is obtained by evaluating the Chernoff information between the same Gaussian
product measures. The threshold exponents follow from the large-deviation
formulas in~\eqref{Exponent_Large_Deviation}.

Under nominal-channel-only knowledge,
the detector uses only 
the residual
energy relative to the nominal Gaussian channel. Under $H_0$,
$Y'_i-x_i\sim \mathcal N(0,\sigma^2)$, so Cramer's theorem for the empirical
second moment gives the false-alarm exponent \cite{dembo2009large}. Under $H_1$,
\begin{align}
Y'_i-x_i=(\sqrt{1-\alpha}-1)x_i+Z'_i,
\end{align}
where $Z'_i\sim\mathcal N(0,\sigma^2+\alpha P)$. The logarithmic moment generating
function of the residual energy converges to
\begin{align}
-\frac12\log(1-2\theta(\sigma^2+\alpha P))
+
\frac{\theta(\sqrt{1-\alpha}-1)^2P}{1-2\theta(\sigma^2+\alpha P)},
\end{align}
for any $\theta<\frac{1}{2(\sigma^2+\alpha P)}$. The stated missed-detection exponent follows by the Legendre transform and
minimization over the acceptance region $\mathcal A$. These exponents are
specific to the residual-energy goodness-of-fit detector.
\end{proof}

\section{Numerical Results}
\label{Sec:Simulations}
In this section, we provide numerical results for achievable rates and Trojan detection in Trojan-infested BSC and Gaussian channel models; the numerical results for the legitimate receiver correspond to the ensemble-averaged channel.

\begin{figure*}[t]
\centering
~~\begin{minipage}[t]{0.33\textwidth}
\centering
\includegraphics[trim={2.4cm 0 0 0},scale=0.44]{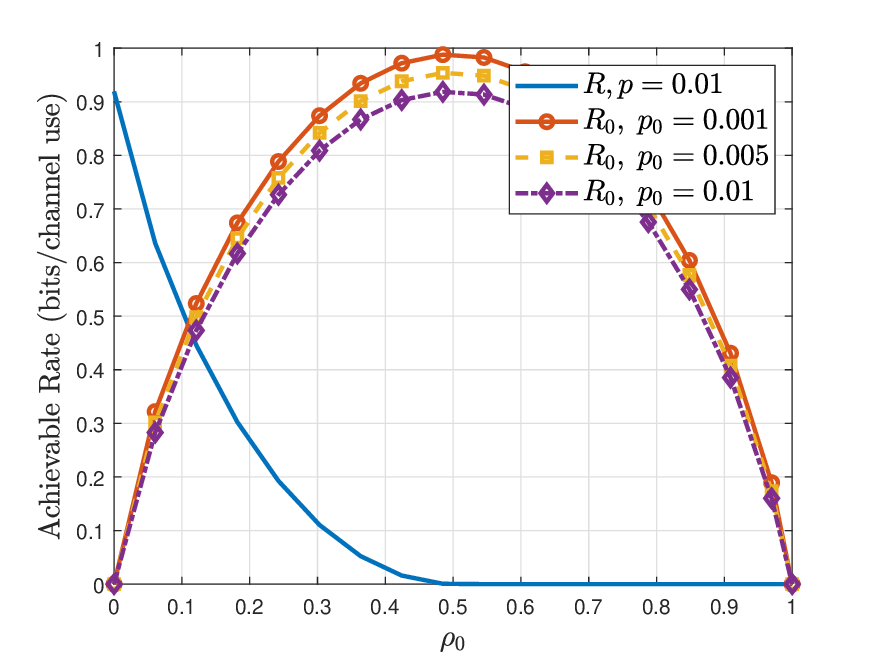}
\caption{Legitimate and rogue achievable rates versus $\rho_0$ in a BSC system.}
\label{fig:BSC2}
\end{minipage}~
\begin{minipage}[t]{0.33\textwidth}
\centering
\includegraphics[trim={1.2cm 0cm 0 1.2cm},scale=0.44]{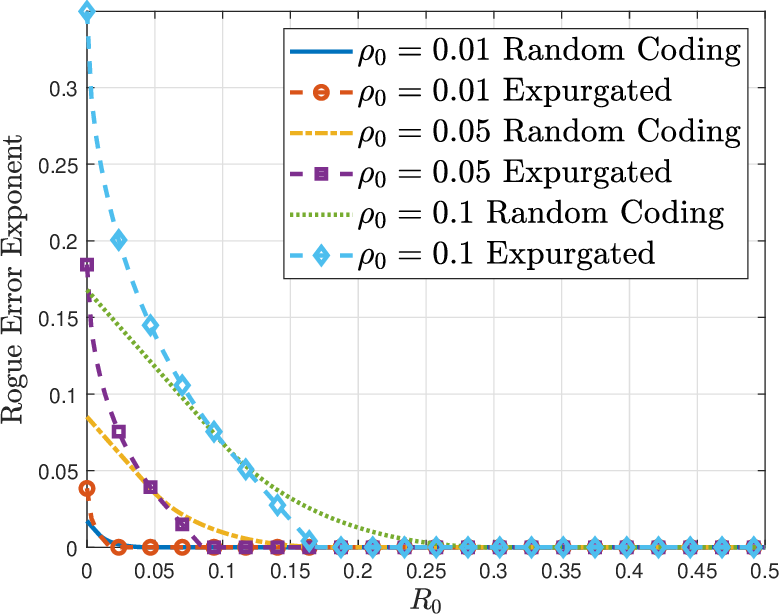}
\caption{Random-coding and expurgated error exponents at the rogue receiver in a BSC system with $p_0=0.005$.
}
\label{fig:BSC_rogue_exp}
\end{minipage}~
\begin{minipage}[t]{0.33\textwidth}
\centering
\includegraphics[trim={0.2cm 0 0 0},scale=0.44]{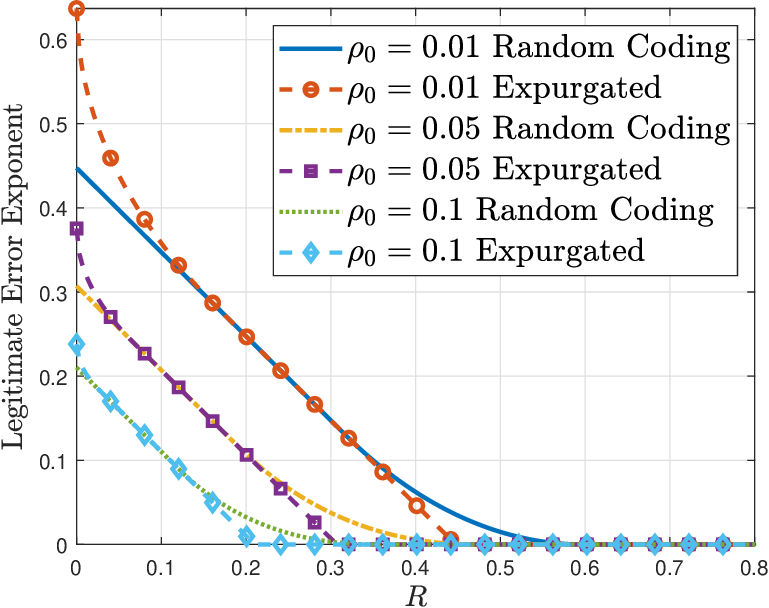}
\caption{Random-coding and expurgated error exponents at the legitimate receiver for a BSC system with $p=0.01$.} 
\label{fig:BSC_legitimate_exp}
\end{minipage}
\vspace{-0.5em}
\end{figure*}

\begin{figure*}[t]
\centering
~~\begin{minipage}[t]{0.33\textwidth}
\centering
\includegraphics[trim={2.4cm 0cm 0 1.2cm},scale=0.44]{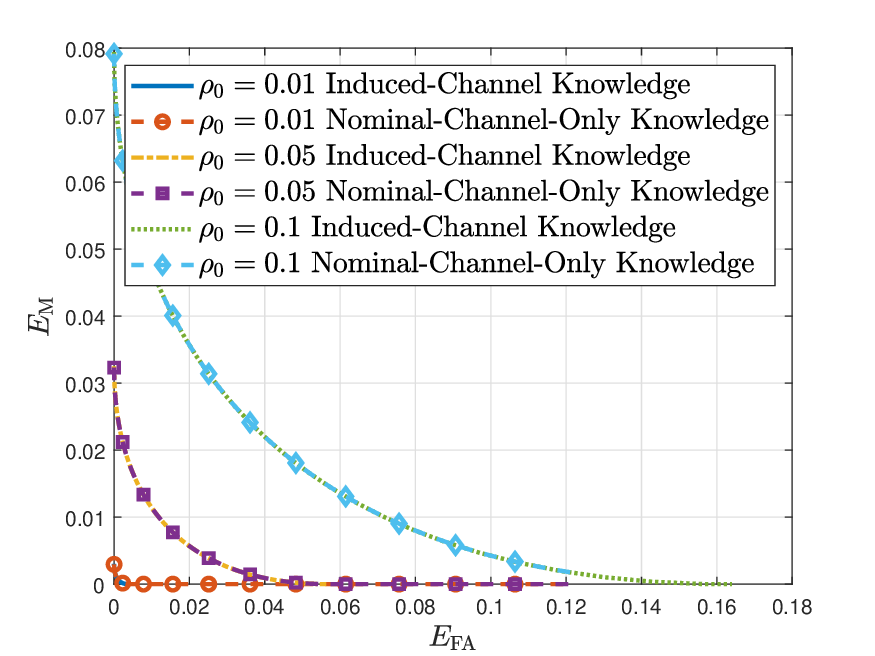}
\caption{$E_{\rm M}$ versus $E_{\rm FA}$ for Trojan detection at the legitimate receiver in a BSC system with $p=0.01$. 
}
\label{fig:BSC_detection_tradeoff}
\end{minipage}~
\begin{minipage}[t]{0.33\textwidth}
\centering
\includegraphics[trim={1.6cm 0 0 0},scale=0.44]{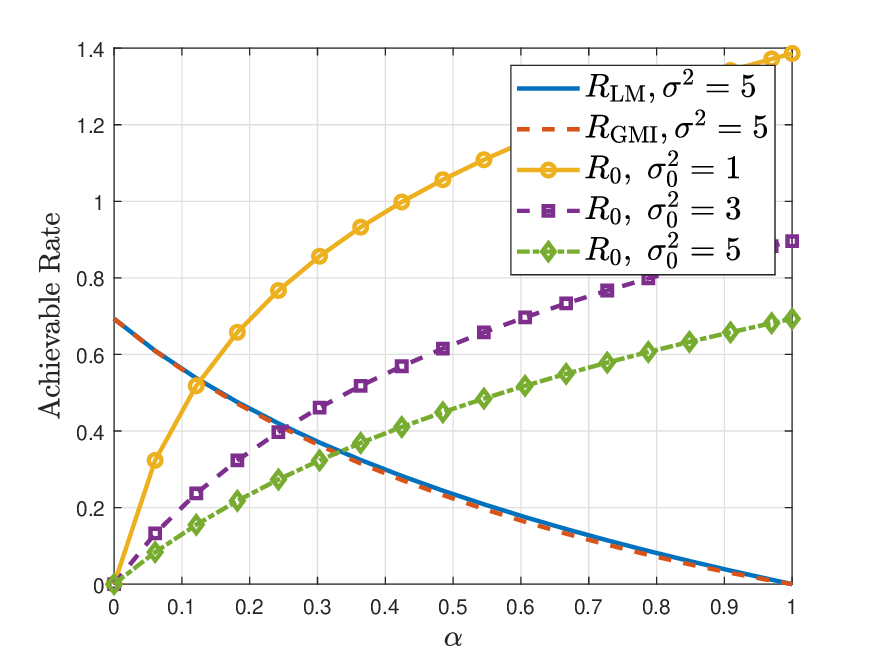}
\caption{Legitimate and rogue achievable rates versus $\alpha$ in a Gaussian channel.}
\label{fig:Gaussian2}
\end{minipage}~
\begin{minipage}[t]{0.33\textwidth}
\centering
\includegraphics[trim={0.8cm 0 0 0}, scale=0.44]{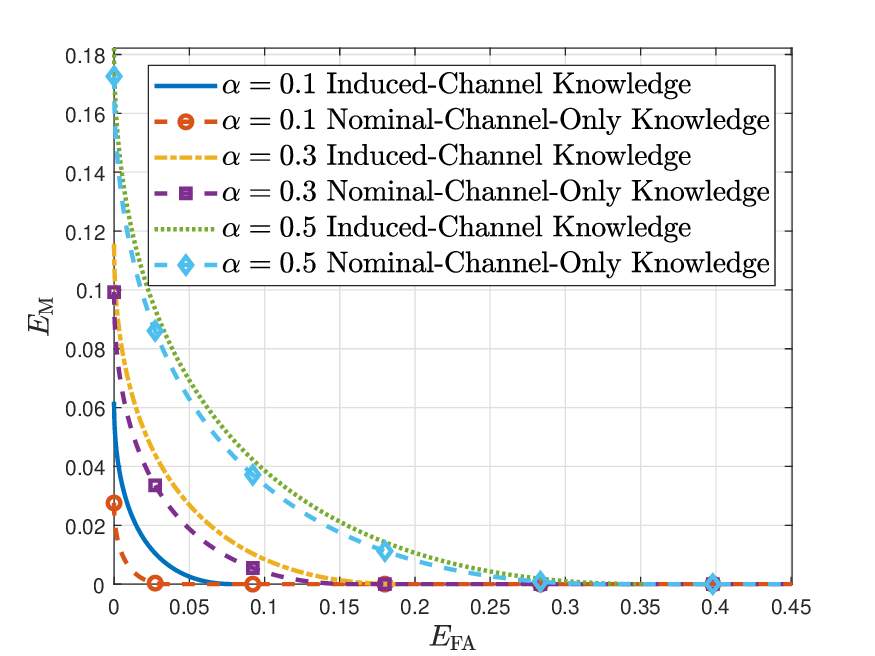}
\caption{$E_{\rm M}$ versus $E_{\rm FA}$ for Trojan detection at the legitimate receiver in a Gaussian channel with $\sigma^2=5$.} 
\label{fig:Gaussian_detection_tradeoff}
\end{minipage}
\vspace{-1em}
\end{figure*}

\subsection{Binary Symmetric Channel}
Here, we set parameters $p=0.01$ and $p_0 = 0.001, 0.005, 0.01$. Since the rogue message rate is zero for $p_0>p$, we restrict the values of $p_0$ to those satisfying $p_0\leq p$. 
Fig.~\ref{fig:BSC2} shows the trade-off between legitimate message and rogue message rates as $\rho_0$ varies for different values of $p_0$. As expected, smaller values of $p_0$ increase the rogue achievable rate. Moreover, increasing $\rho_0$ increases the rogue message rate but reduces the legitimate message rate due to the increased mismatch introduced by the Trojan signaling. In particular, the legitimate message rate becomes zero for $\rho_0>\frac{1}{2}$ according to \eqref{eq:BSC_Rates}, while the rogue message rate is maximized at $\rho_0=\frac{1}{2}$. 

Fig.~\ref{fig:BSC_rogue_exp} shows the achievable reliability exponents at the rogue receiver versus the rogue message rate $R_0$ for different values of $\rho_0$.
As expected, both the random-coding and expurgated exponents decrease as $R_0$ increases. For a fixed $R_0$, both exponents increase with $\rho_0$. The expurgated exponent provides an improvement primarily in the low-rate regime, while the random-coding exponent dominates at moderate and high rates.

Fig.~\ref{fig:BSC_legitimate_exp} illustrates the achievable reliability exponents at the legitimate receiver versus the legitimate message rate $R$ for different values of $\rho_0$. 
Similar to the rogue receiver, both exponents decrease as $R$ increases, and the expurgated exponent mainly improves the low-rate regime. However, unlike the rogue receiver, both the random-coding and expurgated exponents decrease with $\rho_0$ due to the increased mismatch introduced by the Trojan signaling.

Fig.~\ref{fig:BSC_detection_tradeoff} illustrates the tradeoff between the false-alarm exponent $E_{\rm FA}$ and the missed-detection exponent $E_{\rm M}$ for Trojan detection at the legitimate receiver. Increasing the false-alarm exponent imposes a stricter constraint on false alarms and therefore reduces the achievable missed-detection exponent. Conversely, allowing a smaller false-alarm exponent increases the achievable decay rate of missed detections. 
This demonstrates the fundamental detection tradeoff governed by the large-deviation characterization of binary hypothesis testing.
The curves corresponding to the 
induced-channel knowledge
and nominal-channel-only knowledge cases
coincide. Although the two detectors are formulated differently, in the considered BSC setting both ultimately reduce to threshold tests on the empirical crossover frequency between the transmitted and received sequences. Therefore, both detectors induce the same large-deviation events and yield the same asymptotic error exponents.

\subsection{Gaussian Channel}
For the Gaussian channel, we set $\sigma^2=5$, $\sigma_0^2=1,3,5$, and $P=15$. 
Fig.~\ref{fig:Gaussian2} shows the achievable rates versus $\alpha$. 
The rogue achievable rate increases as $\sigma_0^2$ decreases.
Moreover, increasing $\alpha$ increases the rogue message rate at the expense of the legitimate message rate. In particular, the maximum rogue message rate is achieved at $\alpha=1$, where the legitimate achievable rate becomes zero.

Fig.~\ref{fig:Gaussian_detection_tradeoff} illustrates the tradeoff between the false-alarm exponent $E_{\rm FA}$ and the missed-detection exponent $E_{\rm M}$ for Trojan detection in Gaussian channels. 
Unlike the BSC, the induced-channel knowledge and nominal-channel-only knowledge curves do not coincide. Knowledge of the induced channel allows the detector to exploit the statistical effect of the Trojan contamination, 
which can yield a larger missed-detection exponent for a given false-alarm exponent.

\section{Conclusion}\label{Sec:Conclusion}

In this paper, we introduced the Transmitter Trojan channel, in which a malicious encoder embedded within a legitimate transmitter perturbs the transmitted codeword on a symbol-by-symbol basis. The model leads to a broadcast-like channel with one-sided decoder mismatch: the rogue receiver is matched to the 
Trojan channel and the Trojan encoder, while the legitimate receiver continues to use the metric designed for the uncontaminated channel. We derived achievable leakage rates for the Trojan and achievable rates for the concurrent legitimate communication link using broadcast coding, mismatched decoding, and channel-resolvability arguments. We also gave a genie-aided outer bound that is tight when the rogue receiver can decode the legitimate message.

We further studied 
reliability exponents and Trojan detection error exponents. For the legitimate receiver, memoryless mismatched exponents were transferred to operational fixed-codebook lower bounds through an exponential resolvability argument. For the rogue receiver, the reliability analysis separated into decodable, resolvable, and intermediate regimes. For detection at the legitimate receiver, we derived likelihood-ratio exponents when the ensemble-averaged induced channel
is known and detector-specific goodness-of-fit exponents when only the nominal channel is known. Specialization to the BSC and AWGN channel models, together with numerical results, illustrates the resulting tradeoffs among 
rogue message rate, legitimate message rate, decoding reliability, and Trojan detectability.

\appendices
\renewcommand{\thesubsection}{\Alph{section}.\arabic{subsection}}
\renewcommand{\thesubsectiondis}{\thesection.\arabic{subsection}}




\section{Mismatched Rate Calculation}
\label{Appendix-MismatchedRates}
We present the GMI and LM achievable rates for discrete memoryless channels and their extensions to channels with memory. These quantities are lower bounds on the mismatched capacity in general, although they are tight in some special cases. For continuous-alphabet channels, the corresponding expressions are obtained by replacing the summations with integrals.


\subsection{GMI Rate}\label{Appendix-GMIMismatchedRates}
For a memoryless (induced) channel $u(y'|x)$ and decoding metric $w(y'|x)$, the GMI achievable rate can be obtained as follows \cite{scarlett2020information}:
\begin{align}
\twocolAlignMarker
I_{\rm GMI}(X;Y')
\twocolbreak \includeonetwocol{&}{\quad}
=\sup_{s\geq 0}
\sum_{x\in\mathcal X}\sum_{y'\in\mathcal Y}
p_X(x)u(y'|x) \log \frac{w(y'|x)^s}
{\sum_{\bar x\in\mathcal X}p_X(\bar x)w(y'|\bar x)^s}
\nonumber\\\label{eq_GMI}
&\includeonetwocol{}{\quad}
= I(X;Y') - \inf_{s\geq 0}
\mathcal D\big(u'(x|y')\|g_s(x|y')\,|\,p_{Y'}(y')\big).
\end{align}
where $I(X;Y')$, also called the {\em matched rate}, is evaluated with respect to the distribution $u(y'|x)p_X(x)$, and $u'(x|y')$ and $g_s(x|y')$ are defined:
\begin{align}\label{u_prime}
&u'(x|y')\coloneqq \frac{p_X(x)u(y'|x)}{\sum_{\bar{x} \in \mathcal{X}}p_X(\bar{x})u(y'|\bar{x})},\\ \label{v_prime}
&g_s(x|y') \coloneqq \frac{p_X(x)w(y'|x)^s}
{\sum_{\bar{x} \in \mathcal{X}}p_X(\bar{x})w(y'|\bar{x})^s}.
\end{align}
$\mathcal{D}(u'(x|y')||g_s(x|y')|p_{Y'}(y'))$ is the conditional KL divergence:
\begin{align}
&\mathcal{D}(u'(x|y')||g_s(x|y')|p_{Y'}(y'))
\twocolbreak \includeonetwocol{}{\quad}
=\sum_{x \in \mathcal{X}}\sum_{y' \in \mathcal{Y}} p_{Y'}(y')u'(x|y')\log \bigg(\frac{u'(x|y')}{g_s(x|y')}\bigg).
\end{align}

Furthermore, 
if the induced channel has memory with transition probability 
${\widecheck u}_{{\mathcal C}_0} (y'^n|x^n)$, the mismatched rate can be extended as $\lim_{n \rightarrow \infty}\frac{1}{n}I_{\rm GMI,\mathcal{C}_0}(X^n;Y'^n)$ \cite{ganti2000mismatched}, where 
\begin{align}
\twocolAlignMarker
I_{{\rm GMI},\mathcal{C}_0}(X^n;Y'^n)
\twocolbreak \includeonetwocol{}{\quad}
=\sup_{s^{(n)}\geq 0}\sum_{\substack{x^n \in \mathcal{X}^n,\\
y'^n \in \mathcal{Y}^n}}\big(\prod_{i=1}^{n}p_X(x_i)\big){\widecheck u}_{{\mathcal C}_0} (y'^n|x^n)\label{eq_GMI_memory}
\twocolbreak \includeonetwocol{}{\qquad \times}\log\bigg(\frac{\big(\prod_{i=1}^{n}w(y'_i|x_i)\big)^{s^{(n)}} }{\sum_{\bar{x}^n \in \mathcal{X}^n}\prod_{i=1}^{n}p_X(\bar{x}_i) w(y'_i|\bar{x}_i)^{s^{(n)}} }\bigg).
\end{align}

\subsection{LM Rate}\label{Appendix-LMMismatchedRates}
For a memoryless (induced) channel $u(y'|x)$ and decoding metric $w(y'|x)$, the LM achievable rate can be obtained as follows \cite{scarlett2020information}:
\begin{align}
\twocolAlignMarker
I_{\rm LM}(X;Y')\twocolbreak \includeonetwocol{&}{}
=
\sup_{\substack{s\geq 0,\\ a(\cdot)}}
\sum_{x\in\mathcal X}\sum_{y'\in\mathcal Y}
p_X(x)u(y'|x)
\log
\frac{w(y'|x)^s e^{a(x)}}
{\sum_{\bar x\in\mathcal X}
p_X(\bar x)w(y'|\bar x)^s e^{a(\bar x)}}
\nonumber\\\label{eq_LM}
&=
I(X;Y')
-
\inf_{s\geq 0,a(\cdot)}
\mathcal D\big(u'(x|y')\|g'_{s,a}(x|y')\,|\,p_{Y'}(y')\big),
\end{align}
with
\begin{align}
g'_{s,a}(x|y')
\coloneqq
\frac{p_X(x)w(y'|x)^s e^{a(x)}}
{\sum_{\bar x\in\mathcal X}
p_X(\bar x)w(y'|\bar x)^s e^{a(\bar x)}}.
\label{v_double_prime}
\end{align}

Similar to the GMI rate, the LM rate can be extended to the 
induced channel with memory as $\lim_{n \rightarrow \infty} \frac{1}{n} I_{\rm LM,\mathcal{C}_0}(X^n;Y'^n)$, where
\begin{align}
\twocolAlignMarker
I_{{\rm LM},\mathcal{C}_0}(X^n;Y'^n)
\twocolbreak
\includeonetwocol{&}{}=\sup_{\substack{s^{(n)}\geq 0, \\a^{(n)}(.)}}\sum_{\substack{x^n \in \mathcal{X}^n,\\
y'^n \in \mathcal{Y}^n}}\big(\prod_{i=1}^{n}p_X(x_i)\big){\widecheck u}_{{\mathcal C}_0} (y'^n|x^n)
\twocolbreak \includeonetwocol{}{\quad \times}
\log\bigg(\frac{\big(\prod_{i=1}^{n}w(y'_i|x_i)\big)^{s^{(n)}}e^{a^{(n)}(x^n)} }{\sum_{\bar{x}^n \in \mathcal{X}^n}\big(\prod_{i=1}^{n}p_X(\bar{x}_i)w(y'_i|\bar{x}_i)^{s^{(n)}}\big)e^{a^{(n)}(\bar{x}^n)} }\bigg).\label{eq_LM_memory}
\end{align}

The divergence representations in~\eqref{eq_GMI} and~\eqref{eq_LM} show that
for the induced channel $u(y'|x)$ and the
fixed decoding metric $w(y'|x)$,
\begin{align}
I_{\rm GMI}(X;Y')\le I(X;Y'),
\qquad
I_{\rm LM}(X;Y')\le I(X;Y').
\end{align}
Indeed, the difference between the matched mutual information $I(X;Y')$ and
the corresponding mismatched achievable rate is the infimum of a conditional
KL divergence, and is therefore nonnegative. This inequality is used in
Appendix~\ref{Appendix_less_noisy_rogue}.

The legitimate rate loss can be interpreted as having two components. The
first is the {\em Trojan channel mismatch penalty}, represented by the
nonnegative divergence terms in~\eqref{eq_GMI} and~\eqref{eq_LM}. These terms
appear because the Trojan changes the effective channel from $X$ to $Y'$,
while the legitimate decoder continues to use the metric matched to the
original channel $w(y'|x)$.

The second component is the {\em Trojan signal-path penalty}. Even if the
decoder were matched to the induced channel $u(y'|x)$, the mutual-information
term in~\eqref{eq_GMI} and~\eqref{eq_LM} would be $I(X;Y')$, computed across
the Trojan-modified signal path
\begin{align}
X \to X' \to Y'.
\end{align}
By data processing,
\begin{align}
I(X;Y')\le I(X';Y').
\end{align}
Thus the Trojan perturbation can reduce the information carried by the
legitimate codeword before any decoder mismatch is taken into account. In
this sense, the loss in the legitimate link consists of a signal-path
degradation caused by embedding the rogue signal, followed by an additional
mismatch penalty caused by the unchanged legitimate decoder.

When the uncontaminated codebook was designed near the optimum of the
original channel, this signal-path degradation is naturally interpreted as a
loss relative to the link margin available in the Trojan-free system.


\section{Fixed-codebook achievable rate}
 \label{Appendix_Marton_Mismatch1}

The 
original legitimate codebook $\mathcal C$ is drawn according to distribution $\prod_{i=1}^np_X^*(x_i)$, with $e^{nR}$ codewords $x^n(M)$, $M \in [1:e^{nR}]$. The Trojan codebook ${\mathcal C}_0$, is generated i.i.d.\ according to $\prod_{i=1}^n p_V(v_i)$ with $e^{nR_0}$ codewords $v^n(M_0)$, $M_0 \in [1:e^{nR_0}]$.
The codewords $x^n(M)$ and $v^n(M_0)$ are combined with a symbol-by-symbol function $h(.,.)$ to produce the 
codeword $X'^n$ emitted from the contaminated transmitter. Because the contaminated transmitter emits into the same channel as the uncontaminated transmitter, $X$ and $X'$ share the same alphabet.

The rogue message achievable rate $R_0 \leq I(V;Y_0)$ follows directly from the Marton inner bound, since the Trojan channel input is generated via a symbol-by-symbol function. No sum-rate expression appears since the legitimate and Trojan codebooks are independently generated.

The legitimate message rate is governed by the effective channel law from $X$ to $Y'$ subject to a fixed Trojan codebook, i.e., \eqref{dist_u_memory}. 
Since the Trojan operation introduces memory into the channel, the legitimate message rate must be expressed in a multi-letter form for channels with memory, as provided in \eqref{eq:MartonRate1-mismatch1}.
By applying the GMI and LM mismatched rate formulations for channels with memory, given in \eqref{eq_GMI_memory} and \eqref{eq_LM_memory}, we obtain $I_{\textrm{M},\mathcal{C}_0}(X^n;Y'^n)$.

\section{Single-Letter Achievable Rates Under Channel Resolvability}
\label{Appendix_Marton_Mismatch2}

We prove Theorem~\ref{thm-singleletter-resolvability} in three steps. First, we define a randomized Trojan codebook. Second, we use a resolvability argument to show that the channel induced by the Trojan from $X^n$ to $Y'^n$ is close in total variation to a product channel with high probability. Third, we use the stability of the GMI and LM mismatched-rate functionals under total-variation approximation to obtain the stated single-letter achievable rate.

\subsection{Randomized Trojan Codebook and Induced Channel}

We follow the methodology of Cuff~\cite{cuff2016soft} in defining and analyzing a random codebook for the Trojan. Let $\mathsf C_0=\{V^n(m_0,k)\}$
denote a random Trojan codebook where the codewords are mutually independent and each has distribution $\prod_{i=1}^n p_V(v_i)$.
The index
$m_0\in[1:e^{nR_0}]$ carries the rogue message and 
$k\in[1:e^{nR'}]$ is an additional randomization index known to the encoder and the rogue receiver. Define
\begin{align}
R_0' \coloneqq R_0+R'.
\end{align}
For a fixed legitimate codeword $x^n$, the randomized Trojan codebook induces
the conditional distribution
\begin{align}
\widecheck u_{\mathsf C_0}(y'^n|x^n)
&=
\frac{1}{e^{nR_0'}}
\sum_{m_0=1}^{e^{nR_0}}
\sum_{k=1}^{e^{nR'}}
\prod_{i=1}^n
w(y'_i|h(x_i,V_i(m_0,k))).
\end{align}
Equivalently, indexing the $e^{nR_0'}$ codewords by a single index $j$, this
can be written as
\begin{align}
\widecheck u_{\mathsf C_0}(y'^n|x^n)
=
\frac{1}{e^{nR_0'}}
\sum_{j=1}^{e^{nR_0'}}
\prod_{i=1}^n
w(y'_i|h(x_i,V_i(j))).
\label{eq:check_u_randomized}
\end{align}
We aim to show that $\widecheck u_{\mathsf C_0}(y'^n|x^n)$ is close in total variation to the memoryless conditional distribution $u^n(y'^n|x^n)\coloneqq \prod_{i=1}^n u(y'_i|x_i)$, where
\begin{align}
u(y'|x)=\sum_v p_V(v)w(y'|h(x,v)).
\label{eq:target_memoryless_u}
\end{align}

\subsection{Conditional Resolvability}
\label{conditional_resolvability}
We now show that the randomized Trojan codebook can approximate the
memoryless conditional distribution in~\eqref{eq:target_memoryless_u}. This
is a one-sided resolvability problem for the virtual channel from $V$ to $Y'$, conditioned on $X=x$.

\begin{proposition}[Conditional resolvability]
\label{prop_conditional_resolvability}
For every fixed $x^n$ whose empirical distribution approaches $p_X^*$, if
\begin{align}
R_0' > I(V;Y'|X),
\end{align}
then there exist constants $\delta_1,\delta_2>0$ such that
\begin{align}
\mathbb P_{\mathsf C_0}
\left(
\left\|
\widecheck u_{\mathcal C_0}(\cdot|x^n)
-
u^n(\cdot|x^n)
\right\|_{\rm TV}
>
e^{-n\delta_1}
\right)
\le
e^{-e^{n\delta_2}}.
\label{eq:conditional_resolvability_tail}
\end{align}
Consequently,
\begin{align}
\left\|
\widecheck u_{\mathsf C_0}(\cdot|x^n)
-
u^n(\cdot|x^n)
\right\|_{\rm TV}
\to 0
\end{align}
almost surely with respect to the random Trojan codebook ensemble.
\end{proposition}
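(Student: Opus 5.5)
We prove the tail bound \eqref{eq:conditional_resolvability_tail} with the standard soft-covering machinery of Cuff~\cite{cuff2016soft}, applied conditionally on the fixed sequence $x^n$. With $x^n$ fixed, the codewords $V^n(j)$ are i.i.d.\ $\prod_i p_V$, and each one drives the non-identical memoryless channel $\prod_i w(\cdot|h(x_i,\cdot))$. Averaging over $V^n(j)$ gives exactly $u^n(\cdot|x^n)$, so $\mathbb E_{\mathsf C_0}\widecheck u_{\mathsf C_0}(\cdot|x^n)=u^n(\cdot|x^n)$. The proof has three steps: bound the expected TV distance, show concentration, and pass to almost-sure convergence.

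\emph{Step 1 (expected TV).} Write $\imath(v^n;y'^n)=\sum_i\log\frac{w(y'_i|h(x_i,v_i))}{u(y'_i|x_i)}$ for the conditional information density. Its mean under the joint law equals $n\,I(V;Y'|X)$ evaluated at the type of $x^n$, up to $o(n)$.

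Split $\widecheck u_{\mathsf C_0}$ into two parts: the contributions of pairs $(v^n(j),y'^n)$ with $\imath\le n(I(V;Y'|X)+\gamma)$ (``typical'') and the rest (``atypical''). Two bounds follow.
\begin{itemize}
\item The atypical part has expected mass at most $\mathbb P(\imath>n(I+\gamma))$. Because the alphabets are finite, the summands are bounded and independent across $i$, so a Chernoff bound gives $e^{-nc(\gamma)}$.
\item The typical part is handled by Jensen's inequality, followed by separating the $j=j'$ term from the $j\neq j'$ terms in the second moment. This yields a deviation of order $\sqrt{e^{n(I+\gamma)}/e^{nR_0'}}$.
\end{itemize}
If we choose $\gamma<R_0'-I(V;Y'|X)$, both terms decay exponentially. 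Hence $\mathbb E\|\widecheck u_{\mathsf C_0}-u^n\|_{\rm TV}\le e^{-n\delta_0}$. Cuff's Rényi-order analysis, $\alpha\in(1,2]$, gives the explicit exponent $\max_\alpha \frac{\alpha-1}{2\alpha}\big(R_0'-\text{(conditional R\'enyi information of order }\alpha)\big)$; it is positive because the conditional Rényi information tends to $I(V;Y'|X)$ as $\alpha\downarrow1$.

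The typical-sequence continuity needed here is uniform. The empirical distribution of $x^n$ converges to $p_X^*$, and the finite-alphabet quantities involved are continuous. So for large $n$ the rate gap $R_0'-I(V;Y'|X)$ measured at the type of $x^n$ stays bounded below by half its limiting value.

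\emph{Step 2 (concentration).} The map $\mathsf C_0\mapsto\|\widecheck u_{\mathsf C_0}(\cdot|x^n)-u^n(\cdot|x^n)\|_{\rm TV}$ is a function of $e^{nR_0'}$ independent codewords. Replacing a single codeword changes the mixture by at most $2e^{-nR_0'}$ in TV. McDiarmid's inequality therefore gives
\begin{align}
\mathbb P\big(\|\cdot\|_{\rm TV}>\mathbb E\|\cdot\|_{\rm TV}+t\big)\le \exp\!\big(-2t^2 e^{nR_0'}/4\big).
\end{align}
Set $t=e^{-n\delta_0}$ and $\delta_1<\delta_0$. Then $\|\cdot\|_{\rm TV}>e^{-n\delta_1}$ requires a deviation larger than $e^{-n\delta_1}-e^{-n\delta_0}\ge\tfrac12e^{-n\delta_1}$, which has probability at most $\exp(-e^{n(R_0'-2\delta_1)}/8)$. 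Choosing $\delta_1<R_0'/2$ small enough then gives \eqref{eq:conditional_resolvability_tail} with any $\delta_2<R_0'-2\delta_1$.

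\emph{Step 3 (a.s.\ convergence).} The tail probabilities in \eqref{eq:conditional_resolvability_tail} are summable in $n$. By Borel--Cantelli, the TV distance exceeds $e^{-n\delta_1}$ only finitely often almost surely, so it converges to $0$ almost surely.

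The main obstacle is Step 1. It needs an exponential rate, not merely vanishing TV, and that rate must hold uniformly over the non-identically distributed channels indexed by the type of $x^n$. We intend to handle this through the Rényi/Chernoff bound on the information density, using boundedness of $\log$-likelihood ratios. Boundedness holds because the alphabets are finite; wherever $u(y'|x)>0$ the ratio $w/u$ is bounded, and $w(y'|h(x,v))>0$ implies $u(y'|x)>0$.
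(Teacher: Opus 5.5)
Your proof is correct, but it takes a different route from the paper.

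\emph{The paper's route.} The paper never bounds $\mathbb E\|\widecheck u_{\mathsf C_0}(\cdot|x^n)-u^n(\cdot|x^n)\|_{\rm TV}$. It works directly on the realized codebook. It splits the positive-part form of the TV distance into two pieces, using joint typicality with respect to $p_X^*p_Vw$:
\begin{itemize}
\item an atypical mass $P_{\rm atyp}$, bounded by Hoeffding, since it is an average of $e^{nR_0'}$ independent $[0,1]$-valued variables with exponentially small mean;
\item a typical excess $\sum_{y'^n}u^n(y'^n|x^n)[P_{\rm typ}(y'^n)-1]^+$. For each fixed $y'^n$, $P_{\rm typ}(y'^n)$ is an average of i.i.d.\ variables in $[0,e^{n(I(V;Y'|X)+\epsilon)}]$ with mean at most one. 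A multiplicative Chernoff bound (the Frey et al.\ lemma) controls it, and a union bound over the $|\mathcal Y|^n$ output sequences is absorbed by the doubly exponential tails.
\end{itemize}

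\emph{Your route.} You first prove that the expected TV is exponentially small, using an information-density split and the second-moment estimate $\sqrt{\tau/N}$. You then get the doubly exponential tail in one step from McDiarmid, because replacing one codeword moves the mixture by $O(e^{-nR_0'})$ in TV.

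\emph{What each buys.} Your concentration step is independent of the output alphabet, since it needs no union over $y'^n$. It therefore carries over directly to countable or continuous outputs, such as the Gaussian specialization the paper invokes only informally. It also gives a double exponent $\delta_2<R_0'-2\delta_1$ that does not involve $I(V;Y'|X)$. The price is that $\delta_1$ is capped by the expected-TV exponent $\delta_0$, which you must establish separately. The paper's route uses only elementary Chernoff/Hoeffding bounds for bounded i.i.d.\ variables and ties the double exponent explicitly to the gap $R_0'-I(V;Y'|X)$. However, it relies essentially on finiteness of $\mathcal Y$.

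\emph{Two cosmetic points.}
\begin{itemize}
\item In Step 2 the deviation you actually feed into McDiarmid is $t=\tfrac12 e^{-n\delta_1}$, not $t=e^{-n\delta_0}$.
\item The quoted R\'enyi-order exponent is unnecessary, and its exact form is not something you need to rely on. Your typical/atypical estimate already gives $\delta_0>0$.
\end{itemize}
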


\begin{proof}
Using the positive-part representation of total variation,
\begin{align}
&\left\|
\widecheck u_{\mathsf C_0}(
\cdot|x^n)
-
u^n(\cdot|x^n)
\right\|_{\rm TV}
\twocolbreak
\includeonetwocol{}{\quad}
=
\sum_{y'^n\in\mathcal Y^n}
\left[
\widecheck u_{\mathsf C_0}(y'^n|x^n)
-
u^n(y'^n|x^n)
\right]^+ .
\end{align}
Substituting~\eqref{eq:check_u_randomized} gives
\begin{align}
&\left\|
\widecheck u_{\mathsf C_0}(
\cdot|x^n)
-
u^n(\cdot|x^n)
\right\|_{\rm TV}
\twocolbreak
\includeonetwocol{}{\quad}
=
\sum_{y'^n\in\mathcal Y^n}
\Bigg[
\frac{1}{e^{nR_0'}}
\sum_{j=1}^{e^{nR_0'}}
p_{Y'^n|X^n,V^n}(y'^n|x^n,V^n(j))
\twocolbreak
\includeonetwocol{}{\qquad} -
u^n(y'^n|x^n)
\Bigg]^+ .
\end{align}
Let $\mathcal T_\epsilon^{(n)}$ denote the jointly typical set with respect to
$p_X^*(x)p_V(v)w(y'|h(x,v))$. Splitting the summation into typical and
atypical parts yields
\begin{align}
&\left\|
\widecheck u_{\mathsf C_0}(\cdot|x^n)
-
u^n(\cdot|x^n)
\right\|_{\rm TV}
\twocolbreak
\includeonetwocol{}{\quad}
\le
P_{\rm atyp}
+
\sum_{y'^n\in\mathcal Y^n}
u^n(y'^n|x^n)
\left[P_{\rm typ}(y'^n)-1\right]^+,
\label{eq:TV_typ_atyp_split}
\end{align}
where
\begin{align}
P_{\rm atyp}
&=
\frac{1}{e^{nR_0'}}
\sum_{j=1}^{e^{nR_0'}}
\sum_{y'^n\in\mathcal Y^n}
p_{Y'^n|X^n,V^n}(y'^n|x^n,V^n(j))
\twocolbreak
\includeonetwocol{}{\qquad\qquad\qquad\qquad
\times}
\, \mathbbm{1}_{\{(x^n,V^n(j),y'^n)\notin\mathcal T_\epsilon^{(n)}\}},
\end{align}
and
\begin{align}
P_{\rm typ}(y'^n)
&=
\frac{1}{e^{nR_0'}}
\sum_{j=1}^{e^{nR_0'}}
\frac{
p_{Y'^n|X^n,V^n}(y'^n|x^n,V^n(j))
}{
u^n(y'^n|x^n)
}
\twocolbreak
\includeonetwocol{}{\qquad\qquad\quad
\times}
\, \mathbbm{1}_{\{(x^n,V^n(j),y'^n)\in\mathcal T_\epsilon^{(n)}\}} .
\end{align}
The inequality in~\eqref{eq:TV_typ_atyp_split} follows from
$[a+b-c]^+\le a+[b-c]^+$ for $a\ge 0$.

For any $\delta_1>0$, the union bound gives
\begin{align}
\twocolAlignMarker
\mathbb P_{\mathsf C_0}
\left(
\left\|
\widecheck u_{\mathsf C_0}(\cdot|x^n)
-
u^n(\cdot|x^n)
\right\|_{\rm TV}
>
e^{-n\delta_1}
\right)
\twocolbreak
\includeonetwocol{}{\quad}
\le
\mathbb P_{\mathsf C_0}
\left(
P_{\rm atyp}>\frac12 e^{-n\delta_1}
\right)
\twocolbreak
\includeonetwocol{}{\qquad}
+
\sum_{y'^n\in\mathcal Y^n}
\mathbb P_{\mathsf C_0}
\left(
P_{\rm typ}(y'^n)>1+\frac12 e^{-n\delta_1}
\right).
\label{eq:union_resolvability}
\end{align}

We first bound the atypical term. Write
\begin{align}
P_{\rm atyp}
=
\frac{1}{e^{nR_0'}}
\sum_{j=1}^{e^{nR_0'}} P_{{\rm atyp},j},
\end{align}
where
\begin{align}
P_{{\rm atyp},j}
&=
\sum_{y'^n\in\mathcal Y^n}
p_{Y'^n|X^n,V^n}(y'^n|x^n,V^n(j))
\twocolbreak
\includeonetwocol{}{\qquad\qquad\qquad
\times}
\, \mathbbm{1}_{\{(x^n,V^n(j),y'^n)\notin\mathcal T_\epsilon^{(n)}\}} .
\end{align}
The random variables $P_{{\rm atyp},j}$ are independent and take values in $[0,1]$. By
standard typicality, for some $\delta_1'>0$,
\begin{align}
\mathbb E[P_{\rm atyp}]
\le
e^{-n\delta_1'} .
\end{align}
Applying Hoeffding's inequality~\cite{hoeffding1963probability}, for
$\delta_1<\delta_1'$ and sufficiently large $n$,
\begin{align}
\mathbb P_{\mathsf C_0}
\left(
P_{\rm atyp}>\frac12 e^{-n\delta_1}
\right)
\le
\exp\left(
-\frac12 e^{n(R_0'-2\delta_1)}
\right).
\label{eq:Patyp_bound}
\end{align}

For the typical term, the standard resolvability concentration bound for the
conditioned channel gives~\cite[Lemma~1]{frey2017mac2}
\begin{align}
&\mathbb P_{\mathsf C_0}
\left(
P_{\rm typ}(y'^n)>1+\frac12 e^{-n\delta_1}
\right)
\twocolbreak
\includeonetwocol{}{\quad}
\le
\exp\left(
-\frac{1}{12}
e^{n(R_0'-I(V;Y'|X)-2\delta_1-\epsilon)}
\right).
\label{eq:Ptyp_bound}
\end{align}
Combining~\eqref{eq:union_resolvability}, \eqref{eq:Patyp_bound}, and
\eqref{eq:Ptyp_bound}, we obtain
\begin{align}
&\mathbb P_{\mathsf C_0}
\left(
\left\|
\widecheck u_{\mathsf C_0}(\cdot|x^n)
-
u^n(\cdot|x^n)
\right\|_{\rm TV}
>
e^{-n\delta_1}
\right)
\twocolbreak
\includeonetwocol{}{\quad}
\le
\exp\left(
-\frac12 e^{n(R_0'-2\delta_1)}
\right)
\twocolbreak
\includeonetwocol{}{\qquad}
+
|\mathcal Y|^n
\exp\left(
-\frac{1}{12}
e^{n(R_0'-I(V;Y'|X)-2\delta_1-\epsilon)}
\right).
\label{eq:double_exp_resolvability}
\end{align}
If $R_0'>I(V;Y'|X)$, then $\delta_1$ and $\epsilon$ can be chosen small enough
so that the right hand side of~\eqref{eq:double_exp_resolvability} is
summable in $n$. The Borel--Cantelli lemma then gives almost sure convergence
of the total-variation distance to zero.
\end{proof}

Since the failure probability in
\eqref{eq:conditional_resolvability_tail} is double exponential in $n$, a
union bound over the exponentially many legitimate codewords shows that the
same exponential total-variation bound holds simultaneously for all
legitimate codewords of typical type with probability approaching one over
the Trojan codebook ensemble. 

\subsection{Stability of the Mismatched Rate Functional}

We next connect the total-variation approximation in
Proposition~\ref{prop_conditional_resolvability} to the single-letter GMI and
LM rates.

\begin{lemma}[Stability of mismatched rates]
\label{lemma_mismatch_stability_TV}
Assume finite alphabets and a strictly positive decoding metric
$w(y'|x)$. If
\begin{align}
\max_{x^n\in\mathcal T_n(p_X^*)}
\left\|
\widecheck u_{\mathcal C_0}(\cdot|x^n)
-
u^n(\cdot|x^n)
\right\|_{\rm TV}
\to 0,
\end{align}
then, for every finite upper bound $s_{\max}>0$ on the LM/GMI optimization
parameter $s$ and every finite truncation level $A>0$ on the auxiliary
function $a(\cdot)$, the normalized truncated $n$-letter GMI and LM
functionals converge to the corresponding single-letter truncated GMI and LM
rates under $u(y'|x)$. Letting $s_{\max}$ and $A$ grow gives the ordinary GMI
and LM achievable rates up to an arbitrary backoff.
\end{lemma}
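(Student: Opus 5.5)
The approach is to compare, for each fixed pair $(s,a)$ in the truncated parameter set, the $n$-letter integrand under the fixed-codebook channel $\widecheck u_{\mathcal C_0}$ with the same integrand under the product channel $u^n$. The $n$-letter functional evaluated under $u^n$ splits exactly into $n$ copies of the single-letter one. The gap between the two channels is then controlled by a bounded-function total-variation estimate.

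\emph{Reduction.} For the LM case I would restrict the $n$-letter auxiliary function in \eqref{eq_LM_memory} to additive functions $a^{(n)}(x^n)=\sum_{i=1}^n a(x_i)$ with $|a|\le A$. Since the $n$-letter LM rate is a supremum, this restriction only lowers it, so it still yields an achievable rate. The GMI is the special case $a\equiv 0$. Fix $s\in[0,s_{\max}]$ and define
\begin{align}
f_{s,a}(x^n,y'^n)=\log\frac{\prod_{i}w(y'_i|x_i)^s e^{a(x_i)}}{\prod_i \sum_{\bar x}p_X^*(\bar x)w(y'_i|\bar x)^s e^{a(\bar x)}} .
\end{align}
The denominator factorizes because both the input law and the metric are products. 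Hence under $p_X^{*n}u^n$ we have $\frac1n\mathbb E[f_{s,a}]=F_{s,a}$, the single-letter truncated GMI/LM objective. Let $w_{\min}>0$ be the minimum of $w$ over its (finite) support, and let $L=\log(1/w_{\min})$. Then each letter of $f_{s,a}$ is bounded in absolute value by $2(s_{\max}L+A)$, so
\begin{align}
\sup_{x^n,y'^n,\,s\le s_{\max},\,|a|\le A}\tfrac1n|f_{s,a}|\le 2(s_{\max}L+A)\eqqcolon B .
\end{align}
This uniform bound is where strict positivity of the metric enters.

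\emph{Comparison.} Define
\begin{align}
F^{(n)}_{s,a}=\tfrac1n\sum_{x^n}p_X^{*n}(x^n)\sum_{y'^n}\widecheck u_{\mathcal C_0}(y'^n|x^n)f_{s,a}(x^n,y'^n).
\end{align}
I would split the outer sum over $x^n\in\mathcal T_n(p_X^*)$ and its complement.
\begin{itemize}
\item On typical $x^n$, the inequality $|\mathbb E_P g-\mathbb E_Q g|\le 2\|P-Q\|_{\rm TV}\sup|g|$ bounds the per-codeword discrepancy by $2B\max_{x^n\in\mathcal T_n}\|\widecheck u_{\mathcal C_0}(\cdot|x^n)-u^n(\cdot|x^n)\|_{\rm TV}$, which tends to $0$ by hypothesis.
\item On atypical $x^n$, both channels give integrands bounded by $B$. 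Their contribution is at most $2B\,p_X^{*n}(\mathcal T_n^c)\to 0$ by the law of large numbers.
\end{itemize}
The resulting bound on $|F^{(n)}_{s,a}-F_{s,a}|$ does not depend on $(s,a)$ in the truncated set. Since $|\sup_\theta F^{(n)}_\theta-\sup_\theta F_\theta|\le\sup_\theta|F^{(n)}_\theta-F_\theta|$, the truncated normalized $n$-letter functionals converge to the truncated single-letter GMI and LM rates.

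\emph{Removing the truncation.} The truncated single-letter suprema are nondecreasing in $s_{\max}$ and $A$, and they converge to the unrestricted suprema in \eqref{eq_GMI} and \eqref{eq_LM}. For any backoff $\epsilon$, I would therefore pick $s_{\max}$ and $A$ whose truncated value is within $\epsilon/2$ of $I_{\rm M}(X;Y')$. Then I would take $n$ large enough that the $n$-letter truncated value is within another $\epsilon/2$. Combined with Lemma~\ref{lemma_Marton} and Proposition~\ref{prop_conditional_resolvability}, which provides the uniform TV hypothesis over typical codewords, this gives the rate in Theorem~\ref{thm-singleletter-resolvability}.

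The main obstacle I expect is the uniform boundedness step. The bound $B$ grows with $s_{\max}$ and $A$, so the order of limits matters: $n\to\infty$ must be taken first, at fixed truncation. A further point to check is whether $\widecheck u_{\mathcal C_0}$ puts mass on outputs $y'$ with $w(y'|x)=0$ for the decoded $x$; there $f_{s,a}$ would be $-\infty$. I would handle this through the support assumption in the statement, and if needed by treating such outputs as contributing zero mass under both channels.
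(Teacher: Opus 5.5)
Your proposal is correct and follows essentially the same route as the paper's proof. Uniform boundedness of the normalized log-metric (finite alphabets, strictly positive $w$, truncation of $s$ and $a$) turns the vanishing total-variation gap into a vanishing gap in the functional; the product channel makes the $n$-letter functional factorize into the single-letter one; uniformity over the truncated parameter set carries the convergence through the supremum; and the truncation is removed only after $n\to\infty$. Your version also makes explicit several points the paper leaves implicit: the restriction to additive $a^{(n)}(x^n)=\sum_i a(x_i)$, without which the claimed factorization fails; the explicit constant $B$; and the separate bound on atypical inputs, which relies on reading $\mathcal T_n(p_X^*)$ as a typical set with $p_X^{*n}(\mathcal T_n^c)\to 0$ rather than as an exact type class.
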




\begin{proof}
For fixed $s_{\max}<\infty$ and $A<\infty$, the logarithmic decoding metric
appearing in the GMI and LM functionals is uniformly bounded in magnitude by
a constant times $n$, because the alphabets are finite and the decoding metric
is strictly positive. Hence, after normalization by $1/n$, changing the
conditional distribution by $\delta_n$ in total variation changes the
corresponding normalized functional by at most $O(\delta_n)$.

For the product channel $\prod_i u(y'_i|x_i)$ and input sequences whose
empirical distribution approaches $p_X^*$, the normalized $n$-letter
functional factorizes into the corresponding single-letter functional, up to
a vanishing type approximation term. Taking the supremum over the compact
truncated parameter set preserves convergence. Finally, the full GMI and LM
rates are obtained by choosing the truncation constants large enough to
approximate the corresponding suprema within an arbitrary positive backoff.
\end{proof}

\subsection{Completion of the Proof of Theorem~\ref{thm-singleletter-resolvability}}

By Proposition~\ref{prop_conditional_resolvability}, if
\begin{align}
R_0+R' > I(V;Y'|X),
\end{align}
then there exists a sequence of Trojan codebooks  for which the fixed-codebook induced channel $\widecheck u_{\mathcal C_0}(y'^n|x^n)$ is asymptotically
close in total variation to the product channel $\prod_i u(y'_i|x_i)$ for
all legitimate codewords of typical type.

By Lemma~\ref{lemma_mismatch_stability_TV}, the corresponding GMI and LM
achievable rates under the fixed-codebook induced channel are arbitrarily
close to the single-letter GMI and LM rates evaluated under $u(y'|x)$. Hence,
for every $\epsilon>0$, the legitimate receiver supports every rate
\begin{align}
R<I_{\rm M}(X;Y')-\epsilon.
\end{align}

The rogue receiver decodes the rogue message through the channel induced by
the random-coding distribution. For the independent-decoding construction,
the rogue receiver supports every rate
\begin{align}
R_0<I(V;Y_0).
\end{align}
This proves Theorem~\ref{thm-singleletter-resolvability}.

\section{Less-Noisy Rogue Receiver}
 \label{Appendix_less_noisy_rogue}

We prove Proposition~\ref{Proposition_superposition}. Under the randomization condition
of Theorem~\ref{thm-singleletter-resolvability}, the legitimate receiver may be analyzed using
the  
ensemble-averaged induced channel $u(y'|x)$ and the mismatched decoding metric
corresponding to the original channel $w(y'|x)$. Hence the legitimate receiver supports every rate
below $I_{\rm M}(X;Y')$, with the usual arbitrarily small backoff.

It remains to characterize the rogue receiver. Since the rogue receiver is
less noisy than the legitimate receiver, it decodes the legitimate and rogue
messages jointly. Assume without loss of generality that $(M,M_0)=(1,1)$ was
sent. The relevant error events are
\begin{align}
E_1
&=
\{(X^n(1),V^n(1),Y_0^n)\notin \mathcal T_\epsilon^{(n)}\},
\nonumber\\
E_2
&=
\{(X^n(1),V^n(m_0),Y_0^n)\in \mathcal T_\epsilon^{(n)}
\text{ for some }m_0\ne 1\},
\nonumber\\
E_3
&=
\{(X^n(m),V^n(m_0),Y_0^n)\in \mathcal T_\epsilon^{(n)}
\twocolbreak
\includeonetwocol{}{\quad}
\text{ for some }m\ne 1,\ m_0\ne 1\}.
\end{align}
By the law of large numbers, $\mathbb P(E_1)\to 0$. Standard packing
bounds give $\mathbb P(E_2)\to 0$ if
\begin{align}
R_0<I(V;Y_0|X),
\end{align}
and $\mathbb P(E_3)\to 0$ if
\begin{align}
R+R_0<I(X,V;Y_0).
\end{align}
Thus the joint decoder supports the region
\begin{align}
R &< I_{\rm M}(X;Y'),\\
R_0 &< I(V;Y_0|X),\\
R+R_0 &< I(X,V;Y_0),
\label{eq:less_noisy_three_constraint_region}
\end{align}
where the first inequality is the legitimate-rate condition supplied by
Theorem~\ref{thm-singleletter-resolvability}.

We next show that the third inequality is implied by the first two. For the
GMI and LM mismatched achievable rates,
\begin{align}
I_{\rm M}(X;Y')\le I(X;Y'),
\end{align}
because the mismatch-rate expressions can be written as the matched mutual
information under $u(y'|x)$ minus a nonnegative divergence term. Therefore,
using the first two constraints,
\begin{align}
R+R_0
&<
I_{\rm M}(X;Y')+I(V;Y_0|X)
\nonumber\\
&\le
I(X;Y')+I(V;Y_0|X).
\label{eq:less_noisy_step_a}
\end{align}

It remains to compare $I(X;Y')$ and $I(X;Y_0)$. Under the joint distribution
\begin{align}
p_X^*(x)p_V(v)\mathbbm{1}_{\{x'=h(x,v)\}}\widehat w(y',y_0|x'),
\end{align}
the marginal law of $(X,X',Y',Y_0)$ factors as
\begin{align}
p_X^*(x)p(x'|x)\widehat w(y',y_0|x'),
\end{align}
where
\begin{align}
p(x'|x)=\sum_v p_V(v)\mathbbm{1}_{\{x'=h(x,v)\}}.
\end{align}
Hence
\begin{align}
X\to X'\to (Y',Y_0)
\end{align}
forms a Markov chain. Thus $X$ is an admissible auxiliary variable in the
less-noisy condition. Since $Y_0$ is less noisy than $Y'$, we have
\begin{align}
I(X;Y')\le I(X;Y_0).
\label{eq:less_noisy_step_b}
\end{align}
Combining~\eqref{eq:less_noisy_step_a} and~\eqref{eq:less_noisy_step_b}
gives
\begin{align}
R+R_0
&<
I(X;Y_0)+I(V;Y_0|X)
\nonumber\\
&=
I(X,V;Y_0),
\end{align}
so the sum-rate constraint in
\eqref{eq:less_noisy_three_constraint_region} is redundant. This proves the proposition.

\section{Genie-Aided Outer Bound}
\label{Appendix_genie_aided_outer_bound}

Let $P_{\rm e}^{(n)}$ and $P_{{\rm e},0}^{(n)}$ denote the legitimate and rogue decoding
error probabilities, respectively, both of which tend to zero as $n \to \infty$. By Fano's inequality, there exist sequences
$\epsilon_{n}\to0$ and $\epsilon_{0,n}\to0$ such that
\begin{align*}
H(M|Y'^n)&\le n\epsilon_n,\\
H(M_0|Y_0^n)&\le n\epsilon_{0,n}.
\end{align*}

For the legitimate message rate,
\begin{align}\nonumber
nR
&=H(M)\\\nonumber
&\le I(M;Y'^n)+n\epsilon_n\\\nonumber
&\le I(X^n;Y'^n)+n\epsilon_n\\\nonumber
&\le I(X'^n;Y'^n)+n\epsilon_n\\
&\le \sum_{i=1}^n I(X'_i;Y'_i)+n\epsilon_n.
\end{align}
The third inequality follows from data processing, and the last inequality follows from the memoryless channel law $w(y'|x')$.

For the rogue message rate, we give the rogue receiver the legitimate codeword
$X^n$ as genie side information. Then
\begin{align}\nonumber
nR_0
&=H(M_0)\\\nonumber
&\le I(M_0;Y_0^n,X^n)+n\epsilon_{0,n}\\\nonumber
&= I(M_0;Y_0^n|X^n)+n\epsilon_{0,n}\\\nonumber
&\le I(V^n;Y_0^n|X^n)+n\epsilon_{0,n}\\
&\le \sum_{i=1}^n I(V_i;Y_{0i}|X_i)+n\epsilon_{0,n}.
\end{align}
The equality uses the independence of $M_0$ and $X^n$, and the inequalities
use the data processing relation from $M_0$ to $V^n$ and then to $Y_0^n$
given $X^n$, followed by the memoryless channel law.

For the sum-rate bound, reveal neither message to the receivers and use the
pair of observations:
\begin{align}\nonumber
n(R+R_0)
&=H(M,M_0)\\\nonumber
&\le I(M,M_0;Y'^n,Y_0^n)+n\epsilon'_n\\\nonumber
&\le I(X^n,V^n;Y'^n,Y_0^n)+n\epsilon'_n\\
&\le \sum_{i=1}^n I(X_i,V_i;Y'_i,Y_{0i})+n\epsilon'_n,
\end{align}
where $\epsilon'_n\to0$.
Therefore, we have shown that
\begin{align*}\nonumber
R
&\le \frac{1}{n}\sum_{i=1}^n I(X'_i;Y'_i)+\epsilon_n,\\\nonumber
R_0&\le \frac{1}{n}\sum_{i=1}^n I(V_i;Y_{0i}|X_i)+\epsilon_{0,n},\\
R+R_0&\le \frac{1}{n}\sum_{i=1}^n I(X_i,V_i;Y'_i,Y_{0i})+\epsilon'_n,
\end{align*}

Let $Q$ be uniform on $\{1,\ldots,n\}$ and independent of all other random
variables, and define
\begin{align*}
\twocolAlignMarker
X\coloneqq X_Q,\qquad V\coloneqq V_Q,\qquad X'=X'_Q,\twocolbreak
\includeonetwocol{\qquad}{} Y'\coloneqq Y'_Q,\qquad Y_0\coloneqq Y_{0Q}.
\end{align*}

Then
\begin{align*}
R &\le I(X';Y'|Q)+\epsilon_n,\\
R_0 &\le I(V;Y_0|X,Q)+\epsilon_{0,n},\\
R+R_0 &\le I(X,V;Y',Y_0|Q)+\epsilon'_n,
\end{align*}
Since $\epsilon_n, \epsilon_{0,n}, \epsilon'_n$ tend to zero as $n \to \infty$.
Therefore, every achievable rate pair satisfies
\begin{align*}\nonumber
R &\le I(X';Y'|Q),\\\nonumber
R_0 &\le I(V;Y_0|X,Q),\\
R+R_0 &\le I(X,V;Y',Y_0|Q),
\end{align*}
for some distribution of the form
$p(q)p(x,v|q)\mathbbm{1}_{\{x'=h(x,v)\}}\widehat w(y',y_0|x')$,
with $Q$ taking values in some finite set (independent of $n$).

\section{Equivalence of GMI and LM Rates for the BSC}
\label{Appendix_GMI_BSC}
We show that for a BSC $u(y'|x)$, with decoding metric corresponding to another BSC $w(y'|x)$, the GMI rate is equal to the LM rate as in \eqref{LM_binary1}. Using \eqref{eq_GMI} we have:
\begin{align}
\twocolAlignMarker
I_{\rm GMI}^{\rm BSC}(X;Y')
\twocolbreak
\includeonetwocol{&}{}
=\max_{s\geq 0}\bigg[(1-\rho) u(0|0)\log\bigg(\frac{w(0|0)^s}{(1-\rho)w(0|0)^s+\rho w(0|1)^s}\bigg)
\nonumber\\
&\quad
+(1-\rho) u(1|0)\log\bigg(\frac{w(1|0)^s}{(1-\rho) w(1|0)^s+\rho w(1|1)^s }\bigg)\nonumber\\
&\quad
+\rho u(0|1)\log\bigg(\frac{w(0|1)^s}{(1-\rho) w(0|0)^s+\rho w(0|1)^s}\bigg)
\nonumber\\
&\quad+\rho u(1|1)\log\bigg(\frac{w(1|1)^s}{(1-\rho) w(1|0)^s+\rho w(1|1)^s}\bigg)\bigg]\nonumber\\
&=\max_{s\geq 0}\bigg[(1-\rho) u(0|0)\log\bigg(\frac{w(0|0)^s}{(1-\rho)w(0|0)^s+\rho w(0|1)^s}\bigg)
\nonumber\\
&\quad+(1-\rho) u(0|1)\log\bigg(\frac{w(0|1)^s}{(1-\rho) w(0|1)^s+\rho w(0|0)^s }\bigg)\nonumber\\
&\quad +\rho u(0|1)\log\bigg(\frac{w(0|1)^s}{(1-\rho) w(0|0)^s+\rho w(0|1)^s}\bigg)
\nonumber\\
&\quad+\rho u(0|0)\log\bigg(\frac{w(0|0)^s}{(1-\rho) w(0|1)^s+\rho w(0|0)^s}\bigg)\bigg].
\end{align}
By differentiating with respect to $s$ and setting the derivative equal to zero, we obtain 
${s=s_1=\frac{\log\big(\frac{u(0|1)}{u(0|0)}\big)}{\log\big(\frac{w(0|1)}{w(0|0)}\big)}}$.
If $s_1>0$, which occurs if $\textrm{sign}\big(\log\big(\frac{w(0|0)}{w(0|1)}\big)\big)=\textrm{sign}\big(\log\big(\frac{u(0|0)}{u(0|1)}\big)\big)$, or equivalently $\textrm{det}(\mathbf{W}_{Y'|X'})\textrm{det}(\mathbf{U}_{Y'|X})\geq 0$,
then $s^*=s_1$, and the mismatched rate equals the matched rate. Otherwise, the optimum s is $s^*=0$, and the mismatched rate equals zero. This result is consistent with the LM rate in \eqref{LM_binary1}.

\section{BSC Legitimate Error Exponents}
\label{Appendix_BSC_error_exponent}

For the legitimate receiver, 
the i.i.d. random coding exponent simplifies to
\begin{align}
\Psi^{\rm iid}(\rho)
=
\sup_{s\ge 0}
-\log
&\Big[
\left(
\tfrac{(1-p)^s+p^s}{2}
\right)^\rho
\Big(
(1-\widetilde p)(1-p)^{-s\rho}
\twocolbreak
\includeonetwocol{}{\quad}+
\widetilde p p^{-s\rho}
\Big)
\Big].
\end{align}
When $\rho_0\le \frac12$, the effective BSC has the same likelihood ordering
as the original BSC used in the mismatched decoding metric, yielding a positive achievable mismatched rate. Differentiating the exponent expression
with respect to $s$ shows that the optimizing value is
\begin{align}
s^\star
=
\frac{1}{1+\rho}
\frac{\log\frac{1-\widetilde p}{\widetilde p}}
{\log\frac{1-p}{p}} .
\end{align}
Using this value gives
\begin{align}
\Psi^{\rm iid}(\rho)
=
\rho \log 2
-
(1+\rho)
\log
\left[
(1-\widetilde p)^{\frac{1}{1+\rho}}
+
\widetilde p^{\frac{1}{1+\rho}}
\right].
\end{align}

Similarly, the i.i.d. expurgated exponent becomes
\begin{align}\nonumber
\Psi_{\rm ex}^{\rm iid}(\rho)
=
\sup_{s\ge 0}
-\rho \log
&\bigg[
\frac12
+
\frac12
\Big(
(1-\widetilde p)
\Big(\frac{p}{1-p}\Big)^s
\twocolbreak
\includeonetwocol{}{\quad}
+
\widetilde p
\Big(\frac{1-p}{p}\Big)^s
\Big)^{\frac1\rho}
\bigg].
\end{align}
The optimizing value is
\begin{align}
s_{\rm ex}^\star
=
\frac12
\frac{\log\frac{1-\widetilde p}{\widetilde p}}
{\log\frac{1-p}{p}} .
\end{align}
Using $s_{\rm ex}^\star$ yields
\begin{align}
\Psi_{\rm ex}(\rho)
=
-\rho \log
\left[
\frac12
\left(
1+
\left(
2\sqrt{\widetilde p(1-\widetilde p)}
\right)^{\frac1\rho}
\right)
\right].
\end{align}

For the constant-composition ensemble, the symmetry of the BSC implies that
the optimizing auxiliary function $a(\cdot)$ is constant. So the same expressions are obtained.
\color{black}

\section{Gaussian Achievable Rates}\label{Appendix_rates_Gaussian}
\subsection{GMI Rate}
 \label{Appendix_GMI_Gaussian}
Using the extension of the GMI rate in \eqref{eq_GMI} to continuous-alphabet channels, we obtain
\begin{align}
I_{\rm GMI}^{\rm G}(X;Y')
&=\frac{1}{2}\log\Big(1+\frac{(1-\alpha)P}{\sigma^2+\alpha P}\Big)
\twocolbreak
\includeonetwocol{}{\quad}
-\inf_{s\geq 0}D(u'(y'|x)||g_s(y'|x)|p_{Y'}(y')),
\label{GMI_rate_G}
\end{align}
where $p_{Y'}(y')
=\frac{1}{\sqrt{2 \pi \sigma_{y'}^2}}e^{-\frac{y'^2}{2 \sigma_{y'}^2}}$, in which $\sigma_{y'}^2=P+\sigma^2$ and
\begin{align}
\label{u_prime_G}
&u'(x|y')
=\frac{1}{\sqrt{2 \pi \sigma_{u'}^2}}e^{-\frac{
(x-\beta_{u'} y')^2}{2\sigma_{u'}^2}},\\ \label{v_prime_G}
&g_s(x|y')
=\frac{1}{\sqrt{2 \pi \sigma_{g}^2(s)}} e^{-\frac{(x-\beta_{g}(s)y')^2}{2\sigma_{g}^2(s)}},
\end{align}
where $\sigma_{u'}^2=\frac{P(\alpha P+\sigma^2)}{P+\sigma^2}$, $\beta_{u'}=\frac{P\sqrt{1-\alpha}}{P+\sigma^2}$, $\sigma_{g}^2(s)=\frac{P\sigma^2}{sP+\sigma^2}$, and $\beta_{g}(s)=\frac{sP}{sP+\sigma^2}$.
Hence, using the KL-divergence of two Gaussian distributions, we obtain
\begin{align}
&D(u'(x|y')||g_s(x|y')|p_{Y'}(y'))
\twocolbreak
\includeonetwocol{}{\quad}
=\frac{1}{2}\bigg[\frac{\sigma_{u'}^2+(\beta_{u'}-\beta_{g}(s))^2\sigma_{y'}^2}{\sigma_{g}^2(s)}+\log\bigg(\frac{\sigma_{g}^2(s)}{\sigma_{u'}^2}\bigg)-1\bigg].
\end{align}
By minimizing the KL divergence over $s$ we obtain \eqref{GMI_rate_G_R1}. Let $\Delta(s)=D\big(u'(x|y')||g_s(x|y')|p_{Y'}(y')\big)$. To obtain the optimum value of $s$ which minimizes $\Delta(s)$, we take the derivative of $\Delta(s)$ and find its roots
\begin{align}\nonumber
s_{1,2}=\tfrac{-(3-4\sqrt{1-\alpha}+\frac{2\sigma^2}{P}) \pm \sqrt{(1+\frac{2\sigma^2}{P})^2+8(1-\sqrt{1-\alpha})(1+\frac{\sigma^2}{P})}}{\frac{4P}{\sigma^2}(1-\sqrt{1-\alpha})+2},
\end{align}
where $s_1$ and $s_2$ correspond to the roots with
the plus and minus signs, respectively. Since 
\begin{align}
s_1s_2=
-\frac{\frac{2P}{\sigma^2}\sqrt{1-\alpha}}
{1+\frac{2P}{\sigma^2}(1-\sqrt{1-\alpha})}<0,
\end{align}
it follows that 
$s_1 \geq 0$ and $s_2 \leq 0$. 
Moreover, a sign analysis of $\frac{d\Delta(s)}{ds}$ shows that
$\Delta(s)$ decreases for $0\le s<s_1$ and increases for $s>s_1$. Hence, $s_1$ is the unique minimizer of $\Delta(s)$ over the feasible set $s\ge0$. 
\subsection{LM Rate}
\label{Appendix_LM_Gaussian}
Using the extension of \eqref{eq_LM} to continuous channels, we have
\begin{align}
I_{\rm LM}^{\rm G}(X;Y')
&=\frac{1}{2}\log\Big(1+\frac{(1-\alpha)P}{\sigma^2+\alpha P}\Big)
\twocolbreak
\includeonetwocol{}{\quad}
-\inf_{s\geq 0,a(.)}D\big(u'(x|y')||g'_{s,a}(x|y')|p_{Y'}(y')\big),
\label{LM_rate_G}
\end{align}
where $p_{Y'}(y')
$ and $u'(x|y')
$ are the same as in the GMI rate, and $g'_{s,a}(x|y')
$ can be simplified as
\begin{equation}
\begin{aligned}
\label{v_double_prime_G}
&g'_{s,a}(x|y')
=\frac{e^{-\big(\frac{\sigma^2+sP}{2P\sigma^2}x^2-\frac{s}{\sigma^2}xy'\big)+a(x)}}{\int_{-\infty}^{\infty} e^{-\big(\frac{\sigma^2+sP}{2P\sigma^2}\bar{x}^2-\frac{s}{\sigma^2}\bar{x}y'\big)+a(\bar{x})}d \bar{x}},
\end{aligned}
\end{equation}
To maintain the Gaussian form for $g'_{s,a}(x|y')$, $a(x)$ should be a quadratic function of $x$, i.e., $a(x)=a_1 x^2 + a_2 x$.
We note that adding a fixed value to $a(x)$ will not alter $g'_{s,a}(x|y')$, and thus we avoid such additions. Using this function, we derive a lower bound on the LM rate. For this specific choice of $a(x)$, $g'_{s,a}(x|y')$ can be simplified as
\begin{align}
\label{u_prime_G2}
g'_{s,a}(x|y')
& =\frac{1}{\sqrt{2 \pi \sigma_{g'}^2(s)}}e^{-\frac{\left(x-\beta_{g',1}(s)y'-\beta_{g',2}(s)\right)^2}{2\sigma_{g'}^2(s)}},
\end{align}
where $\sigma_{g'}^2(s)=\frac{P\sigma^2}{sP+\sigma^2-2P\sigma^2 a_1}$, $\beta_{g',1}(s)=\frac{sP}{sP+\sigma^2-2P\sigma^2 a_1}$, and $\beta_{g',2}(s)=\frac{a_2 P \sigma^2}{sP+\sigma^2-2P\sigma^2 a_1}$. Hence, 
\begin{align}
&D(u'(x|y')||g'_{s,a}(x|y')|p_{Y'}(y'))
\twocolbreak
\includeonetwocol{}{\quad}
=\frac{1}{2}\bigg[\frac{\sigma_{u'}^2+(\beta_{u'}-\beta_{g',1})^2(s)\sigma_{y'}^2+\beta_{g',2}^2(s)}{\sigma_{g'}^2(s)}
\twocolbreak
\includeonetwocol{}{\qquad}
+\log\bigg(\frac{\sigma_{g'}^2(s)}{\sigma_{u'}^2}\bigg)-1\bigg].
\end{align}
According to the above equation, it can be easily seen that the above KL divergence is minimized for ${a_2=0}$. Hence, 
${\beta_{g',2}(s)=0}$. 
Further, by choosing ${s=\frac{\sigma^2 \sqrt{1-\alpha}}{\alpha P+\sigma^2}}$ and ${a_1=\frac{1-\sqrt{1-\alpha}}{2(1-\alpha)(\alpha P +\sigma^2)}}$, the above KL divergence becomes zero and we get the matched rate.

\bibliographystyle{IEEEtran}
\bibliography{Reftest}

\end{document}